\documentclass[aps,pra,twocolumn,superscriptaddress,longbibliography]{revtex4-2}

\usepackage{amssymb,amsmath,amsthm,bbm,bbold}
\usepackage{amsfonts}
\usepackage{graphicx}
\usepackage{mathtools}
\usepackage[breaklinks,colorlinks,allcolors=blue]{hyperref}
\usepackage[normalem]{ulem}
\usepackage[T1]{fontenc}
\usepackage{color}
\usepackage{xcolor}
\usepackage{tcolorbox}

\usepackage{pifont}
\usepackage{amssymb}
\usepackage{tikz}
\usepackage{booktabs}
\usepackage{tabularx}
\usepackage{array}
\newcolumntype{L}[1]{>{\raggedright\arraybackslash}p{#1}}
\newcolumntype{Y}{>{\raggedright\arraybackslash}X}

\usepackage{makecell}

\def\ZZ{\mathbbm{Z}}
\def\RR{\mathbbm{R}}
\def\CC{\mathbbm{C}}

\newcommand{\bd}[1]{\boldsymbol{#1}}

\usepackage{accents}

\newcommand{\Tr}{\mathrm{Tr}}

\newcommand{\bra}[1]{\mbox{$\langle #1 |$}}
\newcommand{\ket}[1]{\mbox{$| #1 \rangle$}}

\newtheorem{thm}{Theorem}
\newtheorem{lem}[thm]{Lemma}
\newtheorem{cor}[thm]{Corollary}

\begin{document}
\title{Operational impact of quantum resources in chemical dynamics}
\author{Julia Liebert}
\affiliation{Department of Chemistry, Princeton University, Princeton, NJ 08540, United States}
\author{Gregory D. Scholes}
\email{gscholes@princeton.edu}
\affiliation{Department of Chemistry, Princeton University, Princeton, NJ 08540, United States}

\begin{abstract}
Quantum coherence and other non-classical features are widely discussed in chemical dynamics, yet it remains difficult to quantify when such resources are operationally relevant for a given process and observable. While quantum resource theories provide a comprehensive framework for comparing free and resourceful settings, existing approaches typically rely on resource monotones or on performance bounds under free operations, and do not directly quantify the maximal influence a chosen resource can exert on a fixed chemical dynamics.
Here, we introduce task specific, process level quantifiers that upper bound the largest change a quantum resource can induce in a target figure of merit. Central is a resource impact functional $\mathcal{C}_M(\Lambda)$, defined by comparing a state with its paired resource-free counterpart under the same quantum channel $\Lambda$, which admits an operational interpretation in binary hypothesis testing.
We derive variation and time bounds that constrain how rapidly a resource can modify a target signal, providing resource-aware analogues of quantum speed limits. Moreover, we show that open system dynamics can be decomposed into free and resourceful components such that only the resourceful component contributes to $\mathcal{C}_M(\Lambda)$, thereby isolating the parts of a generator responsible for resource-induced changes in the observable.
We illustrate the framework exemplary for energy transfer in a donor-acceptor dimer in two analytically solvable regimes. 
Our results provide a general toolbox for diagnosing and benchmarking quantum resource effects in molecular processes.
\end{abstract}
\date{\today}   
\maketitle

\section{Introduction\label{sec:intro}}
 
The interface of quantum information science and chemistry offers a unique opportunity to advance quantum technologies while at the same time providing new insights into chemical mechanisms \cite{WFMKSYZB20, S23, WS24, ZOHYSW24, SOCMNHF25}. Quantum-information measures such as entanglement, quantum discord, and coherence have increasingly been used as descriptors of chemical structure and dynamics, including chemical bonding and bond formation \cite{RNW06,BTLR12,ZBDLA15, DTBBLA15,BT15, BVSBPL19, DMSSSZS21,DKZS22, DGS24, LDS24, faraday24, ZRLA25,DMS25,GDSKCMR25}, vibrational spectra \cite{CBLR24, WBLCS25}, and critical points on potential energy surfaces \cite{EMPD15}. 
Moreover, experimental and theoretical evidence for long-lived electronic and vibronic coherences in molecular systems, such as photosynthetic complexes and other molecular aggregates \cite{ECRAMCBF07,PH08, CWWCBS10,PHFCHBE10, IF10, Mukamel10, OS11, Engel11, CKPM12, HE12, RANFTZG14, CS15, DPCATM17, SAP17,SFCAGBC17,CCCDHKJM20,CB20, SFS21,JRS23,CCB23,BZA25}, as well as spin-correlated radical pairs \cite{HW21,Mani22,FLMH20}, has led to an ongoing debate over whether such coherences are functionally exploited or are merely by-products of the underlying dynamics. 
In all these settings, a central challenge is to distinguish the contribution of genuinely quantum effects from that of classical structure and environmental noise.

However, even when a positive correlation is observed between a quantum quantity, such as coherence or entanglement, and a chemical figure of merit, such as a yield or transport efficiency, this does not by itself establish that the quantum resource \textit{causes} the observed enhancement as a third structural feature may influence both the resource and the observable. 
Using coherence or entanglement merely as static descriptors is therefore not sufficient to determine the \textit{operational} role of quantum effects in chemical systems, in particular for dynamical processes. 
A meaningful and comprehensive assessment requires comparing the effect of a given quantum feature to an appropriate ``classical'' baseline in which that feature has been removed.

Quantum resource theories (QRTs) provide a natural framework for such comparisons~\cite{CG19,Gour2025QRT}. 
In a QRT, both quantum states and operations are partitioned into free and resourceful ones: free objects model states and controls available at negligible cost, while resourceful ones correspond to genuinely quantum features such as coherence, asymmetry, or entanglement. 
A central theme in this literature is to identify resources via improved performance in operational tasks compared to the best free strategy. 
In particular, several works have shown that resourceful states, measurements, and channels can outperform all free counterparts in suitable state, subchannel, or channel discrimination problems, and that standard resource monotones such as the (generalised) robustness or robustness of coherence quantitatively capture this advantage~\cite{NBCPJA16,TRBLA19,TR19,OB19,NBCPJA20,LBL20}. 
These results compare optimised resourceful and free strategies, typically via
\[
\max_{\rho\in\mathcal{R}}(\text{task performance})
\quad\text{vs.}\quad
\max_{\sigma\in\mathcal{F}}(\text{task performance})\,,
\]
where $\mathcal{R}$ and $\mathcal{F}$ denote the resourceful and free sets, respectively.
In contrast, to address the functional chemical question pursued here, namely, how sensitive a specific chemical figure of merit is to a quantum feature for a fixed process, it is important to keep each state paired with its own free counterpart. In QRTs, such a paired quantum-classical baseline is provided by resource-destroying maps $\mathcal{G}$, which map every state to a corresponding free state by ``removing'' the chosen resource \cite{LHL17,Gour17,CG19}. 
In this setting, we compare the task performance of $\rho$ directly with that of $\mathcal{G}(\rho)$ under the same dynamics, rather than optimising independently over all resourceful and all free inputs. 
As we will show, this paired comparison is crucial for quantifying the operational impact of quantum resources in concrete chemical processes.

Furthermore, recent works have begun to connect QRTs to concrete chemical scenarios, for example by linking non-stabilizerness to bond formation \cite{ST25}, or by deriving thermodynamic resource-theoretic bounds on photoisomerization yields \cite{YHL20, SPSHP25}. These approaches typically focus on characterising what free operations can achieve. 
In the photoisomerization case \cite{YHL20,SPSHP25}, for instance, one asks for the maximal yield compatible with thermodynamic constraints, largely independent of the detailed molecular dynamics.

In contrast, in molecular settings the dynamics is often effectively fixed, for example by a specific open system Hamiltonian and its associated master equation, and is usually not a free operation for the resource under consideration. 
If one is interested in the functional role of a given quantum resource for a particular chemical process, analysing restrictions under free operations alone therefore does not directly address the relevant question. 
Instead, the central issue becomes: \textit{for this fixed process, to what extent can a chosen quantum resource actually influence a chemically meaningful observable such as a yield or another figure of merit?} 
Answering this process-level question requires shifting the emphasis from optimising over free operations to analysing the impact of quantum resources for the actual dynamics. 
While not phrased in the language of resource theories, this type of resource-induced operational advantage already appears, for example, in enhanced coherent control via entangled reactant states \cite{PFB18,DTB25}.

To address this question when and where quantum effects are operational in chemical processes and can be harnessed to improve target yields, we develop in this paper a theoretical framework tailored to the setting with specific processes and readouts. 
Building on the paired quantum-classical baseline provided by a resource-destroying map $\mathcal{G}$, we introduce a so-called resource impact functional that provides a rigorous upper bound on the maximal modification of a chosen figure of merit that a given process can draw from the resource. 
This functional satisfies pre- and post-processing properties and admits a clear operational meaning in terms of a binary state discrimination task. 
To bound how quickly a resource can modify a yield, we further define an instantaneous advantage rate whose time integral yields explicit minimal time guarantees to achieve a prescribed resource-induced change in the target yield, thus providing resource-aware analogues of quantum speed limits. 
In addition, we show that general open system dynamics can be decomposed into free and resourceful components relative to a resource-destroying map, with only the latter contributing to a resource-induced influence on the observable. This decomposition allows us to isolate those parts of the evolution that are genuinely responsible for changing the readout.

Returning to the role of coherence in molecular systems raised at the beginning of this section, we illustrate the calculation of the resource impact functional and the instantaneous advantage rate using donor-acceptor excitation transfer in the single excitation manifold. This setting allows for an analytic characterization of how the resource impact functional identifies the time windows over which the dynamics and, thus, the chosen readout, is most sensitive to coherence.

\begin{table*}
\centering
\begin{tabular}{| c | c |}
\hline
\makecell{\textbf{Quantity}}& \makecell{\textbf{Properties}}\\
\hline
\makecell{Resource impact functional\\ $\mathcal{C}_M(\Lambda)$}
& \makecell{\textbullet\ Convexity, continuity, seminorm (Theorem \ref{thm:capacity-properties});\\ \textbullet\   Data-processing inequalities (Theorem \ref{thm:data-processing}); \\\textbullet\ Geometric and operational interpretations  (Corollary \ref{cor:capacity-geometry}, \ref{cor:CM-polar}, Sec.~\ref{sec:op-interp}); \\ \textbullet\ For linear $\mathcal{G}$: operator-norm form (Thm.~\ref{thm:capacity-G-linear}), free/resourceful decomposition \\ (Theorem \ref{thm:dissect-channel}, Corollary \ref{cor:gen-free-compatibility}). }\\
\hline
\makecell{Unpaired benchmark:\\ $\Pi_M(\Lambda)$}& \makecell{\textbullet\ Breaks the pairing $\rho\mapsto\mathcal{G}(\rho)$;\\ \textbullet\ Bounded from above by $\mathcal{C}_M(\Lambda)$ (Corollary \ref{cor:CM-PiM}) } \\
\hline
\makecell{Instantaneous rate \\$\Gamma_M(t)$}& \makecell{\textbullet\ Variation bound and vanishing criteria (Theorem \ref{thm:variation-bound}, Lemma \ref{lem:Gamma-M-zero}) \\ $\Downarrow$ \\ $\bullet$ Time \& feasibility bounds (Corollary \ref{cor:time-bound})} \\  
\hline
\makecell{Generalized functionals:\\ $\mathfrak{C}_M(\Lambda)$, $\mathfrak{T}_M(t)$} & \makecell{Generalization of $\mathcal{C}_M(\Lambda), \Gamma_M(t)$ to QRTs without resource-destroying maps} \\
\hline
\end{tabular}
\caption{Overview of the quantities introduced in this work and their key properties. \label{tab:summary}}
\end{table*}
 
In the longer term, we expect that this framework can be exported beyond the examples considered here to other quantum chemical phenomena where the functional role of quantum resources remains unclear.

We summarize the central quantities introduced in this paper, together with their key properties in Table \ref{tab:summary}. The remainder of the paper is structured as follows: 
Sec.~\ref{sec:defs} introduces the theoretical framework for resource theories with a resource-destroying map. In Sec.~\ref{subsec:capacity} we define the resource impact functional, establish core properties such as convexity and behaviour under pre- and post-processing, and discuss their implications for when quantum effects are operationally relevant. 
Sec.~\ref{subsec:dynamics} introduces a dynamical version of the framework and derives variation bounds on how fast the resource impact functional can change, and Sec.~\ref{subsec:time-bound} uses these bounds to obtain minimal time constraints on achieving a prescribed resource-induced modification of a target yield, as well as feasibility bounds on yield improvements for a fixed time budget. 
In Sec.~\ref{sec:dissect-dyn} we show how Lindblad dynamics can be dissected into free and resourceful components, with only the latter contributing to a resource-induced advantage. 
Since not all resource theories admit a resource-destroying map, Sec.~\ref{sec:RT-wo-RDM} explains how several of the concepts from Sec.~\ref{sec:defs} can be generalized to this more general setting. 
Finally, Sec.~\ref{sec:DAM} illustrates how the resource impact functional and instantaneous rate can be calculated analytically for the donor-acceptor model in the single excitation manifold. 

\section{Theoretical concepts\label{sec:defs}}

Let $\mathcal{H}$ be a finite-dimensional complex Hilbert space, $\mathcal{B}(\mathcal{H})$ the set of bounded linear operators on $\mathcal{H}$ and $\mathcal{D}(\mathcal{H})\subset \mathcal{B}(\mathcal{H})$ the set of density operators on $\mathcal{H}$. Here, we restrict to finite-dimensional Hilbert spaces, both for simplicity and because this setting closely reflects most computational frameworks in quantum chemistry.

To investigate under which circumstances chemical processes can benefit from quantum resources, we first focus on resource theories equipped with a resource-destroying map, such as those for coherence or asymmetry \cite{LHL17,Gour17,CG19} (see Appendix~\ref{app:QRT} for a brief recap of resource theories including resource-destroying maps). 
In this setting, the resource-destroying map $\mathcal{G}:\mathcal{D}(\mathcal{H})\to\mathcal{D}(\mathcal{H})$ provides a precise notion of the corresponding ``classical'' behaviour obtained when the resource is removed, and thereby singles out a canonical free counterpart of any state or process. 
This will serve as the starting point for our later generalization to resource theories without a resource-destroying map, discussed in Sec.~\ref{sec:RT-wo-RDM}.

\subsection{Resource impact functional \label{subsec:capacity}}

Let $\{M_i\}$ be a POVM on $\mathcal{H}$. For a fixed POVM element $M$, we define the corresponding yield by
\begin{equation}\label{eq:yield}
Y(\rho) := \Tr[M\rho],\qquad \rho\in \mathcal{D}(\mathcal{H})\,.
\end{equation}
In most of what follows it is not essential that $M$ be part of a POVM and it suffices that $M$ is a bounded self-adjoint operator on $\mathcal{H}$. 
For generality, we will therefore take $M\in \mathcal{B}_{\mathrm{sa}}(\mathcal{H})$ and impose the POVM constraints (i.e., $0\leq M\leq \mathbbm{1}$, and $\sum_i M_i=\mathbbm{1}$ for a full measurement) only when explicitly required.

Consider a resource theory equipped with a resource-destroying map $\mathcal{G}$ (see Appendix~\ref{app:RDP}). 
In general, $\mathcal{G}$ need not be linear. In fact, for non-convex resource theories (such as quantum discord~\cite{LHL17}) any resource-destroying map that is idempotent and fixes all free states cannot be linear, since the image of the convex set of density operators $\mathcal{D}(\mathcal{H})$ under a linear map is convex, whereas the free set of a non-convex resource theory is not.
Because an adjoint $\mathcal{G}^\dagger$ is defined only for linear maps, one cannot, in general, transfer statements to the Heisenberg picture via $\mathcal{G}^\dagger$ when $\mathcal{G}$ is non-linear.
However, the Heisenberg picture will turn out to be useful in the following. 
Accordingly, we first develop results that do not rely on an adjoint (allowing potentially non-linear $\mathcal{G}$), and then present refinements for the case in which $\mathcal{G}$ is linear. 
Additional simplifications arise when $\mathcal{G}$ is self-adjoint, i.e., $\mathcal{G} = \mathcal{G}^\dagger$ with respect to the Hilbert-Schmidt inner product. 
Most examples of linear resource-destroying maps in the literature, including dephasing maps and group twirls, are self-adjoint. For an example of a resource theory with a linear resource-destroying map that is not self-adjoint, see Appendix \ref{app:RDP}. 

Given a fixed quantum channel $\Lambda$ and an input state $\rho\in\mathcal{D}(\mathcal{H})$, we define the yield difference relative to the resource-free baseline as
\begin{equation} \label{eq:yield-change}
\Delta Y(\rho) := \Tr\!\left[M\bigl(\Lambda(\rho) -\Lambda(\mathcal{G}(\rho))\bigr)\right]\,.
\end{equation}
If $\mathcal{G}:\mathcal{B}(\mathcal{H})\to\mathcal{B}(\mathcal{H})$ is linear, then by moving to the Heisenberg picture we can write
\begin{equation}
\Delta Y(\rho) = \Tr\!\left[(\mathrm{id}-\mathcal{G}^\dagger)\bigl(\Lambda^\dagger(M)\bigr)\,\rho\right]
\qquad (\mathcal{G}\ \text{linear})\,,
\end{equation}
where $\mathrm{id}:\mathcal{B}(\mathcal{H})\to\mathcal{B}(\mathcal{H})$ is the identity map. 
If in addition $\mathcal{G}$ is self-adjoint, $\mathcal{G}^\dagger=\mathcal{G}$, and the linear functional $\Delta Y(\rho)$ can be written as the expectation value of the observable $X := (\mathrm{id}-\mathcal{G})(\Lambda^\dagger(M))$. 

\subsubsection{Definition\label{subsubsec:C-def}}

The yield change $\Delta Y$ defined above is state-specific.
To quantify how strongly a quantum channel describing some physical process can, in principle, exploit the presence of a quantum resource, we introduce the \textit{resource impact functional} with respect to a POVM element $M$ as
\begin{equation}\label{eq:capacity-sup}
\mathcal{C}_M(\Lambda):= \sup_{\rho\in\mathcal{D}(\mathcal{H})} 
\bigl\vert \Tr\bigl[M\bigl(\Lambda(\rho) - \Lambda(\mathcal{G}(\rho))\bigr)\bigr]\bigr\vert
\geq 0\,.
\end{equation}
In other words, $\mathcal{C}_M(\Lambda)$ is the maximal change, over all density operators, in the figure of merit encoded by $M$ that can be induced by a single application of $\Lambda$ when comparing a state to its free counterpart.
For a fixed choice of $M$, the state-dependent yield change $\Delta Y(\rho)$ can be decomposed into its maximal enhancement $\mathcal{C}_M^+(\Lambda):=\sup_{\rho}\Delta Y\geq 0$ and maximal suppression $\mathcal{C}_M^-(\Lambda):=\sup_\rho(-\Delta Y) = -\inf_{\rho}\Delta Y$,
such that \begin{equation} -\mathcal{C}_M^-(\Lambda)\leq \Delta Y\leq \mathcal{C}_M^+(\Lambda) \end{equation} and $\mathcal{C}_M(\Lambda) = \max\{\mathcal{C}_M^-(\Lambda), \mathcal{C}_M^+(\Lambda)\}$.

In resource theories where the resource-destroying map $\mathcal{G}$ is not unique, the quantity $\mathcal{C}_M(\Lambda)$ implicitly depends on the choice of $\mathcal{G}$. 
In the following we fix one such $\mathcal{G}$ and omit it from the notation, i.e.\ $\mathcal{C}_M(\Lambda)\equiv \mathcal{C}_{M,\mathcal{G}}(\Lambda)$; all statements are to be understood with respect to this fixed choice.

Since for finite-dimensional $\mathcal{H}$ the state space $\mathcal{D}(\mathcal{H})$ is compact and the functional in Eq.~\eqref{eq:capacity-sup} is continuous, so the supremum is in fact a maximum. 
We now derive a first specialization of $\mathcal{C}_M(\Lambda)$, which relies on the linearity of $\mathcal{G}$.

\begin{thm}\label{thm:capacity-G-linear}
Let $M\in\mathcal{B}_{\mathrm{sa}}(\mathcal{H})$, $\dim(\mathcal{H})<\infty$, $\Lambda$ be a quantum channel, and let $\mathcal{G}:\mathcal{B}(\mathcal{H})\to\mathcal{B}(\mathcal{H})$ be a linear resource-destroying map. 
Define $\mathcal{C}_M(\Lambda)$ as in Eq.~\eqref{eq:capacity-sup}. Then:
\begin{itemize}
\item[(i)] The supremum in Eq.~\eqref{eq:capacity-sup} can be restricted to pure states,
\begin{equation}
\mathcal{C}_M(\Lambda) 
= \sup_{\substack{[\Psi]\in\mathbbm{P}(\mathcal{H})\\ \|\Psi\|=1}}
\left\vert\Tr\!\left[ (\mathrm{id}-\mathcal{G}^\dagger)\bigl(\Lambda^\dagger(M)\bigr)\,\ket{\Psi}\!\bra{\Psi}\right]\right\vert,
\end{equation}
where $\mathbbm{P}(\mathcal{H})$ denotes the complex projective Hilbert space.
\item[(ii)] If $\mathcal{G}$ is Hermiticity preserving, then
\begin{equation}\label{eq:capacity}
\mathcal{C}_M(\Lambda) = \bigl\|(\mathrm{id}-\mathcal{G}^\dagger)\bigl(\Lambda^\dagger(M)\bigr)\bigr\|_{\infty}\,.
\end{equation}
\item[(iii)] If, in addition, $\mathcal{G}$ is idempotent (i.e., $\mathcal{G}\circ\mathcal{G}=\mathcal{G}$), then a necessary and sufficient condition for $\mathcal{C}_M(\Lambda)=0$ is
\begin{equation}
\Lambda^\dagger(M)\in \mathrm{im}(\mathcal{G}^\dagger)\,,
\end{equation}
where $\mathrm{im}(\mathcal{G}^\dagger)$ denotes the image of $\mathcal{G}^\dagger$.
\end{itemize}
\end{thm}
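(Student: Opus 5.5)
Writing $X := (\mathrm{id}-\mathcal{G}^\dagger)(\Lambda^\dagger(M))$, the plan is to reduce $\mathcal{C}_M(\Lambda)$ to a supremum of $\bigl\vert\Tr[X\rho]\bigr\vert$ over states and then read off each item from elementary properties of this linear functional. Since $\mathcal{G}$ and $\Lambda$ are linear, the Heisenberg-picture identity stated before Eq.~\eqref{eq:capacity-sup} gives $\Delta Y(\rho) = \Tr[X\rho]$ for every $\rho\in\mathcal{D}(\mathcal{H})$. Hence $\mathcal{C}_M(\Lambda)=\sup_{\rho}\vert\Tr[X\rho]\vert$, and we have already noted that this supremum is attained.

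For (i), the map $\rho\mapsto\Tr[X\rho]$ is affine on the convex set $\mathcal{D}(\mathcal{H})$, so $\rho\mapsto\vert\Tr[X\rho]\vert$ is convex. A convex function on a compact convex set attains its maximum at an extreme point, and the extreme points of $\mathcal{D}(\mathcal{H})$ are the pure states $\ket{\Psi}\!\bra{\Psi}$. Concretely, I would take a spectral decomposition $\rho=\sum_k p_k\ket{\psi_k}\!\bra{\psi_k}$ and use the triangle inequality, $\vert\Tr[X\rho]\vert\le\sum_k p_k\vert\bra{\psi_k}X\ket{\psi_k}\vert\le\max_k\vert\bra{\psi_k}X\ket{\psi_k}\vert$. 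This shows the supremum over states equals the supremum over pure states. The quantity is invariant under phases of $\Psi$, so parametrising by $\mathbbm{P}(\mathcal{H})$ is legitimate. Note that this step works even if $X$ is not Hermitian.

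For (ii), I first check that $X$ is self-adjoint. If $\mathcal{G}$ is Hermiticity preserving, so is $\mathcal{G}^\dagger$: for Hermitian $A$ and any $B$, $\Tr[\mathcal{G}^\dagger(A)^\dagger B]=\overline{\Tr[\mathcal{G}^\dagger(A)B^\dagger]}=\overline{\Tr[A\,\mathcal{G}(B^\dagger)]}$, and using $\mathcal{G}(B^\dagger)=\mathcal{G}(B)^\dagger$ (which holds for Hermiticity-preserving linear maps) one obtains $\mathcal{G}^\dagger(A)^\dagger=\mathcal{G}^\dagger(A)$. The channel $\Lambda$ is CPTP, so $\Lambda^\dagger(M)$ is Hermitian, and therefore $X$ is Hermitian. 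For Hermitian $X$, $\sup_{\|\Psi\|=1}\vert\bra{\Psi}X\ket{\Psi}\vert$ equals its largest absolute eigenvalue, which is $\|X\|_\infty$, by choosing $\Psi$ to be the corresponding eigenvector and applying the Rayleigh quotient bound. Combined with (i), this gives Eq.~\eqref{eq:capacity}.

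For (iii), by (ii), $\mathcal{C}_M(\Lambda)=0$ if and only if $X=0$, i.e. $\mathcal{G}^\dagger(\Lambda^\dagger(M))=\Lambda^\dagger(M)$. Idempotence $\mathcal{G}\circ\mathcal{G}=\mathcal{G}$ passes to the adjoint, giving $\mathcal{G}^\dagger\circ\mathcal{G}^\dagger=\mathcal{G}^\dagger$, so $\mathcal{G}^\dagger$ is a (possibly oblique) projection. Its image therefore coincides with its fixed-point set: if $Y=\mathcal{G}^\dagger(Z)$ then $\mathcal{G}^\dagger(Y)=Y$, and the converse is trivial. Hence $X=0$ if and only if $\Lambda^\dagger(M)\in\mathrm{im}(\mathcal{G}^\dagger)$.

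The main point requiring care is that $\mathcal{G}$ is defined on all of $\mathcal{B}(\mathcal{H})$, so that the adjoint and the Hermiticity transfer make sense; this is where the linearity assumption is used. Also, without Hermiticity preservation only (i) holds, since for non-Hermitian $X$ the numerical radius differs from $\|X\|_\infty$. No step appears genuinely hard.
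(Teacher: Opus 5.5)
Your proof is correct. For (ii) and (iii) it follows the paper's route: show that $X=(\mathrm{id}-\mathcal{G}^\dagger)(\Lambda^\dagger(M))$ is Hermitian, identify $\sup_\Psi|\langle\Psi|X|\Psi\rangle|$ with the largest absolute eigenvalue, and use that the idempotent $\mathcal{G}^\dagger$ makes $\mathrm{id}-\mathcal{G}^\dagger$ a projection whose kernel is exactly $\mathrm{im}(\mathcal{G}^\dagger)$.

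The real difference is in (i).
\begin{itemize}
\item The paper linearizes the absolute value via $|z|=\sup_\theta\Re(e^{-i\theta}z)$ and exchanges the suprema over $\theta$ and $\rho$. It then maximizes the linear functional $\rho\mapsto\Tr[\widetilde X_\theta\rho]$, with Hermitian $\widetilde X_\theta=\Re(e^{-i\theta}\widetilde X)$, at an eigenvector, and exchanges the suprema back.
\item You use the convexity of $\rho\mapsto|\Tr[X\rho]|$ directly, made concrete by a spectral decomposition and the triangle inequality.
\end{itemize}
Your route is shorter and needs no interchange of suprema. The paper's route produces along the way the numerical-radius formula $\mathcal{C}_M(\Lambda)=\sup_\theta\lambda_{\max}(\Re(e^{-i\theta}\widetilde X))$, which is what one would actually compute if $\widetilde X$ were not Hermitian.

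You also make explicit two points the paper only asserts. First, you verify that $\mathcal{G}^\dagger$ inherits Hermiticity preservation from $\mathcal{G}$. Second, in (iii) you prove necessity via $\|X\|_\infty=0\Rightarrow X=0$, whereas the paper's proof writes out only the sufficiency direction.

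Your closing caveat, that without Hermiticity preservation only (i) survives, is vacuous in this setting. A linear resource-destroying map sends every rank-one projector $|e\rangle\langle e|$ to a free, hence Hermitian, state. Every Hermitian operator is a real linear combination of such projectors, so a linear $\mathcal{G}$ is automatically Hermiticity preserving. The hypothesis of (ii) therefore always holds under the theorem's assumptions.
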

We provide the proof of Theorem~\ref{thm:capacity-G-linear} in Appendix~\ref{app:proof-thm-1}. 

In the non-linear case, the adjoint-based criterion (iii) does not apply. 
A simple sufficient, but generally stronger than necessary, condition ensuring $\mathcal{C}_M(\Lambda)=0$ is a kind of ``resource blindness'' of the quantum channel $\Lambda$, namely $\Lambda(\rho) = \Lambda(\mathcal{G}(\rho))$ for all $\rho\in\mathcal{D}(\mathcal{H})$. 
In this situation the resource under consideration affords no yield advantage for the observable $M$ and is, in this sense, not operationally functional for the task defined by $(\Lambda,M)$. 
For linear $\mathcal{G}$, the minimal necessary and sufficient condition is given by (iii), which is strictly weaker than full resource blindness of $\Lambda$. 
Moreover, (iii) together with (i) implies that $\mathcal{C}_M(\Lambda)$ vanishes for linear $\mathcal{G}$ whenever $\Lambda$ is $\mathcal{G}$-covariant and $M$ is $\mathcal{G}$-invariant, as expected as in that case the dynamics and the readout are insensitive to the resource-destroying operation.

For non-linear $\mathcal{G}$, the objective $f(\rho)= |\Delta Y(\rho)|$ is in general non-convex in $\rho$ because of the non-linear dependence $\rho\mapsto\mathcal{G}(\rho)$. 
As a consequence, the supremum in Eq.~\eqref{eq:capacity-sup} need not be attained at an extremal point (i.e., pure state) of $\mathcal{D}(\mathcal{H})$ and there can be settings in which a mixed input state outperforms every pure state input. 
This should be contrasted with the case of linear $\mathcal{G}$, where $\Delta Y(\rho)$ is affine in $\rho$ and hence $f(\rho)$ is convex. Thus, mixing (e.g., due to non-zero temperature or environmental noise) cannot increase $f(\rho)$ beyond its value for the best component state.
In the future, it will be interesting to characterize the maximizers in Eq.~\eqref{eq:capacity-sup} for given pairs $(\Lambda,\mathcal{G})$, in order to identify which properties of a quantum state have operational impact on the target yield and how they interact with the dynamics described by $\Lambda$.

At this point, it is important to explain how the resource impact functional $\mathcal{C}_M(\Lambda)$ differs from related notions in the literature~\cite{ZZF00, GPPZ13, MK15,LHL17, BKZW17, BDGMW17,ZSS18, SCVZ18, ZCV18, SC19, TELP19,CG19, NBCPJA20, LBL20, MNRJJB20}. 
A widely studied, and resource theory specific, example is the coherence-generating power~\cite{MK15,BKZW17,BDGMW17,SCVZ18,ZCV18,LBL20}, which quantifies how much coherence a channel can create from free (incoherent) states, as evaluated by a chosen coherence monotone. 
By contrast, when applied to the resource theory of coherence, $\mathcal{C}_M(\Lambda)$ does not measure the increase of coherence itself. Instead, it captures how strongly the quantum channel $\Lambda$ can exploit a resource to change a task-specific readout described by $M$, i.e., the anticipated target yield. This shift from resource creation to task-level impact is essential for analysing the operational role of quantum resources in physical processes described by $\Lambda$. 
Furthermore, the yield change $\Delta Y (\rho)$ in Eq.~\eqref{eq:yield-change} enables a state-resolved analysis when required.  
Finally, the definition of $\mathcal{C}_M(\Lambda)$ in Eq.~\eqref{eq:capacity-sup} only assumes the existence of a resource-destroying map and is otherwise universally applicable to any resource theory with such a map. 
Extensions to resource theories without a resource-destroying map will be discussed in Sec.~\ref{sec:RT-wo-RDM}.

Furthermore, suppose the system is prepared in a (possibly mixed) state $\rho$ and undergoes the transformation $\Lambda(\rho)$. 
If $\Lambda$ is not a free operation in the chosen resource theory, then the achievable change in the target yield $Y$ can exceed what is possible under free operations, i.e., it can be larger than the maximal yield change obtainable when $\Lambda$ is restricted to the free set of quantum channels. 
The task specific response functional $\mathcal{C}_M(\Lambda)$ therefore measures the maximum modification that the single-step process $\Lambda$ can draw from the resource with respect to $M$, while $\mathcal{C}_M(\Lambda)=0$ means that the quantum resource is effectively irrelevant for the process described by $\Lambda$ and the observable $M$. 
In the case where $\mathcal{G}$ is linear, Theorem \ref{thm:capacity-G-linear} provides a precise condition for when this happens.

To further emphasise the distinct operational meaning of $\mathcal{C}_M(\Lambda)$, consider a resource theory with a set of free states $\mathcal{F}$ and define
\begin{equation}\label{eq:def-PiML}
\Pi_M(\Lambda):=  \sup_{\rho\in\mathcal{D}(\mathcal{H})} \big|\Tr[M\,\Lambda(\rho)]\big|- \sup_{\sigma\in \mathcal{F}}\big|\Tr[M\,\Lambda(\sigma)]\big|\geq 0\,,
\end{equation}
which is the difference between the optimal performance achievable with arbitrary input states and the optimal performance achievable when the input is restricted to free states. The quantity $\Pi_M(\Lambda)$ thus provides a meaningful measure of the operational advantage enabled by resourceful states, and functionals which consider the optimum over all states versus an optimum over free states are common in the literature \cite{NBCPJA20, TR19, TRBLA19, KTAY24}. We discuss in Appendix \ref{app:state-disc} how $\Pi_M(\Lambda)$ can be interpreted as the maximal advantage of resourceful states in an appropriate binary decision (discrimination-type) task, in the sense of Ref.~\cite{KTAY24}.

However, unlike $\mathcal{C}_M(\Lambda)$, the definition in Eq.~\eqref{eq:def-PiML} does not compare a state $\rho$ with a specific free baseline obtained from it (such as $\mathcal{G}(\rho)$ in the presence of a resource-destroying map). Instead, the two optimisations are carried out independently over all density operators $\rho\in\mathcal{D}(\mathcal{H})$ and over free states $\sigma\in\mathcal{F}$. As a result, the explicit pairing between each state and its resource-free counterpart, which is crucial for the interpretation of $\mathcal{C}_M(\Lambda)$, is lost.
\begin{cor}\label{cor:CM-PiM}
Let $M\in\mathcal{B}_{\mathrm{sa}}(\mathcal{H})$, $\Lambda$ a linear CPTP map, and $\mathcal{F}$ denote the set of free states of a given resource theory. Then, 
\begin{equation}\label{eq:rel-Pi-C}
\Pi_M(\Lambda) \leq \mathcal{C}_M(\Lambda)
\end{equation}
where $\mathcal{C}_M(\Lambda)$ and $\Pi_M(\Lambda)$ are defined in Eqs.~\eqref{eq:capacity-sup} and \eqref{eq:def-PiML}, respectively. 
\end{cor}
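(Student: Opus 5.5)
We plan to prove the corollary by pairing each input $\rho$ with its own free counterpart $\mathcal{G}(\rho)$, so that the unpaired comparison in $\Pi_M(\Lambda)$ is dominated by the paired one. We assume, as throughout Sec.~\ref{subsec:capacity}, that the resource theory is equipped with the fixed resource-destroying map $\mathcal{G}$, so that $\mathcal{G}(\rho)\in\mathcal{F}$ for every $\rho\in\mathcal{D}(\mathcal{H})$. No linearity of $\mathcal{G}$ is needed, so Theorem~\ref{thm:capacity-G-linear} is not invoked; the argument works for arbitrary, possibly non-linear, $\mathcal{G}$.

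The first step is to fix an arbitrary $\rho\in\mathcal{D}(\mathcal{H})$ and use that $\mathcal{G}(\rho)$ is a feasible point of the second supremum in Eq.~\eqref{eq:def-PiML}:
\begin{equation*}
\sup_{\sigma\in\mathcal{F}}\big|\Tr[M\Lambda(\sigma)]\big| \;\geq\; \big|\Tr[M\Lambda(\mathcal{G}(\rho))]\big|\,.
\end{equation*}
The second step subtracts this from $|\Tr[M\Lambda(\rho)]|$ and applies the reverse triangle inequality:
\begin{equation*}
\big|\Tr[M\Lambda(\rho)]\big|-\sup_{\sigma\in\mathcal{F}}\big|\Tr[M\Lambda(\sigma)]\big|
\;\leq\; \big|\Tr[M\Lambda(\rho)]\big|-\big|\Tr[M\Lambda(\mathcal{G}(\rho))]\big|
\;\leq\; \big|\Tr[M(\Lambda(\rho)-\Lambda(\mathcal{G}(\rho)))]\big|\,.
\end{equation*}
Linearity of $\Lambda$ (and of the trace) is used here to combine the two terms into a single trace. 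The right-hand side is at most $\mathcal{C}_M(\Lambda)$ by Eq.~\eqref{eq:capacity-sup}.

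The final step takes the supremum over $\rho$ on the left-hand side. The subtracted term $\sup_{\sigma\in\mathcal{F}}|\Tr[M\Lambda(\sigma)]|$ does not depend on $\rho$, so this supremum equals exactly $\Pi_M(\Lambda)$, which gives Eq.~\eqref{eq:rel-Pi-C}.

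The only delicate point is ensuring that $\mathcal{G}(\rho)$ actually belongs to the same free set $\mathcal{F}$ over which $\Pi_M$ optimises. This holds because a resource-destroying map by definition maps every state into the free set (Appendix~\ref{app:RDP}). If $\mathcal{F}$ were specified independently of $\mathcal{G}$, we would need to assume $\mathrm{im}(\mathcal{G})\subseteq\mathcal{F}$, and the proof would state this compatibility explicitly. The finiteness of both suprema in finite dimensions is automatic, so no limiting argument is required.
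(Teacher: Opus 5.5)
Your proposal is correct and is essentially the paper's proof: the paper writes $\bigl|\Tr[M\Lambda(\rho)]\bigr|\le\bigl|\Tr[M\Lambda(\mathcal{G}(\rho))]\bigr|+\bigl|\Tr[M\Lambda(\rho-\mathcal{G}(\rho))]\bigr|$, takes the supremum over $\rho$, and uses $\mathcal{G}(\rho)\in\mathcal{F}$, which is your per-$\rho$ reverse-triangle argument rearranged. One small correction: only linearity of the trace is needed to combine the two terms, since $\mathcal{C}_M(\Lambda)$ is already defined via $\Lambda(\rho)-\Lambda(\mathcal{G}(\rho))$, and your remark that $\mathrm{im}(\mathcal{G})\subseteq\mathcal{F}$ must hold is exactly the assumption the paper uses implicitly.
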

The proof of Corollary \ref{cor:CM-PiM} is provided in Appendix \ref{app:proof-cor-2}. 

We can interpret Corollary \ref{cor:CM-PiM} as follows: In the definition of $\Pi_M(\Lambda)$ in Eq.~\eqref{eq:def-PiML}, the second term selects the free input state that maximises the yield for the given task specified by $(\Lambda,M)$. Using this globally optimal free strategy as a benchmark makes the free theory look as strong as possible and can therefore only decrease the observed advantage of resourceful inputs. 
In contrast, $\mathcal{C}_M(\Lambda)$ compares each state $\rho$ with its own ``free counterpart'' $\mathcal{G}(\rho)$, so the resourceful input and the baseline are not optimised independently. Since $\mathcal{G}(\rho)$ is just one particular free state, the contrast per state appearing in $\mathcal{C}_M(\Lambda)$ can only be larger than or equal to the advantage measured against the globally optimal free input. This explains why $\Pi_M(\Lambda)$ is always a lower bound to $\mathcal{C}_M(\Lambda)$. 

\subsubsection{Key properties\label{subsec:properties-CM}}

We now analyse the properties of the resource impact functional $\mathcal{C}_M(\Lambda)$ introduced in the previous section. The next two theorems make no linearity assumption on $\mathcal{G}$. 
\begin{thm}\label{thm:capacity-properties}
Let $\mathcal{H}$ be a finite-dimensional Hilbert space, $M\in\mathcal{B}_{\text{sa}}(\mathcal{H})$, $\mathcal{G}:\mathcal{B}(\mathcal{H})\to\mathcal{B}(\mathcal{H})$ a (possibly non-linear) resource-destroying map, and $\mathcal{C}_M(\Lambda)$ as defined in Eq.~\eqref{eq:capacity-sup}. 
\begin{itemize}
\item[(i)] (Convexity) For any $p_k\geq 0$, $\sum_k p_k=1$, 
\begin{equation}\label{eq:CM-pullback} 
\mathcal{C}_M\left(\sum_{k}p_k \Lambda_k\right)\leq \sum_{k}p_k \mathcal{C}_M(\Lambda_k)\,.
\end{equation}
\item[(ii)] (Lipschitz continuity) For any two linear CPTP maps $\Lambda_1, \Lambda_2$, the resource impact functional $\mathcal{C}_M(\Lambda)$ is Lipschitz continuous,
\begin{equation}\label{eq:Lipschitz}
\left\vert \mathcal{C}_M(\Lambda_1) - \mathcal{C}_M(\Lambda_2)\right\vert \leq K \|\Lambda_1 - \Lambda_2\|_{\diamond}\,,
\end{equation}
with Lipschitz constant $K : = 2\|M\|_{\infty}$. If $\mathcal{G}$ is a linear unital CPTP map, Eq.~\eqref{eq:Lipschitz} holds with $K$ replaced by $K^\prime=\|M\|_{\infty}<K$. 
\item[(iii)] (Seminorm) Let $M_1, M_2\in\mathcal{B}_{\text{sa}}(\mathcal{H})$ and $a, b\in \RR$. Then, 
\begin{equation}
\mathcal{C}_{a M_1+bM_2}(\Lambda)\leq |a|\mathcal{C}_{M_1}(\Lambda) +|b|\mathcal{C}_{M_2}(\Lambda)\,.
\end{equation}
\end{itemize}
\end{thm}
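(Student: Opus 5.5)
All three parts follow from the definition in Eq.~\eqref{eq:capacity-sup}. The common ingredients are three facts: the map $\rho\mapsto\mathcal{G}(\rho)$ does not involve $\Lambda$ or $M$; the quantity $\Delta Y$ is linear in $\Lambda$ and in $M$ for fixed $\rho$; and suprema of sums are bounded by sums of suprema. No linearity of $\mathcal{G}$ is needed except for the improved constant in (ii).

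For (i), fix $\rho$. Set $\sigma=\mathcal{G}(\rho)$, which is a density operator and is independent of $k$. Then
\[
\Tr\bigl[M\bigl(\textstyle\sum_k p_k\Lambda_k\bigr)(\rho-\sigma)\bigr]=\sum_k p_k\Tr[M\Lambda_k(\rho-\sigma)] .
\]
Applying the triangle inequality and bounding each term by its own supremum $\mathcal{C}_M(\Lambda_k)$ gives $\sum_k p_k\mathcal{C}_M(\Lambda_k)$. Taking the supremum over $\rho$ then yields the claim.

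Part (iii) is the same argument with linearity in $M$:
\[
\bigl|\Tr[(aM_1+bM_2)\Lambda(\rho-\mathcal{G}(\rho))]\bigr|\le |a|\,|\Tr[M_1\cdots]|+|b|\,|\Tr[M_2\cdots]| ,
\]
after which we take suprema. Absolute homogeneity follows the same way, so $\mathcal{C}_{(\cdot)}(\Lambda)$ is a seminorm in $M$.

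For (ii), use that both suprema are attained, or simply that $|\sup f-\sup g|\le\sup|f-g|$. This reduces the claim to bounding, uniformly in $\rho$,
\[
\bigl|\Tr[M(\Lambda_1-\Lambda_2)(\rho)]-\Tr[M(\Lambda_1-\Lambda_2)(\mathcal{G}(\rho))]\bigr| .
\]
Each term is at most $\|M\|_\infty\|(\Lambda_1-\Lambda_2)(\tau)\|_1\le\|M\|_\infty\|\Lambda_1-\Lambda_2\|_\diamond$ for a density operator $\tau$, by Hölder's inequality and the definition of the diamond norm. This uses $\mathcal{G}(\rho)\in\mathcal{D}(\mathcal{H})$ and gives $K=2\|M\|_\infty$.

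For the refined constant, write the difference as $\Tr[M\,\Delta(\rho-\mathcal{G}(\rho))]$ with $\Delta=\Lambda_1-\Lambda_2$, which uses linearity of $\Delta$. The naive bound then reads $\|M\|_\infty\|\Delta\|_\diamond\|\rho-\mathcal{G}(\rho)\|_1\le 2\|M\|_\infty\|\Delta\|_\diamond$, so the main obstacle is to save the factor of two. My first attempt is to exploit that $\Delta$ is trace-annihilating ($\Tr\Delta(X)=\Tr X$ cancels between $\Lambda_1,\Lambda_2$), so $\Delta^\dagger(\mathbbm{1})=0$. We may therefore shift $M\to M-c\mathbbm{1}$ freely, and $\|M-c\mathbbm{1}\|_\infty$ can be reduced to half the spread of $M$. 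Alternatively, for linear unital CPTP $\mathcal{G}$, pass to the Heisenberg picture: the quantity becomes $\Tr[(\mathrm{id}-\mathcal{G}^\dagger)(\Delta^\dagger(M))\rho]$. Then bound $\|(\mathrm{id}-\mathcal{G}^\dagger)(Z)\|_\infty$ for Hermitian $Z=\Delta^\dagger(M)$ using that $\mathcal{G}^\dagger$ is unital and positive. Since $\mathcal{G}^\dagger(Z)$ lies in $[\lambda_{\min}(Z),\lambda_{\max}(Z)]$, the difference $Z-\mathcal{G}^\dagger(Z)$ has norm at most $\lambda_{\max}(Z)-\lambda_{\min}(Z)$.

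It remains to control this spread of $\Delta^\dagger(M)$ by $\|M\|_\infty\|\Delta\|_\diamond$. Writing the spread as $\sup_{\rho,\rho'}\Tr[M\Delta(\rho-\rho')]$, this is the delicate step where the factor must be argued carefully, combining the trace-annihilation shift with the diamond norm bound. If that estimate only yields the spread of $M$, I would state the refined constant as $K'=\|M\|_\infty$ under the POVM-type normalisation $0\le M\le\mathbbm{1}$, where the spread of $M$ is at most $\|M\|_\infty$.
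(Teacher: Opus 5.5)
Your arguments for (i), (iii) and for the constant $K=2\|M\|_\infty$ in (ii) are correct and coincide with the paper's proof. The gap is the refined constant $K'=\|M\|_\infty$ for linear unital CPTP $\mathcal{G}$. Both of your routes only yield the constant $\lambda_{\max}(M)-\lambda_{\min}(M)$, which equals $2\|M\|_\infty$ for, e.g., an $M$ with spectrum $\{\pm1\}$.

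The paper argues differently. It uses the operator-norm form of Theorem~\ref{thm:capacity-G-linear}(ii), bounds $\|(\Lambda_1-\Lambda_2)^\dagger(M)\|_\infty\le\|M\|_\infty\|\Lambda_1-\Lambda_2\|_\diamond$, and then asserts $\|O-\mathcal{G}(O)\|_\infty\le\|O\|_\infty$ for a pinching $\mathcal{G}$. That last inequality holds for two-block pinchings such as qubit dephasing, but fails in general. Take complete dephasing on $\CC^3$ and $F=\mathbbm{1}-2\ket{\phi}\!\bra{\phi}$ with $\ket{\phi}=(\ket{0}+\ket{1}+\ket{2})/\sqrt{3}$. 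Then $\|F\|_\infty=1$, but $\|F-\mathcal{G}(F)\|_\infty=\|2\ket{\phi}\!\bra{\phi}-\tfrac{2}{3}\mathbbm{1}\|_\infty=\tfrac{4}{3}$.

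In fact the refined claim itself is false, so the step you flagged as delicate cannot be completed. Keep $\mathcal{G}$ and $F$ as above, and set $M=\ket{0}\!\bra{0}-\ket{1}\!\bra{1}$, so $\|M\|_\infty=1$. Let $\Lambda_2(X)=\Tr[X]\,\mathbbm{1}/3$ and $\Lambda_1(X)=\Tr[X]\,\mathbbm{1}/3+\epsilon\,\Tr[FX]\,M$ with $0<\epsilon\le 1/3$. Then:
\begin{itemize}
\item $\Lambda_1$ is trace preserving since $\Tr M=0$.
\item $\Lambda_1$ is completely positive, since its Choi matrix $\tfrac{1}{3}\mathbbm{1}\otimes\mathbbm{1}+\epsilon\, M\otimes F^{\mathrm{T}}$ is positive.
\item $\mathcal{C}_M(\Lambda_2)=0$.
\item Using $\mathcal{G}^\dagger=\mathcal{G}$, $\mathcal{C}_M(\Lambda_1)=\epsilon\,\Tr[M^2]\,\|F-\mathcal{G}(F)\|_\infty=8\epsilon/3$.
\item $\|\Lambda_1-\Lambda_2\|_\diamond\le\epsilon\|M\|_1\|F\|_\infty=2\epsilon$, because $(\Lambda_1-\Lambda_2)\otimes\mathrm{id}$ maps $\omega$ to $\epsilon M\otimes\Tr_1[(F\otimes\mathbbm{1})\omega]$.
\end{itemize}
Hence $|\mathcal{C}_M(\Lambda_1)-\mathcal{C}_M(\Lambda_2)|>\|M\|_\infty\|\Lambda_1-\Lambda_2\|_\diamond$.

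Your fallback is the correct repair, and it needs neither linearity nor unitality of $\mathcal{G}$. Since $(\Lambda_1-\Lambda_2)(\rho-\mathcal{G}(\rho))$ is traceless and Hermitian, your shift $M\to M-c\mathbbm{1}$ gives $|\mathcal{C}_M(\Lambda_1)-\mathcal{C}_M(\Lambda_2)|\le\tfrac{1}{2}R_{\mathcal{G}}\,(\lambda_{\max}(M)-\lambda_{\min}(M))\,\|\Lambda_1-\Lambda_2\|_\diamond$. This is at most $\|M\|_\infty\|\Lambda_1-\Lambda_2\|_\diamond$ whenever $M\ge 0$, in particular for POVM elements.
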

The proof of Theorem \ref{thm:capacity-properties} is shown in Appendix \ref{app:proof-thm-3}.

Convexity of $\mathcal{C}_M(\Lambda)$ implies that mixing channels, i.e., classical randomness in the dynamics, cannot increase the potential advantage provided by a resource. In addition, when optimizing $\mathcal{C}_M(\Lambda)$ over a convex set of admissible quantum channels, the maximum is attained at an extremal element. 
Furthermore, Lipschitz continuity is important as it ensures that $\mathcal{C}_M(\Lambda)$ is insensitive with respect to small implementation errors of the quantum channel $\Lambda$. Furthermore, the map $M\mapsto\mathcal{C}_M(\Lambda)$ is absolutely homogeneous and subadditive as a consequence of (iii) and, thus, a seminorm on the real vector space of Hermitian operators. 

The next theorem collects several data-processing inequalities that hold for an arbitrary resource-destroying map $\mathcal{G}$. We then discuss their implications afterwards and explain the geometric picture underlying $\mathcal{C}_M(\Lambda)$ and its connection to the post-processing bound in Sec.~\ref{sec:CM-geometry}.
\begin{thm}\label{thm:data-processing}
Let $\mathcal{H}$ be a finite-dimensional Hilbert space, $M\in\mathcal{B}_{\text{sa}}(\mathcal{H})$, $\mathcal{G}:\mathcal{B}(\mathcal{H})\to\mathcal{B}(\mathcal{H})$ a (possibly non-linear) resource-destroying map, and $\mathcal{C}_M(\Lambda)$ as defined in Eq.~\eqref{eq:capacity-sup}. 
\begin{itemize}
\item[(i)] (Post-processing inequality) Let $\Lambda_{1/2}: \mathcal{B}(\mathcal{H})\to\mathcal{B}(\mathcal{H})$ be two linear CPTP maps. The resource impact functional $\mathcal{C}_M(\Lambda)$ satisfies a Heisenberg pullback identity
\begin{equation}\label{eq:CM-composition}
\mathcal{C}_M(\Lambda_2\circ \Lambda_1) = \mathcal{C}_{\Lambda_2^\dagger(M)}(\Lambda_1)\,.
\end{equation}
Moreover, the following post-processing bounds hold:
\begin{align}\label{eq:CM-postprocessing}
\mathcal{C}_M(\Lambda_2\circ \Lambda_1) &\leq  \mathcal{C}_M(\Lambda_1) +\mathcal{C}_{\Delta M_{\Lambda_2}}(\Lambda_1)\nonumber\\
&\leq \mathcal{C}_M(\Lambda_1) +  \|\Delta M_{\Lambda_2}\|_{\infty} R_{\mathcal{G}}\,,
\end{align}
where $\Delta M_{\Lambda_2}:= \Lambda_2^\dagger(M)-M$ and the ``resource radius''
\begin{equation}\label{eq:res-radius}
R_{\mathcal{G}} : = \max_{\rho\in\mathcal{D}(\mathcal{H})}\|\rho - \mathcal{G}(\rho)\|_1\,.
\end{equation} 
describes how far a state can be from its resource-destroyed counterpart in trace distance. 
\item[(ii)] (Pre-processing with a $\mathcal{G}$-covariant channel) Let $\Lambda_1$ be a linear CPTP map with $\mathcal{G}\circ \Lambda_1 = \Lambda_1\circ\mathcal{G}$. Then, 
\begin{equation}
\mathcal{C}_M(\Lambda_2\circ \Lambda_1)\leq \mathcal{C}_M(\Lambda_2)\,.
\end{equation}
\item[(iii)] (Classical coarse-graining) Let $\{M_i\}_i$ be a POVM, and $\widetilde{M}_j =\sum_i V_{ji}M_i$ for a column stochastic matrix $V$. Then, for any quantum channel $\Lambda$ and measurement outcome $j$, 
\begin{equation}\label{eq:class-DPI}
\mathcal{C}_{\widetilde{M}_j}(\Lambda)\leq \sum_i V_{ji}\mathcal{C}_{M_i}(\Lambda)\,.
\end{equation}
\end{itemize}
\end{thm}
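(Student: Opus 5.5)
The plan is to work directly from the definition in Eq.~\eqref{eq:capacity-sup}. Every step rests on one identity: for a linear CPTP map $\Lambda_2$ and any operator $A$, $\Tr[M\Lambda_2(A)]=\Tr[\Lambda_2^\dagger(M)A]$. This needs only the linearity of $\Lambda_2$, not of $\mathcal{G}$. Choosing $A=\Lambda_1(\rho)-\Lambda_1(\mathcal{G}(\rho))$ shows that the objective in the supremum defining $\mathcal{C}_M(\Lambda_2\circ\Lambda_1)$ equals, pointwise in $\rho$, the objective defining $\mathcal{C}_{\Lambda_2^\dagger(M)}(\Lambda_1)$. Taking suprema gives the pullback identity \eqref{eq:CM-composition}. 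Since $\Lambda_2^\dagger$ is unital and positive, $\Lambda_2^\dagger(M)$ is self-adjoint, so the right-hand side is well defined.

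For the post-processing bounds, write $\Lambda_2^\dagger(M)=M+\Delta M_{\Lambda_2}$ and apply the seminorm property of Theorem~\ref{thm:capacity-properties}(iii) with $a=b=1$. This gives the first inequality. For the second, I bound $\mathcal{C}_{\Delta M}(\Lambda_1)$ by moving the channel onto the difference of states. By Hölder's inequality, $|\Tr[\Delta M\,\Lambda_1(\rho-\mathcal{G}(\rho))]|\le\|\Delta M\|_\infty\|\Lambda_1(\rho-\mathcal{G}(\rho))\|_1$. Trace-norm contractivity of the CPTP map $\Lambda_1$ on the Hermitian operator $\rho-\mathcal{G}(\rho)$ then bounds this by $\|\Delta M\|_\infty\|\rho-\mathcal{G}(\rho)\|_1\le\|\Delta M\|_\infty R_{\mathcal{G}}$. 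The only point to check is that $\mathcal{G}(\rho)$ is a state, so that $\rho-\mathcal{G}(\rho)$ is Hermitian. This holds because a resource-destroying map sends $\mathcal{D}(\mathcal{H})$ to the free states. The maximum in $R_{\mathcal{G}}$ may have to be read as a supremum if $\mathcal{G}$ is not continuous, but that does not affect the bound.

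For (ii), covariance gives $\Lambda_2\Lambda_1(\mathcal{G}(\rho))=\Lambda_2(\mathcal{G}(\Lambda_1(\rho)))$. Hence the objective for $\Lambda_2\circ\Lambda_1$ at $\rho$ equals the objective for $\Lambda_2$ at $\sigma=\Lambda_1(\rho)\in\mathcal{D}(\mathcal{H})$. The supremum over the subset $\Lambda_1(\mathcal{D}(\mathcal{H}))$ is at most the supremum over all of $\mathcal{D}(\mathcal{H})$, which proves the claim. I expect this to be the most delicate step conceptually, only because for non-linear $\mathcal{G}$ one must use the covariance exactly as stated, applied to states, and not through any adjoint.

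For (iii), linearity of the trace in $M$ gives $\Delta Y_{\widetilde M_j}(\rho)=\sum_i V_{ji}\Delta Y_{M_i}(\rho)$. Because $V_{ji}\ge0$, the triangle inequality yields $|\Delta Y_{\widetilde M_j}(\rho)|\le\sum_iV_{ji}|\Delta Y_{M_i}(\rho)|\le\sum_iV_{ji}\mathcal{C}_{M_i}(\Lambda)$. Taking the supremum over $\rho$ gives \eqref{eq:class-DPI}. Equivalently, this follows from repeated use of Theorem~\ref{thm:capacity-properties}(iii) with nonnegative coefficients. Column stochasticity is needed only so that $\{\widetilde M_j\}$ is again a POVM; the inequality itself uses only $V_{ji}\ge0$.
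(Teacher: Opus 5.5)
Your proposal is correct and follows essentially the same route as the paper: the pullback identity via the adjoint of $\Lambda_2$, the first post-processing inequality via the triangle inequality (which is exactly the seminorm property you invoke), H\"older plus trace-norm contractivity of $\Lambda_1$ for the $R_{\mathcal{G}}$ bound, and a termwise triangle inequality using $V_{ji}\geq 0$ for the coarse-graining. For (ii), the paper adds and subtracts $\Lambda_2(\mathcal{G}(\Lambda_1(\rho)))$ and notes that the extra term vanishes under covariance; this is equivalent to your direct substitution $\sigma=\Lambda_1(\rho)$ followed by enlarging the supremum from $\Lambda_1(\mathcal{D}(\mathcal{H}))$ to $\mathcal{D}(\mathcal{H})$.
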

Theorem \ref{thm:data-processing} is proven in Appendix \ref{app:proof-thm-4}. 

Parts (i)-(iii) in Theorem \ref{thm:data-processing} show how the resource impact functional $\mathcal{C}_M(\Lambda)$ changes under pre-processing or post-processing with a second quantum channel. Eq.~\eqref{eq:CM-composition} says that post-processing by a quantum channel $\Lambda_2$ acts as a readout deformation $M\mapsto \Lambda_2^\dagger(M)$. Then, the first inequality in Eq.~\eqref{eq:CM-postprocessing} allows us to decompose the maximal yield change as described by $\mathcal{C}_{M}(\Lambda)$ into a part which corresponds to initial $M$ and a second term associated with the readout deformation induced by $\Lambda_2$. 
The second term vanishes when $\Lambda_2^\dagger(M)=M$, i.e., for $\|\Delta M_{\Lambda_2}\|_{\infty} =0$, recovering the standard post-processing monotonicity. Furthermore, the second inequality in Eq.~\eqref{eq:CM-postprocessing} shows that small readout deformations as quantified by $\|\Delta M_{\Lambda_2}\|_{\infty}$ cannot lead to large changes in $\mathcal{C}_M(\Lambda)$. 

Property (ii) in Theorem \ref{thm:data-processing} states that free pre-processing with a $\mathcal{G}$-covariant quantum channel cannot increase the maximally possible yield change. Importantly, while every $\mathcal{G}$-covariant map is a free operation for the given resource theory, not every free operation is $\mathcal{G}$-covariant. Consequently, one should not expect a general monotonicity statement under all free operations and (ii) isolates the subset of free operations under which such a data-processing inequality holds.

Classical coarse-graining plays an important role if we would like to apply $\mathcal{C}_M(\Lambda)$ to non-idealised experimental settings, where coarse-graining is naturally incorporated in the readout model \cite{CDGMS10, Hawton10, HSDOK18}. For this, recall that $M$ is an element of a POVM $\{M_i\}_i$. Measuring $\{M_i\}_i$ and then classically coarse-graining the outcomes by a classical channel described by a column-stochastic matrix $V$ (i.e., $V_{ji}\geq 0$ for all $i,j$ and $\sum_j V_{ji}=1$ for each $i$) is equivalent to measuring the POVM $\{\widetilde{M}_j\}_j$ with
\begin{equation}\label{eq:M-coarse-graining}
\widetilde{M}_j = \sum_i V_{ji} M_i\,,
\end{equation}
where $V_{ji}$ is the conditional probability to map the ``ideal'' outcome $i$ to the recorded outcome $j$ \cite{LeppS21,HJN22}, and similar constructions hold for continuous outcome measurements \cite{Haapasalo2015, HMZ16, Beneduci16}.

\subsubsection{Geometric picture\label{sec:CM-geometry}}

Next, we show that the resource impact functional admits a natural geometric interpretation: it is the support function of a convex set and, thus, the post-processing inequality in Eq.~\eqref{eq:CM-postprocessing} becomes the well-known subadditivity of support functions. 
For this, we first define the set
\begin{equation}
\mathcal{M}_{\mathcal{G}, \Lambda}:= \left\{\Lambda(\rho) - \Lambda(\mathcal{G}(\rho))\,\middle\vert\, \rho\in\mathcal{D}(\mathcal{H})\right\}
\subset \mathcal{B}_{\mathrm{sa}}(\mathcal{H})
\end{equation}
and introduce the convex hull of the union of $\mathcal{M}_{\mathcal{G}, \Lambda}$ with $-\mathcal{M}_{\mathcal{G}, \Lambda}$ as
\begin{equation}\label{eq:set-tilde-M}
\widetilde{\mathcal{M}}_{\mathcal{G}, \Lambda}
:= \mathrm{conv}\left(\mathcal{M}_{\mathcal{G}, \Lambda} \cup\left(-\mathcal{M}_{\mathcal{G}, \Lambda}\right)\right)\,.
\end{equation}

\begin{cor}\label{cor:capacity-geometry}
Let $M\in\mathcal{B}_{\mathrm{sa}}(\mathcal{H})$. Then the resource impact functional $\mathcal{C}_M(\Lambda)$ is equal to the support function of the convex set $\widetilde{\mathcal{M}}_{\mathcal{G}, \Lambda}$ evaluated at $M$:
\begin{equation}
\mathcal{C}_M(\Lambda)
= \sup_{O\in\widetilde{\mathcal{M}}_{\mathcal{G}, \Lambda}}\langle M, O\rangle_{\mathrm{HS}}
=: h_{\widetilde{\mathcal{M}}_{\mathcal{G}, \Lambda}}(M)\,.
\end{equation}
\end{cor}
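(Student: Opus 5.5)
The plan is to prove the identity by two inequalities, using only the elementary fact that a linear functional attains the same supremum on a set and on its convex hull. Write $S:=\mathcal{M}_{\mathcal{G},\Lambda}\cup(-\mathcal{M}_{\mathcal{G},\Lambda})$. Every element of $\mathcal{M}_{\mathcal{G},\Lambda}$ is Hermitian, since $\Lambda$ is a channel and $\mathcal{G}(\rho)$ is a density operator. Hence for $O\in\mathcal{B}_{\mathrm{sa}}(\mathcal{H})$ the pairing $\langle M,O\rangle_{\mathrm{HS}}=\Tr[MO]$ is real, and no real part needs to be taken.

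First, I will show that the support function of $S$ equals $\mathcal{C}_M(\Lambda)$. For $O=\Lambda(\rho)-\Lambda(\mathcal{G}(\rho))$ we have $\langle M,\pm O\rangle_{\mathrm{HS}}=\pm\Delta Y(\rho)$, with $\Delta Y$ as in Eq.~\eqref{eq:yield-change}. Therefore
\begin{equation*}
\sup_{O\in S}\langle M,O\rangle_{\mathrm{HS}}=\sup_{\rho}\max\{\Delta Y(\rho),-\Delta Y(\rho)\}=\sup_\rho|\Delta Y(\rho)|=\mathcal{C}_M(\Lambda).
\end{equation*}
This step uses no linearity of $\mathcal{G}$, which matters because $\mathcal{G}$ may be nonlinear and $\mathcal{M}_{\mathcal{G},\Lambda}$ need not be convex. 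That is exactly why the symmetrised convex hull appears in Eq.~\eqref{eq:set-tilde-M}.

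Second, I will pass to the convex hull $\widetilde{\mathcal{M}}_{\mathcal{G},\Lambda}=\mathrm{conv}(S)$. Since $S\subseteq\mathrm{conv}(S)$, we immediately get $h_{S}(M)\le h_{\mathrm{conv}(S)}(M)$. For the reverse inequality, take any $O=\sum_k\lambda_kO_k$ with $O_k\in S$, $\lambda_k\ge0$ and $\sum_k\lambda_k=1$; only finite combinations are needed, by the definition of the convex hull or by Carathéodory in the finite-dimensional real space $\mathcal{B}_{\mathrm{sa}}(\mathcal{H})$. Linearity of $O\mapsto\langle M,O\rangle_{\mathrm{HS}}$ then gives
\begin{equation*}
\langle M,O\rangle_{\mathrm{HS}}=\sum_k\lambda_k\langle M,O_k\rangle_{\mathrm{HS}}\le\max_k\langle M,O_k\rangle_{\mathrm{HS}}\le h_S(M)=\mathcal{C}_M(\Lambda).
\end{equation*}
Taking the supremum over $O$ yields equality.

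There is no real obstacle here. The only points requiring care are checking that the pairing is real-valued on Hermitian operators, and noting that the symmetrisation by $-\mathcal{M}_{\mathcal{G},\Lambda}$ is what turns the absolute value in Eq.~\eqref{eq:capacity-sup} into a plain supremum of a linear functional. As a remark, compactness of $\mathcal{D}(\mathcal{H})$ and continuity imply that $\widetilde{\mathcal{M}}_{\mathcal{G},\Lambda}$ is compact, so the supremum is attained, consistent with the earlier observation that the supremum in Eq.~\eqref{eq:capacity-sup} is a maximum. Attainment is not needed for the identity itself.
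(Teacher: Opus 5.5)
Your proof is correct and is essentially the paper's argument: the symmetrisation $\mathcal{M}_{\mathcal{G},\Lambda}\cup(-\mathcal{M}_{\mathcal{G},\Lambda})$ turns $\sup|\langle M,O\rangle_{\mathrm{HS}}|$ into the supremum of a linear functional, and that supremum does not change when you pass to the convex hull. You also spell out details the paper leaves implicit, namely that the deviations are Hermitian (so the pairing is real) and that only finite convex combinations are needed. Your closing remark on attainment tacitly assumes $\mathcal{G}$ is continuous, which is not guaranteed for a nonlinear resource-destroying map, but as you note it is not used for the identity.
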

The proof of Corollary~\ref{cor:capacity-geometry} is provided in Appendix~\ref{app:proof-cor-5}. 

\begin{figure}[tb]
\includegraphics[width=\linewidth]{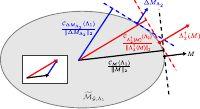}
\caption{Schematic illustration of the geometric picture behind the post-processing bound in Eq.~\eqref{eq:CM-postprocessing} and Corollary~\ref{cor:capacity-geometry}. (See main text for details.)
 \label{fig:support-function}}
\end{figure}

We now describe the geometric picture of the post-processing inequality in Eq.~\eqref{eq:CM-postprocessing}, as illustrated in Fig.~\ref{fig:support-function}. For a non-zero $M$, introduce the unit vector $m:=M/\|M\|_2$
and define the supporting hyperplane
\begin{equation}
H_{\widetilde{\mathcal{M}}_{\mathcal{G}, \Lambda}}(M)
:= \left\{O\in\mathcal{B}_{\mathrm{sa}}(\mathcal{H})\,\middle\vert
\langle M, O\rangle_{\mathrm{HS}}
= h_{\widetilde{\mathcal{M}}_{\mathcal{G}, \Lambda}}(M)\right\}\,.
\end{equation}
Since $\widetilde{\mathcal{M}}_{\mathcal{G},\Lambda}$ is compact and convex, the supremum defining $h_{\widetilde{\mathcal{M}}_{\mathcal{G},\Lambda}}(M)$ is attained, so $H_{\widetilde{\mathcal{M}}_{\mathcal{G}, \Lambda}}(M)$ is indeed a supporting hyperplane.

Then, it follows as a direct consequence of Corollary~\ref{cor:capacity-geometry} that
\begin{align}\label{eq:CM-dist}
\mathcal{C}_M(\Lambda)
&= \|M\|_2\, h_{\widetilde{\mathcal{M}}_{\mathcal{G}, \Lambda}}(m) \\
&= \|M\|_2\,\mathrm{dist}\bigl(0,H_{\widetilde{\mathcal{M}}_{\mathcal{G}, \Lambda}}(M)\bigr)\,,\nonumber
\end{align}
where $\mathrm{dist}(\cdot, \cdot)$ denotes the distance induced by the Hilbert-Schmidt inner product. In other words, $\mathcal{C}_M(\Lambda)$ is proportional to the distance from the origin $0$ to the supporting hyperplane $H_{\widetilde{\mathcal{M}}_{\mathcal{G}, \Lambda}}(M)$.

In Fig.~\ref{fig:support-function}, we depict the supporting hyperplanes (dashed lines) corresponding to $\mathcal{C}_M(\Lambda_1)$ (black), $\mathcal{C}_{\Delta M_{\Lambda_2}}(\Lambda_1)$ (blue), and $\mathcal{C}_{\Lambda_2^\dagger(M)}(\Lambda_1)$ (red) appearing in Eq.~\eqref{eq:CM-postprocessing}. Their distances from the origin equal $\mathcal{C}_M(\Lambda_1)/\|M\|_2$, $\mathcal{C}_{\Delta M_{\Lambda_2}}(\Lambda_1)/\|\Delta M_{\Lambda_2}\|_2$, and $\mathcal{C}_{\Lambda_2^\dagger(M)}(\Lambda_1)/\|\Lambda_2^\dagger(M)\|_2$, respectively.
Thus, the subadditivity of support functions \cite{Rockafellar97} then reads
\begin{equation}
h_{\widetilde{\mathcal{M}}_{\mathcal{G}, \Lambda_1}}(\Lambda_2^\dagger(M))\leq h_{\widetilde{\mathcal{M}}_{\mathcal{G}, \Lambda_1}}(M)+h_{\widetilde{\mathcal{M}}_{\mathcal{G}, \Lambda_1}}(\Delta M_{\Lambda_2})\,.
\end{equation}
This is precisely the geometric content of the post-processing inequality \eqref{eq:CM-postprocessing}.

In addition, Corollary \ref{cor:capacity-geometry} also admits a dual formulation in terms of polar sets, which provides further insights into the meaning of $\mathcal{C}_M(\Lambda)$ as we explain below.
In general, given a subset $\mathcal{X}\subset\mathcal{B}_{\mathrm{sa}}(\mathcal{H})$ that contains the origin, its polar is defined as \cite{Barvinok2002}
\begin{equation}\label{eq:polar-def}
\mathcal{X}^{\circ}:=\left\{M\in\mathcal{B}_{\mathrm{sa}}(\mathcal{H})\,\middle\vert\, \langle M,O\rangle_{\mathrm{HS}}\leq 1\,\, \forall\,O\in\mathcal{X}\right\}\,,
\end{equation}
i.e., it is the intersection of the closed half-spaces $\{M:\langle M,O\rangle_{\mathrm{HS}}\leq 1\}$ generated by $O\in\mathcal{X}$.
Since we work in finite-dimensional settings, the polar is closed and convex, even if $\mathcal{X}$ is not \cite{Barvinok2002}.

\begin{cor}[Polar characterization of $\mathcal{C}_M(\Lambda)$] \label{cor:CM-polar}
Let $M\in\mathcal{B}_{\mathrm{sa}}(\mathcal{H})$. Then
\begin{equation}\label{eq:CM-as-gauge}
\mathcal{C}_M(\Lambda)
= h_{\widetilde{\mathcal{M}}_{\mathcal{G},\Lambda}}(M)
= \inf\left\{t\ge 0\,\middle\vert\,
M\in t\,\widetilde{\mathcal{M}}_{\mathcal{G},\Lambda}^{\circ}\right\}.
\end{equation}
Equivalently, the polar set is the unit ball of the functional $M\mapsto \mathcal{C}_M(\Lambda)$:
\begin{equation}\label{eq:polar-unit-ball}
\widetilde{\mathcal{M}}_{\mathcal{G},\Lambda}^{\circ}
= \left\{M\in\mathcal{B}_{\mathrm{sa}}(\mathcal{H})\,\middle\vert\, \mathcal{C}_M(\Lambda)\le 1\right\}.
\end{equation}
\end{cor}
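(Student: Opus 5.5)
The plan is to derive both statements from Corollary~\ref{cor:capacity-geometry} together with the structure of $K:=\widetilde{\mathcal{M}}_{\mathcal{G},\Lambda}$. I will record three facts about $K$ first. It is convex by construction. It is compact, because it is the convex hull of the union of two compact sets in a finite-dimensional space; each set is the continuous image of the compact set $\mathcal{D}(\mathcal{H})$, granted continuity of $\mathcal{G}$, which is already used implicitly when the supremum in Eq.~\eqref{eq:capacity-sup} is called a maximum. Finally, it is symmetric, $K=-K$, and therefore contains the origin. The polar $K^\circ$ is thus well defined.

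The easier statement, Eq.~\eqref{eq:polar-unit-ball}, comes first. For a linear functional, the supremum over a convex hull equals the supremum over its generating set, so the condition $\langle M,O\rangle_{\mathrm{HS}}\le 1$ for all $O\in K$ holds exactly when it holds for all $O\in\mathcal{M}_{\mathcal{G},\Lambda}\cup(-\mathcal{M}_{\mathcal{G},\Lambda})$. That in turn says $\sup_{O\in K}\langle M,O\rangle_{\mathrm{HS}}\le 1$, i.e. $h_K(M)\le1$. Corollary~\ref{cor:capacity-geometry} identifies this with $\mathcal{C}_M(\Lambda)\le 1$, which proves Eq.~\eqref{eq:polar-unit-ball}.

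For the gauge formula Eq.~\eqref{eq:CM-as-gauge}, I use positive homogeneity of the support function, $h_K(\lambda M)=\lambda h_K(M)$ for $\lambda>0$. For $t>0$ this gives
\begin{equation*}
M\in tK^\circ \iff M/t\in K^\circ \iff h_K(M)\le t \iff \mathcal{C}_M(\Lambda)\le t .
\end{equation*}
Hence $\{t>0: M\in tK^\circ\}=(\mathcal{C}_M(\Lambda),\infty)\cup\{\mathcal{C}_M(\Lambda)\}$ when $\mathcal{C}_M(\Lambda)>0$, and its infimum is $\mathcal{C}_M(\Lambda)$.

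The main obstacle is the boundary case $t=0$ and the convention for $0\cdot K^\circ$. I would adopt the standard convention $0\cdot K^\circ=\{0\}$. Suppose $\mathcal{C}_M(\Lambda)=0$ but $M\neq0$; this can happen because $\mathcal{C}_M(\Lambda)$ is only a seminorm, by Theorem~\ref{thm:capacity-properties}(iii). Then $M\in tK^\circ$ for every $t>0$, so the infimum over $t\ge0$ is still $0$. When $\mathcal{C}_M(\Lambda)>0$, the value $t=0$ is excluded, since $M\neq 0$. So in every case the infimum equals $\mathcal{C}_M(\Lambda)$, although it need not be attained at $t=0$. I would add a remark that $K^\circ$ may be unbounded, namely in the directions $M$ with $\mathcal{C}_M(\Lambda)=0$. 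This is why the result is a gauge (Minkowski functional) formula rather than a norm-ball statement.
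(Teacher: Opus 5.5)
Your proposal is correct and follows essentially the same route as the paper, which unwinds the definition of the polar to obtain $M\in t\,\widetilde{\mathcal{M}}_{\mathcal{G},\Lambda}^{\circ}\iff h_{\widetilde{\mathcal{M}}_{\mathcal{G},\Lambda}}(M)\le t$ and then applies Corollary~\ref{cor:capacity-geometry}. Two small points: your explicit treatment of $t=0$ (with $0\cdot K^{\circ}=\{0\}$, so that for $M\neq 0$ with $\mathcal{C}_M(\Lambda)=0$ the infimum is only approached as $t\to 0^+$) sharpens the paper's one-line justification, which does not separate out $t=0$, and the compactness of your $K$ is never used, so you do not need the continuity assumption on $\mathcal{G}$.
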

The short proof of Corollary \ref{cor:CM-polar} follows from the definition of the polar set in Eq.~\eqref{eq:polar-def} and is provided in Appendix \ref{app:cor-CM-polar}. 
Thus, Corollary~\ref{cor:CM-polar} yields a complementary, dual interpretation of the resource impact functional:
$\mathcal{C}_M(\Lambda)$ is the smallest factor by which one must rescale the observable $M$ so that it belongs to the polar set
$\widetilde{\mathcal{M}}_{\mathcal{G},\Lambda}^{\circ}$, i.e., to the class of normalized observables whose response to all resource-induced output deviations is uniformly bounded. In other words, $M\in\widetilde{\mathcal{M}}_{\mathcal{G},\Lambda}^{\circ}$ if and only if
\begin{equation}
\left|\Tr \big[M\left(\Lambda(\rho)-\Lambda(\mathcal{G}(\rho))\right)\big]\right|\leq 1
\quad \forall\,\rho\in\mathcal{D}(\mathcal{H})\,.
\end{equation}
Accordingly, elements of $\widetilde{\mathcal{M}}_{\mathcal{G},\Lambda}^{\circ}$ may be viewed as bounded sensitivity probes, while $\mathcal{C}_M(\Lambda)$ quantifies the largest magnitude of the yield change induced by the deviations $\Lambda(\rho)-\Lambda(\mathcal{G}(\rho))$.

Because the definition of $\mathcal{C}_M(\Lambda)$ involves an absolute value, $\mathcal{C}_M(\Lambda)$ also admits a description in terms of the absolute polar
\begin{equation}\label{eq:absolute-polar-def}
\mathcal{M}_{\mathcal{G},\Lambda}^{\circ_a} :=\left\{M\in\mathcal{B}_{\mathrm{sa}}(\mathcal{H})\,\middle\vert\,
\left|\langle M,O\rangle_{\mathrm{HS}}\right|\leq 1\,\, \forall\,O\in\mathcal{M}_{\mathcal{G},\Lambda}\right\}\,.
\end{equation}
Geometrically, $\mathcal{M}_{\mathcal{G},\Lambda}^{\circ_a}$ is the intersection of the slabs
$\{M:|\langle M,O\rangle_{\mathrm{HS}}|\leq 1\}$ induced by $O\in\mathcal{M}_{\mathcal{G},\Lambda}$. This is further reflected by the symmetrization in Eq.~\eqref{eq:set-tilde-M}, which implies that the polar of $\widetilde{\mathcal{M}}_{\mathcal{G},\Lambda}$ equals the absolute polar, $\widetilde{\mathcal{M}}_{\mathcal{G},\Lambda}^{\circ}= \mathcal{M}_{\mathcal{G},\Lambda}^{\circ_a}$
and, thus, Eq.~\eqref{eq:CM-as-gauge} holds also with $\widetilde{\mathcal{M}}_{\mathcal{G},\Lambda}^{\circ}$ replaced by $\mathcal{M}_{\mathcal{G},\Lambda}^{\circ_a}$.

Finally, since $\widetilde{\mathcal{M}}_{\mathcal{G},\Lambda}$ is closed, convex, and contains the origin,
the bipolar theorem \cite{Barvinok2002} implies that it can be recovered from its polar, $(\widetilde{\mathcal{M}}_{\mathcal{G},\Lambda}^{\circ})^{\circ} = \widetilde{\mathcal{M}}_{\mathcal{G},\Lambda}$. Thus, while $\widetilde{\mathcal{M}}_{\mathcal{G},\Lambda}$ and $\widetilde{\mathcal{M}}_{\mathcal{G},\Lambda}^{\circ}$ are equivalent descriptions (via polarity), the polar viewpoint can be more convenient for questions posed in observable space, such as certifying that an experimentally accessible class of measurements is insensitive to the resource-induced deviations, or optimizing $\mathcal{C}_M(\Lambda)$ over a constrained family of observables.

\subsubsection{Operational interpretation\label{sec:op-interp}}

In this section, we collect several properties of the resource impact functional $\mathcal{C}_M(\Lambda)$ introduced in Secs.~\ref{subsec:capacity} and~\ref{subsec:properties-CM} into a unified operational interpretation. 

Consider a two-outcome POVM $\{M,\mathbbm{1}-M\}$ with $0\leq M\leq \mathbbm{1}$ and the following binary hypothesis testing task. Hypothesis $H_0$ is that the input state before applying $\Lambda$ (and then measuring $M$) was the resource-free state $\mathcal{G}(\rho)$, while hypothesis $H_1$ is that the input state was the potentially resourceful $\rho$, with prior probabilities $p_0=p_1=1/2$. Moreover, we label the outcome associated with $M$ by $X=1$ and the complementary outcome by $X=0$. Then, for a given pair $(\rho,\mathcal{G}(\rho))$,
\begin{align}
P(X=1|H_0) &= p_0(\rho) := \Tr[M\Lambda(\mathcal{G}(\rho))]\,,\\
P(X=1|H_1) &= p_1(\rho) := \Tr[M\Lambda(\rho)]\,,
\end{align}
such that
\begin{equation}
p_1(\rho) - p_0(\rho) = \Delta Y(\rho)
\end{equation}
with $\Delta Y(\rho)$ as in Eq.~\eqref{eq:yield-change}. Since $M$ is a POVM element in this section, $\mathcal{C}_M(\Lambda)\in [0, 1]$. This contrasts with the more general case where $M$ is an arbitrary observable, for which $\mathcal{C}_M(\Lambda)$ can exceed $1$. 

For fixed $\rho$, and given the measurement $\{M,\mathbbm{1}-M\}$, the optimal single-shot decision rule (for equal priors) is to decide $H_1$ on the outcome that is more likely under $H_1$ than under $H_0$ and $H_0$ otherwise. The corresponding success probability follows as
\begin{equation}
p_{\text{succ}}^{(1)}(\rho)
= \frac{1}{2} + \frac{1}{2}\vert\Delta Y(\rho)\vert\,.
\end{equation}
Thus, the maximal bias above random guessing in distinguishing whether the input state was $\rho$ or its free counterpart $\mathcal{G}(\rho)$ is $|\Delta Y(\rho)|/2$. Optimizing over all input states,
\begin{equation}
p_{\text{succ,max}}^{(1)} = \sup_{\rho\in\mathcal{D}(\mathcal{H})} p_{\text{succ}}^{(1)}(\rho) = \frac{1}{2} + \frac{1}{2}\mathcal{C}_M(\Lambda)\,.
\end{equation}
Equivalently, $\mathcal{C}_M(\Lambda)$ equals twice the maximal improvement over random guessing in distinguishing a resourceful input from its paired free counterpart solely from the statistics of measuring $M$ after applying the channel $\Lambda$.

This leads to a simple dichotomy. First, $\mathcal{C}_M(\Lambda)=0$ if and only if $p_{\text{succ}}^{(1)}(\rho)=1/2$ for all $\rho\in\mathcal{D}(\mathcal{H})$, i.e., independent of the input state, $\Lambda$ followed by $M$ cannot distinguish $\rho$ from $\mathcal{G}(\rho)$. Second, $\mathcal{C}_M(\Lambda)>0$ whenever there exists a $\rho$ such that $p_{\text{succ}}^{(1)}(\rho)>1/2$, i.e., at least one state for which the resource makes a statistically detectable difference in the readout associated with $M$. In this sense, the resource is then operationally relevant for the task defined by $(\Lambda,M)$, in line with the discussion in Sec.~\ref{subsec:capacity}. To summarize, $\mathcal{C}_M(\Lambda)$ answers the binary question of whether a resource under the dynamics can affect the chosen measurement at all, and at the same time quantifies how strongly it can do so.

For $\mathcal{G}$ linear, Theorem~\ref{thm:capacity-G-linear} shows that
\begin{equation}
\mathcal{C}_M(\Lambda) = \|B_{M,\Lambda}\|_{\infty}\,,\quad B_{M,\Lambda} := (\mathrm{id}-\mathcal{G}^\dagger)(\Lambda^\dagger(M))\,.
\end{equation}
The states that are optimal for the single-shot hypothesis test are precisely the eigenstates of $B_{M,\Lambda}$ associated with an extremal eigenvalue (of largest magnitude). Accordingly, for a given triple $(\Lambda,M,\mathcal{G})$, reconstructing (or approximating) $B_{M,\Lambda}$ identifies which state preparations are maximally sensitive to the resource for that specific process and readout. This operational viewpoint is equivalent to the geometric picture in Sec.~\ref{sec:CM-geometry}: fixing $M$ selects a direction in operator space, and the optimal eigenstates correspond to preparations whose deviation $O=\Lambda(\rho)-\Lambda(\mathcal{G}(\rho))$ lies on the supporting face of $\widetilde{\mathcal{M}}_{\mathcal{G},\Lambda}$ in that direction. 

We can lift this discussion to a multi-shot setting, which shows that $\mathcal{C}_M(\Lambda)$ not only determines whether a resource can affect a task at all, but also sets a natural scale for the statistical rate at which repeated experiments can reveal that effect. 
To this end, consider $n$ independent and identically distributed (i.i.d.) repetitions of the above single-shot experiment with the same input state $\rho$. This allows us to upper bound the total error probability $P_{\text{err}}^{(n)}(\rho)$, which is the average probability of making an error over the two hypotheses $H_0$ and $H_1$ when occurring with equal probability in terms of the yield change $|\Delta Y(\rho)|$, as shown in Appendix \ref{app:CM-operat-int}. 

To connect $P_{\text{err}}^{(n)}(\rho)$ to the resource impact functional $\mathcal{C}_M(\Lambda)$, we consider a state $\rho^\ast$ that (approximately) maximizes $|\Delta Y(\rho)|$. In fact, for any $\varepsilon>0$ there exists a $\rho^\ast$ with
\begin{equation}
|\Delta Y(\rho^\ast)| \geq \mathcal{C}_M(\Lambda) - \varepsilon\,.
\end{equation}
For this state, the total error probability is then bounded by (recall Eq.~\eqref{eq:Perr-app} in Appendix \ref{app:CM-operat-int})
\begin{equation}
P_{\text{err}}^{(n)}(\rho^\ast)\leq \exp\left(-\frac{n}{2}\big(\mathcal{C}_M(\Lambda)-\varepsilon\big)^2\right)\,.
\end{equation}
Thus, whenever $\mathcal{C}_M(\Lambda)>0$, there exist preparations $\rho^\ast$ for which the error probability decays exponentially fast in $n$, with an exponent that can be made arbitrarily close to $\mathcal{C}_M(\Lambda)^2/2$. In contrast, if $\mathcal{C}_M(\Lambda)=0$, then $\Delta Y(\rho)=0$ for all $\rho$, so that $p_0(\rho)=p_1(\rho)$ and the outcome distributions under $H_0$ and $H_1$ coincide. In that case no decision rule can outperform random guessing, and hence $P_{\text{err}}^{(n)}(\rho)=1/2$ for all $n$ and all $\rho$.

\subsection{Instantaneous rate\label{subsec:dynamics}}

In the following, we introduce a dynamical version of the resource impact functional to investigate whether a given resource can provide a fixed target yield improvement in given time interval.

Then, the (static) resource impact functional at a given time $t$ is defined analogous to Eq.~\eqref{eq:capacity-sup} with $\Lambda$ replaced by $\Lambda_t$. This provides a time-dependent curve $t\mapsto \mathcal{C}_M(\Lambda_t)$ but not a rate at which the resource can change the target yield associated with $M$. For $\Lambda_t$ with a possibly time-dependent generator $\mathcal{L}_t$ satisfying a time-local master equation, $\mathrm{d}\Lambda_t/\mathrm{d}t = \mathcal{L}_t\circ \Lambda_t$ \cite{HCLA14, SLHK19} and, thus, its adjoint satisfies $\mathrm{d}\Lambda_t^\dagger/\mathrm{d}t = \Lambda_t^\dagger \circ\mathcal{L}_t^\dagger$ (see Appendix \ref{app:time-local} for a brief recap of time-local master equations).
Then, the time derivative of the yield change $\Delta Y(\rho;t)$ in Eq.~\eqref{eq:yield-change} is given by
\begin{equation}
\frac{\mathrm{d}\Delta Y(\rho;t)}{\mathrm{d}t} = \Tr\left[M\mathcal{L}_t(\Lambda_t(\rho-\mathcal{G}(\rho)))\right]
\end{equation}
This allows us to define an instantaneous rate as
\begin{align}\label{eq:Gamma-M-sup}
\Gamma_M(t) &:=\sup_{\rho\in\mathcal{D}(\mathcal{H})}\left\vert \Tr\left[M\mathcal{L}_t(\Lambda_t(\rho-\mathcal{G}(\rho)))\right]\right\vert
\end{align}
which describes the largest possible change per unit time at a given instant of time of the resource induced yield change associated with $M$. 
Furthermore, in case $\mathcal{G}$ is linear and CPTP, we have
\begin{equation}
\Gamma_M(t)  = \|(\text{id}-\mathcal{G}^\dagger)(\Lambda^\dagger_t(\mathcal{L}_t^\dagger(M)))\|_{\infty}\quad \text{($\mathcal{G}$ linear CPTP)}
\end{equation}
Note that $\Gamma_M(t)$ is \textit{not} the same as the time derivative of $\mathcal{C}_{M}(\Lambda_t)$ since the absolute value and the operator norm are in general non-smooth and, thus, $\mathcal{C}_{M}(\Lambda_t)$ may not be everywhere differentiable. In addition it may not equal $\Gamma_M(t)$ when its time derivative exists. 
Still, we can derive a variation bound for $\mathcal{C}_M(\Lambda_t)$:

\begin{thm}[Variation bound and Dini derivatives]\label{thm:variation-bound}
Let $\mathcal{H}$ be a finite-dimensional Hilbert space, $M\in\mathcal{B}_{\mathrm{sa}}(\mathcal{H})$, and 
$\mathcal{G}:\mathcal{B}(\mathcal{H})\to\mathcal{B}(\mathcal{H})$ a (possibly non-linear) resource-destroying map, and $\mathcal{C}_M(\Lambda_t)$ and $\Gamma_M(t)$ are defined as in Eqs.~\eqref{eq:capacity-sup} and \eqref{eq:Gamma-M-sup}. 
Let $\mathcal{I}\subset\RR$ be an interval and $\{\Lambda_t\}_{t\in\mathcal{I}}$ a family of CPTP maps on $\mathcal{B}(\mathcal{H})$ such that
\begin{itemize}
\item[(i)] there exists a family of bounded linear maps $\{\mathcal{L}_t\}_{t\in\mathcal{I}}$ with
      $t\mapsto\mathcal{L}_t$ strongly measurable and 
      $t\mapsto\|\mathcal{L}_t\|_{1\to1}$ locally integrable on $\mathcal{I}$, and
\item[(ii)] $\{\Lambda_t\}$ solves the time-local master equation
\begin{equation}\label{eq:master-integral}
\Lambda_t = \Lambda_{t_0} + \int_{t_0}^t \mathcal{L}_s\circ\Lambda_s\,ds \quad\forall\,t,t_0\in\mathcal{I}\,.
\end{equation}
\end{itemize}
Then for all $t_1\leq t_2$ in $\mathcal{I}$,
\begin{equation}\label{eq:variation-bound}
\big|\mathcal{C}_M(\Lambda_{t_2})-\mathcal{C}_M(\Lambda_{t_1})\big|\leq\int_{t_1}^{t_2}\mathrm{d}s\,\Gamma_M(s)\,.
\end{equation}
Furthermore, $t\mapsto \mathcal{C}_M(\Lambda_t)$ is absolutely continuous on compact subintervals of $\mathcal{I}$, and its right Dini derivatives satisfy
\begin{equation}\label{eq:Dini}
D^+ \mathcal{C}_M(\Lambda_t)\leq\Gamma_M(t)\,,\quad 
D_+ \mathcal{C}_M(\Lambda_t)\geq -\Gamma_M(t)
\end{equation}
for Lebesgue almost every $t\in\mathcal{I}$. The analogous left-sided bounds hold as well. If $t\mapsto\Gamma_M(t)$ is continuous, then \eqref{eq:Dini} holds for all $t\in\mathcal{I}$.
\end{thm}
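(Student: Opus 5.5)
The plan is to work at the level of the state-resolved quantity and only take the supremum at the end. For fixed $\rho$, set $X_\rho:=\rho-\mathcal{G}(\rho)$, which is a fixed operator independent of $t$; this holds even for non-linear $\mathcal{G}$, because $\mathcal{G}$ acts only on the input. By linearity of each $\Lambda_t$, $\Delta Y(\rho;t)=\Tr[M\Lambda_t(X_\rho)]$. The integral master equation \eqref{eq:master-integral} gives
\begin{equation*}
\Delta Y(\rho;t_2)-\Delta Y(\rho;t_1)=\int_{t_1}^{t_2}\Tr\!\left[M\mathcal{L}_s(\Lambda_s(X_\rho))\right]\mathrm{d}s .
\end{equation*}
Interchanging trace and Bochner integral is justified by boundedness of $M$ and by hypothesis (i): $s\mapsto\mathcal{L}_s\circ\Lambda_s$ is strongly measurable, and $\|\mathcal{L}_s\Lambda_s\|_{1\to1}\le\|\mathcal{L}_s\|_{1\to1}$ because CPTP maps are trace-norm contractive. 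The integrand is bounded pointwise by $\Gamma_M(s)$, by definition \eqref{eq:Gamma-M-sup}. Hence $|\Delta Y(\rho;t_2)-\Delta Y(\rho;t_1)|\le\int_{t_1}^{t_2}\Gamma_M(s)\,\mathrm{d}s$ uniformly in $\rho$.

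Next I will transfer this bound to the suprema using the elementary inequality $|\sup_\rho|f_\rho|-\sup_\rho|g_\rho||\le\sup_\rho|f_\rho-g_\rho|$, applied with $f_\rho=\Delta Y(\rho;t_2)$ and $g_\rho=\Delta Y(\rho;t_1)$. This yields \eqref{eq:variation-bound} directly. One technical point to check is that $\Gamma_M$ is measurable and locally integrable. Local integrability follows from $\Gamma_M(s)\le\|M\|_\infty\|\mathcal{L}_s\|_{1\to1}\sup_\rho\|X_\rho\|_1\le 2\|M\|_\infty\|\mathcal{L}_s\|_{1\to1}$, using $\|X_\rho\|_1\le2$. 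For measurability, I plan to write $\Gamma_M$ as a supremum over a countable dense set of states. This needs continuity of $\rho\mapsto\Tr[M\mathcal{L}_s\Lambda_s(X_\rho)]$, which holds if $\mathcal{G}$ is continuous. If that fails, I will work with the upper integral, or simply state the bound with $\Gamma_M$ replaced by the integrable majorant.

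For absolute continuity, the uniform bound gives $|\mathcal{C}_M(\Lambda_{t_2})-\mathcal{C}_M(\Lambda_{t_1})|\le\int_{t_1}^{t_2}\Gamma_M$ with $\Gamma_M\in L^1_{\mathrm{loc}}$. Absolute continuity of the Lebesgue integral, summed over finitely many disjoint subintervals, then gives absolute continuity on compact subintervals.

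For the Dini derivatives, I will divide the bound by $h>0$:
\begin{equation*}
\frac{\mathcal{C}_M(\Lambda_{t+h})-\mathcal{C}_M(\Lambda_t)}{h}\le\frac1h\int_t^{t+h}\Gamma_M(s)\,\mathrm{d}s ,
\end{equation*}
and the same quotient is bounded below by the negative of this average. Taking $\limsup$ and $\liminf$ as $h\downarrow0$, the Lebesgue differentiation theorem shows that the right-hand average converges to $\Gamma_M(t)$ at every Lebesgue point, hence almost everywhere. This gives \eqref{eq:Dini}. The left-sided versions follow identically using $[t-h,t]$. If $\Gamma_M$ is continuous, every point is a Lebesgue point, so the bounds hold for all $t$.

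The main obstacle is the measure-theoretic bookkeeping rather than the inequality itself: justifying the interchange of the trace with the integral, and the measurability of $\Gamma_M$ for non-linear $\mathcal{G}$. I would first try to handle this through continuity of $\mathcal{G}$ and the countable-supremum argument above.
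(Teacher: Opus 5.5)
Your proposal is correct and follows essentially the same route as the paper: the pointwise-in-$\rho$ identity obtained from \eqref{eq:master-integral} with $X_\rho=\rho-\mathcal{G}(\rho)$ fixed in time, the inequality $\bigl|\sup|f_1|-\sup|f_2|\bigr|\le\sup|f_1-f_2|$, and the Lebesgue differentiation theorem for the Dini bounds. Your majorant $\Gamma_M(s)\le 2\|M\|_\infty\|\mathcal{L}_s\|_{1\to1}$ and your absolute-continuity argument supply details the paper only asserts, and your measurability worry needs no continuity of $\mathcal{G}$: the set $\{\rho-\mathcal{G}(\rho)\}$ lies in a finite-dimensional space and so has a countable dense subset, and for each $s$ the map $X\mapsto|\Tr[M\mathcal{L}_s(\Lambda_s(X))]|$ is continuous (being linear in $X$), so $\Gamma_M$ is a countable supremum of measurable functions.
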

We prove Theorem \ref{thm:variation-bound} in Appendix \ref{app:proof-thm-6}. 

Thus, at any time $t$ where $\Gamma_M(t)$ is finite and continuous, Theorem \ref{thm:variation-bound} implies that all four one-sided Dini derivatives of $t\mapsto\mathcal{C}_M(\Lambda_t)$ are finite and all lie in the interval $[-\Gamma_M(t),\Gamma_M(t)]$.
If at some $t_0$ the upper right and upper left Dini derivatives differ, $D^{+}\mathcal{C}_M(\Lambda_{t_0})\neq D^{-}\mathcal{C}_M(\Lambda_{t_0})$, then $\mathcal{C}_M(\Lambda_t)$ is not differentiable at $t_0$, since an ordinary derivative (if it existed) would have to coincide with all four one-sided Dini derivatives.
Moreover, if $t_0$ is a local maximum of $\mathcal{C}_M(\Lambda_t)$, then the one-sided Dini derivatives obey
\begin{align}
D^{+}\mathcal{C}_M(\Lambda_{t_0})&\leq  0\leq D^{-}\mathcal{C}_M(\Lambda_{t_0})\,,
\end{align}
and similarly for $D_{+}\mathcal{C}_M(\Lambda_{t_0}), D_{-}\mathcal{C}_M(\Lambda_{t_0})$, even if $\mathcal{C}_M(\Lambda_t)$ is not differentiable at $t_0$. In this sense, the Dini derivatives capture the behaviour of the maximal advantage described by $\mathcal{C}_M(\Lambda_t)$ around local extrema, independent of differentiability.

Furthermore, a vanishing $\Gamma_M(t)$ over a given time interval $\mathcal{I}\in [t_1, t_2]$ implies that $\mathcal{C}_M(\Lambda_t)$ stays constant within $\mathcal{I}$. The condition under which $\Gamma_M(t)=0$ follows immediately from the definition of $\Gamma_M(t)$ in Eq.~\eqref{eq:Gamma-M-sup} and is stated in the following Lemma:
\begin{lem}\label{lem:Gamma-M-zero}
Let $M\in\mathcal{B}_{\mathrm{sa}}(\mathcal{H})$, let $\mathcal{G}$ be a resource-destroying map, and let $\Gamma_M(t)$ be defined as in Eq.~\eqref{eq:Gamma-M-sup}. Then, for any $t$,
\begin{equation}\label{eq:cond-Gamma-0}
 \Gamma_M(t)=0 \quad\Leftrightarrow\quad \Lambda_t^\dagger(\mathcal{L}_t^\dagger(M))\in \mathcal{V}_{\mathcal{G}}^\perp\,,
\end{equation}
where 
\begin{equation}
\mathcal{V}_{\mathcal{G}}^\perp: = \left\{O\in\mathcal{B}_{\mathrm{sa}}(\mathcal{H})\,\middle\vert \,\langle O, X\rangle_{\mathrm{HS}} =0\quad \forall X \in \mathcal{V}_{\mathcal{G}}\right\}
\end{equation}
with 
\begin{equation}
\mathcal{V}_{\mathcal{G}}: = \mathrm{span}_{\RR}\left(\left\{\rho - \mathcal{G}(\rho)\,\middle\vert\, \rho\in\mathcal{D}(\mathcal{H})\right\}\right)\subset \mathcal{B}_{\mathrm{sa}}(\mathcal{H})\,.
\end{equation}
\end{lem}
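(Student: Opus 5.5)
The plan is to rewrite the integrand of Eq.~\eqref{eq:Gamma-M-sup} as a Hilbert--Schmidt pairing and then read off both implications directly. Since $\Lambda_t$ and $\mathcal{L}_t$ are linear (only $\mathcal{G}$ may be non-linear), for every $\rho\in\mathcal{D}(\mathcal{H})$ we can move both maps to the Heisenberg picture while leaving $\rho-\mathcal{G}(\rho)$ untouched:
\begin{equation*}
\Tr\left[M\mathcal{L}_t(\Lambda_t(\rho-\mathcal{G}(\rho)))\right]
=\left\langle \Lambda_t^\dagger(\mathcal{L}_t^\dagger(M)),\,\rho-\mathcal{G}(\rho)\right\rangle_{\mathrm{HS}}\,.
\end{equation*}
Set $O_t:=\Lambda_t^\dagger(\mathcal{L}_t^\dagger(M))$. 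We first check that $O_t$ is self-adjoint, so that it can belong to $\mathcal{V}_{\mathcal{G}}^\perp\subset\mathcal{B}_{\mathrm{sa}}(\mathcal{H})$. This holds because $\Lambda_t$ is CPTP and hence Hermiticity preserving, and $\mathcal{L}_t$, as the generator of a Hermiticity-preserving family, is Hermiticity preserving as well; consequently their adjoints are too. We likewise need $\rho-\mathcal{G}(\rho)$ to be self-adjoint, which follows since $\mathcal{G}$ maps states to (free) states. With both arguments self-adjoint, the pairing is real.

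Then $\Gamma_M(t)=\sup_{\rho}|\langle O_t,\rho-\mathcal{G}(\rho)\rangle_{\mathrm{HS}}|$, which is a supremum of nonnegative numbers. It therefore vanishes if and only if $\langle O_t,\rho-\mathcal{G}(\rho)\rangle_{\mathrm{HS}}=0$ for every $\rho\in\mathcal{D}(\mathcal{H})$. For ($\Leftarrow$): if $O_t\in\mathcal{V}_{\mathcal{G}}^\perp$, then since each generator $\rho-\mathcal{G}(\rho)$ lies in $\mathcal{V}_{\mathcal{G}}$, every term in the supremum is zero. For ($\Rightarrow$): if all pairings with the generators vanish, then real-bilinearity of the inner product extends this to all finite real linear combinations, i.e.\ to the whole of $\mathcal{V}_{\mathcal{G}}$, so $O_t\in\mathcal{V}_{\mathcal{G}}^\perp$.

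The only step needing care is the one flagged above: confirming that $O_t$ is Hermitian and that the Heisenberg move is legitimate when $\mathcal{G}$ is non-linear. The move is fine because no adjoint of $\mathcal{G}$ is ever used; $\mathcal{G}$ enters only through the fixed operator $\rho-\mathcal{G}(\rho)$. Hermiticity of $O_t$ needs $\mathcal{L}_t$ to be Hermiticity preserving, which I would justify from Eq.~\eqref{eq:master-integral} by differentiating the Hermiticity-preserving family $\Lambda_t$. If one prefers not to assume this, replacing $O_t$ by its Hermitian part gives the same real part of the pairing and does not change the argument.
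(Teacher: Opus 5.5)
Your proof is correct and follows the paper's argument. You write the integrand as the Hilbert--Schmidt pairing $\langle\Lambda_t^\dagger(\mathcal{L}_t^\dagger(M)),\rho-\mathcal{G}(\rho)\rangle_{\mathrm{HS}}$, obtain $(\Leftarrow)$ by restricting to the generators of $\mathcal{V}_{\mathcal{G}}$, and obtain $(\Rightarrow)$ by extending vanishing on the generators to their real span.

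Your explicit self-adjointness check on $O_t$, which the paper leaves implicit, is a good addition. It only requires $\mathcal{L}_t\circ\Lambda_t$ (the derivative of $\Lambda_t$) to be Hermiticity preserving, not $\mathcal{L}_t$ itself. However, the Hermitian-part fallback would not rescue the statement: a non-Hermitian $O_t$ can never lie in $\mathcal{V}_{\mathcal{G}}^\perp\subset\mathcal{B}_{\mathrm{sa}}(\mathcal{H})$, and $\Gamma_M(t)$ also sees the imaginary part of the pairing.
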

We provide the proof of Lemma \ref{lem:Gamma-M-zero} in Appendix \ref{app:proof-lem-7}. 
Consequently, in case Eq.~\eqref{eq:cond-Gamma-0} is satisfied for all $t\in [t_1, t_2]$, the variation bound in Eq.~\eqref{eq:variation-bound} enforces $|\mathcal{C}_M(\Lambda_{t_2}) - \mathcal{C}_M(\Lambda_{t_1})|=0$. 

In summary, Eq.~\eqref{eq:variation-bound} shows that the change of $\mathcal{C}_M(\Lambda_t)$ between times $t_1$ and $t_2$ is bounded by the time integral of the instantaneous rate $\Gamma_M(t)$. In other words, the maximal resource induced advantage encoded in $\mathcal{C}_M(\Lambda_t)$ cannot increase or decrease faster than allowed by $\Gamma_M(t)$.
We will show in the next section that this leads to a \textit{resource-aware quantum speed limit} on how fast the maximal resource based advantage for the task $M$ can vary. In contrast to standard quantum speed limits \cite{TEDMF13, dCEPH13, MSZ16, DC17}, which bound changes in terms of distances in state space, our bound is explicitly task and resource dependent as it is written directly in terms of the target yield associated with $M$, and free contributions are filtered out through the resource-destroying map $\mathcal{G}$.

\subsection{Time and feasibility bounds\label{subsec:time-bound}}

In the following, we show that Theorem \ref{thm:variation-bound} yields explicit lower bounds on the time required to achieve a prescribed increase in the task specific $\mathcal{C}_M(\Lambda_t)$, i.e., the maximum resource induced change of the yield associated with $M$. Vice versa we can obtain a feasibility bound on the change in $\mathcal{C}_M(\Lambda_t)$ for a \textit{fixed} time budget $T_{\max}$. We summarize these two aspects in the following corollary: 
\begin{cor}[Time and feasibility bounds]\label{cor:time-bound}
Let the assumptions of Theorem \ref{thm:variation-bound} hold, and let
\begin{equation}
\Delta\mathcal{C}_M(\Lambda_{t_2},\Lambda_{t_1}) := \mathcal{C}_M(\Lambda_{t_2}) - \mathcal{C}_M(\Lambda_{t_1})
\end{equation}
for $t_1\leq t_2$ in $\mathcal{I}$. Define the resource radius $R_{\mathcal{G}}$ as in Eq.~\eqref{eq:res-radius} and
\begin{equation}
c_{M,\mathcal{G}} := \|M\|_\infty R_{\mathcal{G}} \leq 2\|M\|_\infty\,.
\end{equation}
Then, for every interval $[t_1,t_2]\subset\mathcal{I}$,
\begin{equation}\label{eq:uniform-bound}
\left|\Delta\mathcal{C}_M(\Lambda_{t_2},\Lambda_{t_1})\right|\leq c_{M,\mathcal{G}}\Delta\tau 
\sup_{s\in [t_1,t_2]}\|\mathcal{L}_s\|_{1\to 1}\,,
\end{equation}
where $\Delta\tau := t_2 - t_1$.
Moreover, for a uniform bound $\|\mathcal{L}_s\|_{1\to 1} \leq L_{\max}$ for all $s\in\mathcal{I}$,
any evolution that achieves a change $\Delta\mathcal{C}_M(\Lambda_{t_2},\Lambda_{t_1})$ must satisfy the
minimal time constraint
\begin{equation}\label{eq:time-QSL}
\Delta\tau \geq\frac{\left|\Delta\mathcal{C}_M(\Lambda_{t_2},\Lambda_{t_1})\right|}{c_{M,\mathcal{G}} L_{\max}}\,.
\end{equation}
Conversely, given a fixed maximal time budget $T_{\max}<\infty$ and a uniform generator bound $\|\mathcal{L}_s\|_{1\to 1} \leq L_{\max}$ on $[t_1,t_1+T_{\max}]$, the achievable change in the advantage is bounded by
\begin{equation}\label{eq:feasibility-bound}
\left|\Delta\mathcal{C}_M(\Lambda_{t_2}, \Lambda_{t_1})\right|\leq T_{\max} c_{M,\mathcal{G}} L_{\max}\,,
\end{equation}
where $0\leq t_2-t_1\leq T_{\max}$. 
\end{cor}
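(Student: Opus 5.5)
The plan is to derive everything from the variation bound of Theorem~\ref{thm:variation-bound}. Eq.~\eqref{eq:variation-bound} gives $|\Delta\mathcal{C}_M(\Lambda_{t_2},\Lambda_{t_1})|\leq\int_{t_1}^{t_2}\Gamma_M(s)\,\mathrm{d}s$. It therefore suffices to show the pointwise estimate
\begin{equation*}
\Gamma_M(s)\leq \|M\|_\infty R_{\mathcal{G}}\,\|\mathcal{L}_s\|_{1\to1}
\end{equation*}
for almost every $s$, and then integrate. The estimate requires no linearity of $\mathcal{G}$.

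\textbf{Pointwise bound.} Fix $\rho\in\mathcal{D}(\mathcal{H})$ and set $X:=\rho-\mathcal{G}(\rho)$, which is self-adjoint since $\mathcal{G}(\rho)$ is a state. Apply H\"older's inequality $|\Tr[MA]|\leq\|M\|_\infty\|A\|_1$ with $A=\mathcal{L}_s(\Lambda_s(X))$. Then use the definition of the induced norm, $\|\mathcal{L}_s(Y)\|_1\leq\|\mathcal{L}_s\|_{1\to1}\|Y\|_1$. Finally use trace-norm contractivity of the linear CPTP map $\Lambda_s$ on self-adjoint (indeed all) operators, $\|\Lambda_s(X)\|_1\leq\|X\|_1$. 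Together these give
\begin{equation*}
|\Tr[M\mathcal{L}_s(\Lambda_s(\rho-\mathcal{G}(\rho)))]|\leq\|M\|_\infty\|\mathcal{L}_s\|_{1\to1}\|\rho-\mathcal{G}(\rho)\|_1\leq \|M\|_\infty\|\mathcal{L}_s\|_{1\to1}R_{\mathcal{G}}.
\end{equation*}
Taking the supremum over $\rho$ yields the estimate. The remark $c_{M,\mathcal{G}}\leq 2\|M\|_\infty$ follows from $R_{\mathcal{G}}\leq\|\rho\|_1+\|\mathcal{G}(\rho)\|_1=2$.

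\textbf{Integration and the two consequences.} Bounding $\|\mathcal{L}_s\|_{1\to1}$ by its supremum over $[t_1,t_2]$ and integrating over an interval of length $\Delta\tau$ gives Eq.~\eqref{eq:uniform-bound}. If the supremum is infinite the statement is trivial, and local integrability from hypothesis (i) of Theorem~\ref{thm:variation-bound} keeps the integral well defined. Under the uniform bound $L_{\max}$ this becomes $|\Delta\mathcal{C}_M|\leq c_{M,\mathcal{G}}L_{\max}\Delta\tau$. Dividing by $c_{M,\mathcal{G}}L_{\max}$ gives Eq.~\eqref{eq:time-QSL}, assuming $c_{M,\mathcal{G}}L_{\max}>0$. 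In the degenerate case where this product vanishes, the bound forces $\Delta\mathcal{C}_M=0$, so the inequality holds with the convention $0/0$ excluded or read as vacuous. Substituting $\Delta\tau\leq T_{\max}$ into the same inequality gives Eq.~\eqref{eq:feasibility-bound}.

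The argument is routine; the only step needing some care is the pointwise bound on $\Gamma_M(s)$. There I must apply contractivity to the Hermitian difference $X$ rather than to a state, which holds for any positive trace-preserving map. I must also keep the estimate valid for non-linear $\mathcal{G}$, which it is, since only the value $\mathcal{G}(\rho)\in\mathcal{D}(\mathcal{H})$ enters.
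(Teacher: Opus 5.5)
Your proof is correct and follows the paper's argument: you bound $\Gamma_M(s)\leq\|M\|_\infty R_{\mathcal{G}}\|\mathcal{L}_s\|_{1\to1}$ via H\"older and the induced trace norm, insert this into the variation bound of Theorem~\ref{thm:variation-bound}, and then rearrange to get the time and feasibility bounds. You also make explicit two points the paper leaves implicit: the trace-norm contractivity of $\Lambda_s$ on the Hermitian difference $\rho-\mathcal{G}(\rho)$, which the paper's second inequality uses without comment, and the degenerate case $c_{M,\mathcal{G}}L_{\max}=0$.
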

The proof of Corollary \ref{cor:time-bound} is provided in Appendix \ref{app:proof-cor-8}.

\begin{figure}[tb]
\centering
\includegraphics[width=0.85\linewidth]{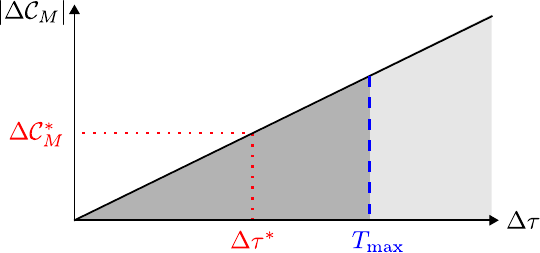}
\caption{Schematic illustration of the uniform time and feasibility bounds in Corollary~\ref{cor:time-bound}. The horizontal axis shows the evolution time $\Delta\tau$ and the vertical axis the change $|\Delta\mathcal{C}_M|$. For a uniform generator bound $\|\mathcal{L}_s\|_{1\to 1}\le L_{\max}$, the light and dark gray regions below the line $|\Delta\mathcal{C}_M| = c_{M,\mathcal{G}}L_{\max}\Delta\tau$ contain all admissible pairs. A time budget $T_{\max}$ limits the achievable change to $|\Delta\mathcal{C}_M|\leq T_{\max}c_{M,\mathcal{G}}L_{\max}$ (dark gray), while a target change $\Delta\mathcal{C}_M^*$ defines a minimal time $\Delta\tau^*$ via Eq.~\eqref{eq:min-time}.
 \label{fig:time-bound}}
\end{figure}

Suppose we want to achieve a fixed target change $\Delta\mathcal{C}_M^*\geq 0$ of $\mathcal{C}_M(\Lambda_t)$. We define the corresponding minimal time as
\begin{align}\label{eq:min-time}
\Delta\tau^*(\Delta\mathcal{C}_M^*) &:= \inf\big\{t_2 - t_1 \geq 0 \,\big|\, t_1,t_2\in\mathcal{I},\nonumber\\
&\qquad\qquad \left|\Delta\mathcal{C}_M(\Lambda_{t_2},\Lambda_{t_1})\right|\geq \Delta\mathcal{C}_M^* \big\}\,.
\end{align}
In words, $\Delta\tau^*(\Delta\mathcal{C}_M^*)$ is the shortest time interval over which the given dynamics can produce a change in $\mathcal{C}_M$ of at least $\Delta\mathcal{C}_M^*$.
The uniform bound in Eq.~\eqref{eq:uniform-bound} then implies a \textit{resource-aware quantum speed limit} in the sense that for any time interval $[t_1,t_2]$,
\begin{equation}
\frac{\left|\Delta\mathcal{C}_M(\Lambda_{t_2},\Lambda_{t_1})\right|}{\Delta\tau}\leq c_{M,\mathcal{G}} \sup_{s\in[t_1,t_2]}\|\mathcal{L}_s\|_{1\to 1}\,,
\end{equation}
where $\Delta\tau := t_2 - t_1$. Thus, the average rate of change of $\mathcal{C}_M$ is bounded from above by a quantity that depends only on the task specified by the tuple $(M,\mathcal{G})$ and on the family $\{\mathcal{L}_t\}_t$ of Lindblad generators. Equivalently, for a fixed target $\mathcal{C}_M^*$ and a uniform generator bound $\|\mathcal{L}_s\|_{1\to 1}\leq L_{\max}$, the minimal time satisfies
\begin{equation}
\Delta\tau^*(\Delta\mathcal{C}_M^*)\geq \frac{\Delta\mathcal{C}_M^*}{c_{M,\mathcal{G}}L_{\max}}\,.
\end{equation}
Conversely, for a given maximal time budget $T_{\max}$ and the same generator bound, Corollary \ref{cor:time-bound} implies
\begin{equation}\label{eq:LCM-uni}
\left|\Delta\mathcal{C}_M(\Lambda_{t_2},\Lambda_{t_1})\right|\leq T_{\max}c_{M,\mathcal{G}}L_{\max}
\end{equation}
for $0\leq t_2-t_1\leq T_{\max}$, such that $T_{\max}$ and $\Delta\tau^*$ play dual roles: $T_{\max}$ constrains \textit{how much} $\mathcal{C}_M$ can change, while $\Delta\tau^*$ constrains \textit{how fast} a prescribed change in $\mathcal{C}_M$ can be reached. 

At the same time, because Eq.~\eqref{eq:uniform-bound} is model independent and uses the worst case supremum $\sup_{s\in[t_1,t_2]}\|\mathcal{L}_s\|_{1\to1}$, the resulting bound can be loose when this supremum is a crude approximation to the actual time dependence of $\Gamma_M(s)$. Tighter, model-dependent bounds can be obtained by working directly with the integral $\int_{t_1}^{t_2}\Gamma_M(s)\,ds$ (see Sec.~\ref{sec:DAM} for an analytical example). 

We illustrate the simple concept underlying the uniform time and feasibility bounds in Fig.~\ref{fig:time-bound}. The horizontal axis shows the time interval $\Delta\tau = t_2 - t_1$ and the vertical axis the achievable change $|\Delta\mathcal{C}_M(\Lambda_{t_2},\Lambda_{t_1})|$. The solid line represents the uniform bound for a generator satisfying $\|\mathcal{L}_s\|_{1\to 1}\le L_{\max}$ as in Eq.~\eqref{eq:LCM-uni}. The light-gray region below this line indicates the admissible pairs $(\Delta\tau,|\Delta\mathcal{C}_M|)$. A finite time budget $T_{\max}$ imposes the additional vertical cut-off at $\Delta\tau = T_{\max}$, such that that the dark-gray rectangle highlights the maximal achievable change $|\Delta\mathcal{C}_M|\le T_{\max} c_{M,\mathcal{G}} L_{\max}$. Conversely, for a fixed target change $\Delta\mathcal{C}_M^*$ (recall Eq.~\eqref{eq:min-time}), the intersection with the bound line defines the minimal required time $\Delta\tau^*$, illustrating the resource-aware speed limit.

A common form of the Lindblad generator $\mathcal{L}_t$ is the GKLS form (see Eq.~\eqref{eq:time-local-master}). In that case, we can use
\begin{equation}
\Gamma_M(t)\leq R_{\mathcal{G}}\|\mathcal{L}_t^\dagger (M)\|_{\infty}
\end{equation}
to obtain an explicit bound on $\Delta\tau^*$ in terms of the Hamiltonian $\widetilde{H}(t) = H_S(t)+H_L(t)$, which consists of the system's Hamiltonian $H_S$ and a Lamb shift contribution $H_L$, and the Lindblad jump operators $L_{k}(t)$. 
Moreover, the Heisenberg adjoint of $\mathcal{L}_t$ reads 
\begin{align}
\mathcal{L}_t^\dagger(M) &= i[\widetilde{H}(t), M]+\sum_k\big[L_k^\dagger(t)ML_k(t)\nonumber\\
&\quad - \frac{1}{2}\{L_k^\dagger(t)L_k(t), M\}\big]
\end{align}
such that
\begin{equation}
\Gamma_M(t)\leq 2R_{\mathcal{G}}\|M\|_{\infty}\left(\|\widetilde{H}(t)\|_{\infty} + \sum_k\|L_k(t)\|_{\infty}^2\right)\,.
\end{equation}
Together with the variation bound Eq.~\eqref{eq:variation-bound} this implies that increasing $\Delta\mathcal{C}_M(\Lambda_{t_2}, \Lambda_{t_1})$ (or, equivalently, lowering the lower bound on $\Delta\tau$) requires either allocating more time or increasing the Hamiltonian or dissipator strengths.

\subsection{Dissecting dynamics into free and resourceful components \label{sec:dissect-dyn}}

In this section, we eventually show that for linear, idempotent resource-destroying maps $\mathcal{G}$, any channel (and its generator) admits a canonical decomposition into a $\mathcal{G}$-commuting ``free'' part and a ``resource'' part. The following theorem explains how we determine the free and resource parts and shows that only the resource part can affect $\mathcal{C}_M(\Lambda_{t})$. 

\begin{thm}[Decomposition with respect to a resource-destroying map]\label{thm:dissect-channel}
Let $\mathcal{H}$ be a finite-dimensional Hilbert space, and let $\mathcal{G}:\mathcal{B}(\mathcal{H})\to\mathcal{B}(\mathcal{H})$ be a linear idempotent map. Let $\{\Lambda_t\}_{t\in[0,T]}$ be a family of CPTP maps with $\Lambda_0=\mathrm{id}$, and let $\mathcal{L}_t$ denote the corresponding time-local Lindblad generator, $\frac{d}{dt}\Lambda_t = \mathcal{L}_t\circ\Lambda_t$.

\begin{itemize}
\item[(i)] For each $t$ there are canonical decompositions
\begin{align}
\Lambda_t &= \Lambda_{t,\mathrm{free}} + \Lambda_{t,\mathrm{res}}\,,\label{eq:Lambda-t-dec}\\
\mathcal{L}_t &= \mathcal{L}_{t,\mathrm{free}} + \mathcal{L}_{t,\mathrm{res}}\,,\label{eq:Lt-dec}
\end{align}
defined by
\begin{align}
\Lambda_{t,\mathrm{free}} &:= \mathcal{G}\circ\Lambda_t\circ\mathcal{G}\,,
\quad\Lambda_{t,\mathrm{res}} := \Lambda_t - \Lambda_{t,\mathrm{free}}\,,\label{eq:channel-decomp}\\
\mathcal{L}_{t,\mathrm{free}} &:= \mathcal{G}\circ\mathcal{L}_t\circ\mathcal{G}\,,
\quad\mathcal{L}_{t,\mathrm{res}} := \mathcal{L}_t - \mathcal{L}_{t,\mathrm{free}}\,.\label{eq:generator-decomp}
\end{align}
These decompositions are uniquely determined by $\mathcal{G}$ and satisfy
\begin{align}
&\Lambda_{t,\mathrm{free}} = \mathcal{G}\circ\Lambda_{t,\mathrm{free}} = \Lambda_{t,\mathrm{free}}\circ\mathcal{G}\,,\label{eq:decomp-properties-1}\\
&\mathcal{G}\circ\Lambda_{t,\mathrm{res}}\circ\mathcal{G} = 0\,,\label{eq:decomp-properties-2}
\end{align}
and analogously
\begin{equation}
\mathcal{L}_{t,\mathrm{free}} = \mathcal{G}\circ\mathcal{L}_{t,\mathrm{free}} = \mathcal{L}_{t,\mathrm{free}}\circ\mathcal{G}\,,\quad \mathcal{G}\circ\mathcal{L}_{t,\mathrm{res}}\circ\mathcal{G} = 0\,.
\end{equation}
\item[(ii)] For any observable $M\in\mathcal{B}_{\mathrm{sa}}(\mathcal{H})$, the resource induced advantage depends only on the resource part of the channel, in the sense that
\begin{equation}
\mathcal{C}_M(\Lambda_t)= \sup_{\rho\in\mathcal{D}(\mathcal{H})} \left|\Tr\left[M\,\Lambda_t(\rho-\mathcal{G}(\rho))\right]\right|= \mathcal{C}_M(\Lambda_{t,\mathrm{res}})\,.
\end{equation}
\item[(iii)] If $\mathcal{G}$ is CPTP, then the free part $\Lambda_{t,\mathrm{free}}$ is CPTP for every $t$.
\end{itemize}
\end{thm}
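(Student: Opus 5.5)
Most of Theorem~\ref{thm:dissect-channel} is algebra using only linearity and idempotency of $\mathcal{G}$. I would prove the three parts in order, with (ii) carrying the real content.

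For (i), the decompositions \eqref{eq:channel-decomp} and \eqref{eq:generator-decomp} are definitions. Uniqueness means only that, once $\mathcal{G}$ is fixed, $\Lambda_{t,\mathrm{free}}$ is determined as the image of $\Lambda_t$ under the linear superoperator map $\mathcal{P}_{\mathcal{G}}:\Phi\mapsto\mathcal{G}\circ\Phi\circ\mathcal{G}$. Idempotency of $\mathcal{G}$ gives $\mathcal{P}_{\mathcal{G}}\circ\mathcal{P}_{\mathcal{G}}=\mathcal{P}_{\mathcal{G}}$. Hence the splitting $\Phi=\mathcal{P}_{\mathcal{G}}(\Phi)+(\Phi-\mathcal{P}_{\mathcal{G}}(\Phi))$ is the direct-sum decomposition along $\mathrm{im}\,\mathcal{P}_{\mathcal{G}}\oplus\ker\mathcal{P}_{\mathcal{G}}$, and such a decomposition is unique. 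The properties \eqref{eq:decomp-properties-1} follow from $\mathcal{G}\circ\mathcal{G}\circ\Lambda_t\circ\mathcal{G}=\mathcal{G}\circ\Lambda_t\circ\mathcal{G}$ and its mirror image. For \eqref{eq:decomp-properties-2}, linearity of $\mathcal{G}$ gives
\[
\mathcal{G}\circ\Lambda_{t,\mathrm{res}}\circ\mathcal{G}=\mathcal{G}\Lambda_t\mathcal{G}-\mathcal{G}\mathcal{G}\Lambda_t\mathcal{G}\mathcal{G}=0 .
\]
The same computation with $\mathcal{L}_t$ in place of $\Lambda_t$ gives the generator statements. Linearity of $\mathcal{G}$ is needed here so that $\mathcal{G}$ distributes over the difference.

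For (ii), the first equality is the definition \eqref{eq:capacity-sup}, rewritten using linearity of $\Lambda_t$. The second equality is where care is required. Because $\Lambda_{t,\mathrm{res}}$ is not CPTP, I would read $\mathcal{C}_M(\Lambda_{t,\mathrm{res}})$ through the same formula $\sup_\rho|\Tr[M\,\Phi(\rho-\mathcal{G}(\rho))]|$, extended to linear $\Phi$. Then it suffices to show that $\Lambda_{t,\mathrm{free}}(\rho-\mathcal{G}(\rho))=0$ for every $\rho$. This holds because
\[
\mathcal{G}\Lambda_t\mathcal{G}(\rho)-\mathcal{G}\Lambda_t\mathcal{G}\mathcal{G}(\rho)=0
\]
by idempotency. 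Consequently $\Lambda_t(\rho-\mathcal{G}(\rho))=\Lambda_{t,\mathrm{res}}(\rho-\mathcal{G}(\rho))$ pointwise in $\rho$, so the two suprema coincide. More abstractly, $\ker\mathcal{G}\supseteq\{\rho-\mathcal{G}(\rho)\}$ and $\Lambda_{t,\mathrm{free}}$ annihilates $\ker\mathcal{G}$. The main obstacle is making this extension of $\mathcal{C}_M$ to non-CPTP linear maps explicit, so that the equality is not misread as a statement about channels.

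For (iii), if $\mathcal{G}$ is CPTP, then $\Lambda_{t,\mathrm{free}}=\mathcal{G}\circ\Lambda_t\circ\mathcal{G}$ is a composition of three CPTP maps. Complete positivity and trace preservation are both closed under composition, so $\Lambda_{t,\mathrm{free}}$ is CPTP for each $t$.
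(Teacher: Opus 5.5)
Your proof is correct and follows the paper's argument. The identities in (i) come from idempotency, with linearity of $\mathcal{G}$ used when composing on the left. Part (ii) rests on the same key step, $\Lambda_{t,\mathrm{free}}(\rho-\mathcal{G}(\rho))=\mathcal{G}\bigl(\Lambda_t(\mathcal{G}(\rho)-\mathcal{G}^2(\rho))\bigr)=0$, and (iii) is closure of CPTP maps under composition. Your uniqueness argument via the idempotent map $\Phi\mapsto\mathcal{G}\circ\Phi\circ\mathcal{G}$, and your explicit note that $\mathcal{C}_M(\Lambda_{t,\mathrm{res}})$ is read through the same formula for a non-CPTP linear map, are slightly sharper versions of what the paper does implicitly with $\mathcal{B}(\mathcal{H})=\mathrm{im}(\mathcal{G})\oplus\ker(\mathcal{G})$ and the four-block decomposition.
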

We provide the proof of Theorem \ref{thm:dissect-channel} in Appendix \ref{app:proof-thm-9}. 

Here it is important to notice that $\Lambda_{t,\mathrm{free}}$ (and similarly $\mathcal{L}_{t,\mathrm{free}}$), as defined in Eq.~\eqref{eq:channel-decomp}, is in general \textit{not} the same as the largest part of $\Lambda_t$ that commutes with $\mathcal{G}$. To see this, recall the decomposition of any linear map $T:\mathcal{B}(\mathcal{H})\to\mathcal{B}(\mathcal{H})$ from Eq.~\eqref{eq:T-decomp}. Writing $\mathcal{G}^\perp:=\mathrm{id}-\mathcal{G}$, we have
\begin{equation}\label{eq:T-decomp-G}
T = \mathcal{G}\circ T\circ\mathcal{G} + \mathcal{G}^\perp\circ T\circ\mathcal{G}^\perp+ \mathcal{G}^\perp\circ T\circ\mathcal{G} + \mathcal{G}\circ T\circ\mathcal{G}^\perp\,.
\end{equation}
The ``block-diagonal'' part
\begin{equation}
\widetilde{T} := \mathcal{G}\circ T\circ\mathcal{G} + \mathcal{G}^\perp\circ T\circ\mathcal{G}^\perp
\end{equation}
preserves both $\mathrm{im}(\mathcal{G})$ and $\ker(\mathcal{G})$ and commutes with $\mathcal{G}$, $[\widetilde T,\mathcal{G}]=0$. In contrast, the cross terms $\mathcal{G}^\perp\circ T\circ\mathcal{G}$ and $\mathcal{G}\circ T\circ\mathcal{G}^\perp$ need not commute with $\mathcal{G}$ for a generic $T$. Thus $\widetilde T$ can be viewed as the projection of $T$ onto the subspace of maps that commute with $\mathcal{G}$ (block-diagonal in the decomposition $\mathcal{B}(\mathcal{H})=\mathrm{im}(\mathcal{G})\oplus\ker(\mathcal{G})$), whereas in Theorem~\ref{thm:dissect-channel} we defined the ``free'' part of $\Lambda_t$ and $\mathcal{L}_t$ as the projection onto the image of $\mathcal{G}$ only:
\begin{equation}
\Lambda_{t,\mathrm{free}} = \mathcal{G}\circ\Lambda_t\circ\mathcal{G}\,,\quad
\mathcal{L}_{t,\mathrm{free}} = \mathcal{G}\circ\mathcal{L}_t\circ\mathcal{G}\,.
\end{equation}

This choice is motivated by operational considerations: $\mathcal{G}\circ T\circ\mathcal{G}$ is, by construction, the unique component of $T$ that acts non-trivially only on the ``free'' sector $\mathrm{im}(\mathcal{G})$. The residual part $T_{\mathrm{res}} := T - \mathcal{G}\circ T\circ\mathcal{G}$ then collects all contributions that involve the resourceful directions $\ker(\mathcal{G})$. In particular, Theorem \ref{thm:dissect-channel} part (ii) shows that $\Lambda_{t,\mathrm{res}}$ contains precisely those ``resourceful'' contributions of $\Lambda_t$ that can generate a non-zero response functional $\mathcal{C}_M(\Lambda_t)$. This is reflected in Eqs.~\eqref{eq:decomp-properties-1} and \eqref{eq:decomp-properties-2}, and yields a clean splitting between free and resourceful parts of the dynamics.

For $\mathcal{G}$ linear, the algebraic decompositions in Eqs.~\eqref{eq:B-dec} and \eqref{eq:Lt-dec} are always well defined. If $\mathcal{G}$ is non-linear, however, $\mathrm{im}(\mathcal{G})$ and $\ker(\mathcal{G})$ are no longer linear subspaces in general, so the direct-sum decomposition \eqref{eq:B-dec} is not available. Nevertheless, for both linear and non-linear idempotent $\mathcal{G}$ one can still define the residual map
\begin{equation}
\widetilde{\Lambda}_{\mathrm{res}} := \Lambda - \Lambda\circ\mathcal{G}\,,
\end{equation}
and a direct calculation using $\mathcal{G}^2=\mathcal{G}$ shows that
\begin{equation}
\mathcal{C}_M(\Lambda) = \mathcal{C}_M(\widetilde{\Lambda}_{\mathrm{res}})\,.
\end{equation}
When $\mathcal{G}$ is linear, Theorem \ref{thm:dissect-channel} part (ii) additionally gives $\mathcal{C}_M(\Lambda) = \mathcal{C}_M(\Lambda_{\mathrm{res}})$, such that
\begin{equation}
\mathcal{C}_M(\Lambda) = \mathcal{C}_M(\widetilde{\Lambda}_{\mathrm{res}}) = \mathcal{C}_M(\Lambda_{\mathrm{res}})\,,
\end{equation}
even though in general $\widetilde{\Lambda}_{\mathrm{res}}\neq \Lambda_{\mathrm{res}}$. Equality of the maps $\widetilde{\Lambda}_{\mathrm{res}},\Lambda_{\mathrm{res}}$ holds if and only if
\begin{equation}
\Lambda\circ\mathcal{G} = \mathcal{G}\circ\Lambda\circ\mathcal{G} = \Lambda_{\mathrm{free}}\,.
\end{equation}
A sufficient condition for this is $\mathcal{G}$-covariance of $\Lambda$, i.e. $\Lambda\circ\mathcal{G} = \mathcal{G}\circ\Lambda$, in which case $\mathcal{G}\circ\Lambda\circ\mathcal{G} = \Lambda\circ\mathcal{G}$.

The two cross terms in \eqref{eq:T-decomp} have natural interpretations in the resource theoretic setting. The block $\mathcal{G}^\perp\circ\Lambda\circ\mathcal{G}$ describes the part of $\Lambda$ that sends free inputs (in $\mathrm{im}(\mathcal{G})$) into resourceful sectors (i.e., elements of $\ker(\mathcal{G})$), while $\mathcal{G}\circ\Lambda\circ\mathcal{G}^\perp$ describes the converse direction. In the terminology of Ref.~\cite{LHL17}, a channel $\Lambda$ is \textit{resource non-generating} if and only if
\begin{equation}
\mathcal{G}\circ\Lambda\circ\mathcal{G} = \Lambda\circ\mathcal{G}\,,
\end{equation}
which is equivalent to
\begin{equation}
\mathcal{G}^\perp\circ\Lambda\circ\mathcal{G} = 0\,.
\end{equation}
Similarly, $\Lambda$ is \textit{resource non-activating} if and only if
\begin{equation}
\mathcal{G}\circ\Lambda\circ\mathcal{G} = \mathcal{G}\circ\Lambda\,,
\end{equation}
which is equivalent to
\begin{equation}
\mathcal{G}\circ\Lambda\circ\mathcal{G}^\perp = 0\,.
\end{equation}
As shown in Ref.~\cite{LHL17}, $\Lambda$ is both resource non-generating and resource non-activating if and only if it is $\mathcal{G}$-covariant, $\Lambda\circ\mathcal{G} = \mathcal{G}\circ\Lambda$.

Finally, even when $\mathcal{G}$ is CPTP, the free component $\Lambda_{t,\mathrm{free}} = \mathcal{G}\circ\Lambda_t\circ\mathcal{G}$ is CPTP, but the residual map $\Lambda_{t,\mathrm{res}} = \Lambda_t - \Lambda_{t,\mathrm{free}}$ need not be CPTP. Likewise, the projected generators $\mathcal{L}_{t,\mathrm{free}}$ and $\mathcal{L}_{t,\mathrm{res}}$ are algebraic objects and are not guaranteed to be physical Lindblad generators in their own right. In Corollary \ref{cor:gen-free-compatibility} we make precise what we mean by $\mathcal{L}_{t,\mathrm{free}}$ being \textit{compatible} with $\Lambda_{t,\mathrm{free}}$ and give a simple criterion. For this, recall that in the time-independent case a linear map $\mathcal{L}$ on $\mathcal{B}(\mathcal{H})$ is a physical Lindblad generator if and only if it generates a quantum dynamical semigroup $\Lambda_t = e^{t\mathcal{L}}$, or equivalently a GKLS master equation \cite{Lindblad1976}. In the time-dependent case, $\mathcal{L}_t$ is typically called physical if it generates a CPTP family via a time-local master equation, or equivalently if the corresponding dynamical map is completely positive divisible \cite{CHMS14, BLPV16} (see also Appendix \ref{app:time-local} for more information).

\begin{cor}\label{cor:gen-free-compatibility}
Let $\Lambda_{t,\mathrm{free}}$ and $\mathcal{L}_{t,\mathrm{free}}$ be defined as in Theorem~\ref{thm:dissect-channel}, and let $\mathcal{G}^\perp := \mathrm{id}-\mathcal{G}$. Then the following are equivalent:
\begin{enumerate}
\item[(i)] $\Lambda_{t,\mathrm{free}}$ solves the time-local master equation
\begin{equation}\label{eq:compatibility-cond-TLME}
\frac{\mathrm{d}}{\mathrm{d} t}\Lambda_{t,\mathrm{free}} = \mathcal{L}_{t,\mathrm{free}}\circ\Lambda_{t,\mathrm{free}} \quad \forall t\in[0,T]
\end{equation}
with $\Lambda_{0,\mathrm{free}}=\mathcal{G}$.
\item[(ii)] The following compatibility condition holds:
\begin{equation}\label{eq:compatibility-glg}
\mathcal{G}\circ\mathcal{L}_t\circ\mathcal{G}^\perp\circ\Lambda_t\circ\mathcal{G} = 0
\quad\forall t\in[0,T]\,.
\end{equation}
\end{enumerate}
In particular, each of the stronger conditions $\mathcal{G}\circ\mathcal{L}_t\circ\mathcal{G}^\perp = 0$ or $\mathcal{G}^\perp\circ\Lambda_t\circ\mathcal{G} = 0$ for all $t\in[0,T]$ is sufficient to guarantee \eqref{eq:compatibility-cond-TLME}.
\end{cor}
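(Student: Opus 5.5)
The plan is a direct computation. I will differentiate $\Lambda_{t,\mathrm{free}}=\mathcal{G}\circ\Lambda_t\circ\mathcal{G}$ and compare the result with $\mathcal{L}_{t,\mathrm{free}}\circ\Lambda_{t,\mathrm{free}}$. The claim is that the two differ by exactly the term appearing in Eq.~\eqref{eq:compatibility-glg}.

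First, the initial condition is automatic. Since $\Lambda_0=\mathrm{id}$ and $\mathcal{G}$ is idempotent, $\Lambda_{0,\mathrm{free}}=\mathcal{G}\circ\mathcal{G}=\mathcal{G}$. Hence condition (i) reduces to the differential equation alone.

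Next I compute the derivative. Because $\mathcal{G}$ is a fixed linear map on the finite-dimensional space $\mathcal{B}(\mathcal{H})$, composing with it on either side commutes with differentiation. This gives
\begin{equation*}
\frac{\mathrm{d}}{\mathrm{d}t}\Lambda_{t,\mathrm{free}} = \mathcal{G}\circ\mathcal{L}_t\circ\Lambda_t\circ\mathcal{G}.
\end{equation*}
I then insert $\mathrm{id}=\mathcal{G}+\mathcal{G}^\perp$ between $\mathcal{L}_t$ and $\Lambda_t$:
\begin{equation*}
\mathcal{G}\circ\mathcal{L}_t\circ\Lambda_t\circ\mathcal{G} = \mathcal{G}\circ\mathcal{L}_t\circ\mathcal{G}\circ\Lambda_t\circ\mathcal{G} + \mathcal{G}\circ\mathcal{L}_t\circ\mathcal{G}^\perp\circ\Lambda_t\circ\mathcal{G}.
\end{equation*}
Using idempotency once more, $\mathcal{L}_{t,\mathrm{free}}\circ\Lambda_{t,\mathrm{free}}=\mathcal{G}\circ\mathcal{L}_t\circ\mathcal{G}\circ\mathcal{G}\circ\Lambda_t\circ\mathcal{G}=\mathcal{G}\circ\mathcal{L}_t\circ\mathcal{G}\circ\Lambda_t\circ\mathcal{G}$. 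Therefore
\begin{equation*}
\frac{\mathrm{d}}{\mathrm{d}t}\Lambda_{t,\mathrm{free}}-\mathcal{L}_{t,\mathrm{free}}\circ\Lambda_{t,\mathrm{free}} = \mathcal{G}\circ\mathcal{L}_t\circ\mathcal{G}^\perp\circ\Lambda_t\circ\mathcal{G}
\end{equation*}
holds for every $t$. Condition (i) says the left-hand side vanishes for all $t\in[0,T]$, and condition (ii) says the right-hand side does. So (i)$\Leftrightarrow$(ii).

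The two sufficient conditions follow at once. If $\mathcal{G}\circ\mathcal{L}_t\circ\mathcal{G}^\perp=0$ or $\mathcal{G}^\perp\circ\Lambda_t\circ\mathcal{G}=0$, then the composite $\mathcal{G}\circ\mathcal{L}_t\circ\mathcal{G}^\perp\circ\Lambda_t\circ\mathcal{G}$ contains a vanishing factor. By the remarks after Theorem~\ref{thm:dissect-channel}, these are the resource non-activating property of $\mathcal{L}_t$ and the resource non-generating property of $\Lambda_t$, respectively.

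The only delicate point is regularity. If $\mathcal{L}_t$ is merely locally integrable, as in Theorem~\ref{thm:variation-bound}, the master equation holds only in the integral form of Eq.~\eqref{eq:master-integral}, and the derivative exists only almost everywhere. In that case I would run the same computation inside the integral. Composing Eq.~\eqref{eq:master-integral} with $\mathcal{G}$ on both sides gives
\begin{equation*}
\Lambda_{t,\mathrm{free}}=\mathcal{G}+\int_0^t \mathcal{G}\circ\mathcal{L}_s\circ\Lambda_s\circ\mathcal{G}\,\mathrm{d}s.
\end{equation*}
Statement (i) then holds in integral form if and only if the integral of the compatibility term vanishes for every $t$. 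That in turn is equivalent to the term vanishing for almost every $t$, or for every $t$ whenever $t\mapsto\mathcal{L}_t$ is continuous. For the differentiable setting actually assumed in the corollary, the pointwise argument above is complete.
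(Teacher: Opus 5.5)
Your proof is correct and follows the paper's argument. Like the paper, you differentiate $\mathcal{G}\circ\Lambda_t\circ\mathcal{G}$, split with $\mathrm{id}=\mathcal{G}+\mathcal{G}^\perp$, and identify the defect term $\mathcal{G}\circ\mathcal{L}_t\circ\mathcal{G}^\perp\circ\Lambda_t\circ\mathcal{G}$, with the initial condition following automatically from idempotency; your extra remark on the integral (almost-everywhere) version under weaker regularity is a valid refinement the paper does not make.
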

The short proof of Corollary \ref{cor:gen-free-compatibility} is shown in Appendix \ref{app:proof-cor-10}. 

Clearly, $\Lambda_t$ being $\mathcal{G}$-covariant, $\Lambda_t\circ\mathcal{G} = \mathcal{G}\circ\Lambda_t$ for all $t$, is a stronger condition than the compatibility condition in Corollary \ref{cor:gen-free-compatibility} and is therefore sufficient to guarantee that $\Lambda_{t,\mathrm{free}}$ and $\mathcal{L}_{t,\mathrm{free}}$ are compatible in the sense of Eq.~\eqref{eq:compatibility-cond-TLME}. In particular, $\mathcal{G}$-covariance implies that $\Lambda_t$ is both resource non-generating and non-activating \cite{LHL17}, such that $\mathcal{G}^\perp\circ\Lambda_t\circ\mathcal{G}=0$ for all $t$. As discussed above, this is already enough to ensure that Eq.~\eqref{eq:compatibility-glg} is satisfied and,  hence, that $\mathcal{L}_{t,\mathrm{free}}$ solves the time-local master equation for $\Lambda_{t,\mathrm{free}}$. More generally, for any family $\{\Lambda_t\}_{t\in[0,T]}$ such that each $\Lambda_t$ is resource non-generating (i.e.~$\mathcal{G}^\perp\circ\Lambda_t\circ\mathcal{G}=0$), the projected generator $\mathcal{L}_{t,\mathrm{free}}$ generates the projected dynamics $\Lambda_{t,\mathrm{free}}$ via Eq.~\eqref{eq:compatibility-cond-TLME}.

We now illustrate the splitting of the dynamics into free and resourceful components introduced in Theorem \ref{thm:dissect-channel} for some well-known Lindblad generators and investigate in each case whether $\Lambda_{t,\mathrm{free}}$ solves the corresponding projected time-local master equation. We present one example where the ``free'' components of the Lindblad generator and the quantum channel are compatible, and one where they are not.

As a concrete setting we consider a single qubit with Hilbert space $\mathcal{H}\cong\CC^2$ and choose as resource-destroying map the dephasing map in the computational basis,
\begin{equation}
\mathcal{G}(\rho) = \sum_{i=0,1}\ket{i}\!\bra{i}\rho\ket{i}\!\bra{i}\,,
\end{equation}
which is CPTP. In Liouville space $\mathcal{H}\otimes\mathcal{H}$ this dephasing map has the matrix representation
\begin{equation}\label{eq:G-dephasing-L}
\mathcal{G} = \frac{1}{2}\bigl(\sigma_z\otimes\sigma_z + \mathbbm{1}_{\mathcal{H}\otimes\mathcal{H}}\bigr)\,,
\end{equation}
where we use row-stacking (see Eq.~\eqref{eq:row-stacking} in Appendix~\ref{app:time-local}) to represent superoperators acting on vectorized density operators $|\rho\rangle\!\rangle$. For notational simplicity we use the same symbols $\mathcal{G}$, $\Lambda_t$ and $\mathcal{L}_t$ for both the maps and their matrix representations.

An example of a time-local generator $\mathcal{L}_t\in\mathcal{B}(\mathcal{H}\otimes\mathcal{H})$ for which $\mathcal{L}_{t,\mathrm{free}}$ is compatible with $\Lambda_{t,\mathrm{free}}$ is
\begin{equation}\label{eq:multi-decay-qubit}
\mathcal{L}_t = \sum_{i\in\{x,y,z\}}\gamma_i(t)\left(\sigma_i\otimes\sigma_i^\mathrm{T} - \mathbbm{1}_{\mathcal{H}\otimes\mathcal{H}}\right)\,,
\end{equation}
with $\gamma_i(t)\ge 0$ for all $t$ and $\sigma_i$ the Pauli matrices in the computational basis. The Lindblad superoperator~\eqref{eq:multi-decay-qubit} describes a qubit subject to three independent decoherence channels~\cite{BLPV16}. Since $\mathcal{G}$ is defined in the same basis, $\mathcal{L}_t$ preserves the decomposition into diagonal and off-diagonal operators, and one finds
\begin{align}
\mathcal{L}_{t,\mathrm{free}} &= (\gamma_x(t)+\gamma_y(t))\Big[\sigma_-\otimes\sigma_- + \sigma_+\otimes\sigma_+ \nonumber\\
&\quad\quad - \frac{1}{2}\big(\sigma_z\otimes\sigma_z + \mathbbm{1}_{\mathcal{H}\otimes\mathcal{H}}\big)\Big]\,.
\end{align}
In particular, $\mathcal{G}\circ\mathcal{L}_t\circ\mathcal{G}^\perp=0$, so by Corollary~\ref{cor:gen-free-compatibility} the projected generator $\mathcal{L}_{t,\mathrm{free}}$ indeed generates the projected dynamics $\Lambda_{t,\mathrm{free}}$.

In contrast, not every projected channel $\Lambda_{t,\mathrm{free}}$ is compatible with its projected generator. A simple example is provided by purely Hamiltonian evolution
\begin{equation}
\mathcal{L}(\rho) = -i[H,\rho]\,,\qquad H = \frac{\Omega}{2}\sigma_x\,.
\end{equation}
For the dephasing map $\mathcal{G}$ in Eq.~\eqref{eq:G-dephasing-L}, any diagonal state $\rho$ satisfies that $[H,\rho]$ is purely off-diagonal, such that 
\begin{equation}
\mathcal{L}_{\mathrm{free}}= \mathcal{G}\circ\mathcal{L}\circ\mathcal{G}= 0\,.
\end{equation}
At the same time, the projected dynamics $\Lambda_{t,\mathrm{free}} = \mathcal{G}\circ\Lambda_t\circ\mathcal{G}$ is non-trivial and can be written in Liouville form as
\begin{align}
\Lambda_{t,\mathrm{free}}&= \frac{1}{2}\cos^2 \left(\frac{\Omega t}{2}\right) (\sigma_z\otimes\sigma_z + \mathbbm{1}\otimes\mathbbm{1}) \nonumber\\
&\quad + \sin^2\left(\frac{\Omega t}{2}\right) (\sigma_-\otimes\sigma_- + \sigma_+\otimes\sigma_+)\,.
\end{align}
Since $\mathcal{L}_{\mathrm{free}} = 0$ but $\Lambda_{t,\mathrm{free}}$ is time-dependent for all $t\neq 2\pi m/\Omega$, $m\in\ZZ$, the projected channel does not satisfy the projected differential equation $\frac{d}{dt}\Lambda_{t,\mathrm{free}} = \mathcal{L}_{\mathrm{free}}\circ\Lambda_{t,\mathrm{free}}$ except at those isolated times. This illustrates that the ``free'' components of generator and channel need not be compatible in general.

\section{Resource theories without resource destroying map\label{sec:RT-wo-RDM}}

Although having a free-state set that forms an affine subset of the state space is a necessary condition for the existence of a linear resource-destroying map, it is not sufficient as there exist affine resource theories that do not admit any such map. In the absence of a resource-destroying map (linear or not), we can no longer use it to define the ``classical'' baseline that we relied on in Sec.~\ref{sec:defs} to quantify how much advantage a channel can extract from a resource. In what follows, we show how the concepts from Sec.~\ref{sec:defs} can be extended to resource theories without a resource-destroying map. This extension, however, comes with a subtle change in perspective and in the interpretation of the analogue of the resource impact functional $\mathcal{C}_M(\Lambda)$, which we discuss below.

We consider a resource theory with a set of free states $\mathcal{F}$, a set of free operations $\mathcal{O}$ (see Appendix~\ref{app:free-set-op}), and a divergence $D(\cdot\|\cdot)$ such that
\begin{equation}
D(\Phi(\rho)\|\Phi(\sigma)) \leq D(\rho\|\sigma) \quad\forall\,\rho,\sigma\in\mathcal{D}(\mathcal{H})\,, \forall\,\Phi\in\mathcal{O}\,,
\end{equation}
i.e., $D$ is contractive under all free operations. Moreover, we assume that $\mathcal{F}$ is compact and that $D(\rho\Vert\sigma)$ is lower semi-continuous in $\sigma$ such that, for each $\rho\in\mathcal{D}(\mathcal{H})$, the minimiser of $D(\rho\|\sigma)$ over $\sigma\in\mathcal{F}$ exists. We denote the set of minimisers by
\begin{equation}
\mathcal{P}(\rho) := \arg\min_{\sigma\in\mathcal{F}} D(\rho\|\sigma)\subset\mathcal{F}\,.
\end{equation}

In analogy to the definition of $\mathcal{C}_M(\Lambda)$ in Eq.~\eqref{eq:capacity-sup}, we then define the resource impact functional ($\mathcal{H}$ finite-dimensional)
\begin{equation}\label{eq:CM-general}
\mathfrak{C}_M(\Lambda) := \max_{\rho\in\mathcal{D}(\mathcal{H})} \max_{\pi(\rho)\in\mathcal{P}(\rho)}\big|\Tr\left[M\left(\Lambda(\rho) - \Lambda(\pi(\rho))\right)\right]\big|\,.
\end{equation}
Here $\pi(\rho)$ is chosen among the free states that are closest to $\rho$ with respect to $D(\cdot\|\cdot)$, and the extra maximisation over $\pi(\rho)\in\mathcal{P}(\rho)$ resolves the possible non-uniqueness of the closest free state. By construction, we adopt an ``optimistic'' convention \footnote{Alternatively, one could replace the inner $\max_{\pi(\rho)\in\mathcal{P}(\rho)}$ by a $\min$ (conservative scenario) or fix a particular selection rule for $\pi(\rho)$. In case the minimiser of $D(\cdot\|\cdot)$ is unique, all these conventions coincide on the level of the resource impact functional. However, for the variation bound in terms of $\Gamma_M(t)$ below it is required to either work with twice a maximum (i.e., over $\rho$ and $\pi(\rho)$) or twice a minimum}: for each $\rho$ we select, among all closest free baselines, the one that yields the largest task advantage. If $\mathcal{P}(\rho)$ consists of a single state for all $\rho$ (for instance under suitable strict convexity assumptions on $D(\cdot\|\cdot)$), the inner minimisation is trivial and $\mathfrak{C}_M(\Lambda)$ reduces to a direct analogue of $\mathcal{C}_M(\Lambda)$ with the resource-destroying map $\mathcal{G}$ replaced by $\rho\mapsto\pi(\rho)$.

In analogy to $\mathcal{C}_M(\Lambda)$ in Eq.~\eqref{eq:capacity-sup}, the functional $\mathfrak{C}_M(\Lambda)$ still pairs each density operator $\rho$ with a corresponding free counterpart. However, its interpretation is slightly different: while the resource-destroying map $\mathcal{G}$ in Sec.~\ref{sec:defs} provides an operationally defined free counterpart $\mathcal{G}(\rho)$ obtained by physically ``removing'' the resource, $\pi(\rho)$ selects a free baseline \textit{geometrically}, as a minimiser of a chosen quantum information divergence $D(\cdot\|\cdot)$. 
Consequently, different choices of $D(\cdot\|\cdot)$ lead to different notions of what it means for a state to be a ``closest'' free counterpart of $\rho$. For example, choosing $\sigma$ to minimise the trace distance $\|\rho-\sigma\|_1/2$ picks the free state that is hardest to distinguish from $\rho$ in a single-shot minimum-error discrimination task, as quantified by the Helstrom bound \cite{Helstrom1976, NielsenChuang, Watrous2018}. In contrast, choosing $\sigma$ to minimise the quantum relative entropy $S(\rho\Vert\sigma) = \Tr[\rho(\ln(\rho-\ln(\sigma))]$ yields a free state that is asymptotically hardest to distinguish from $\rho$ in a many-copy hypothesis testing scenario \cite{HP91, ON00}.

Based on the definition of $\mathfrak{C}_M(\Lambda)$ in Eq.~\eqref{eq:CM-general}, we can introduce an analogue of the instantaneous rate $\Gamma_M(t)$ in Eq.~\eqref{eq:Gamma-M-sup} for the distance-projected setting by
\begin{equation}
\mathfrak{T}_M(t):= \max_{\rho\in\mathcal{D}(\mathcal{H})}\max_{\pi(\rho)\in\mathcal{P}(\rho)}\bigl|\Tr\left[M\,\mathcal{L}_t\big(\Lambda_t(\rho-\pi(\rho))\big)\right]\bigr|.
\end{equation}
Replacing $\rho-\mathcal{G}(\rho)$ by $\rho-\pi(\rho)$ throughout the proof of Theorem \ref{thm:variation-bound}, and taking the maximum over the choice of closest free state $\pi(\rho)\in\mathcal{P}(\rho)$, yields variation and time bounds for $\mathfrak{C}_M(\Lambda_t)$ that are completely analogous to Theorem \ref{thm:variation-bound} and Corollary \ref{cor:time-bound}. Since these modifications are purely notational, we omit the details here, and continue with the discussion of three chemically motivated examples. 

\section{Donor-acceptor model\label{sec:DAM}}

In this section, we illustrate how to compute the resource impact functional $\mathcal{C}_M(\Lambda)$ and the instantaneous advantage rate $\Gamma_M(t)$ for a simple donor-acceptor dimer. This system can be viewed as the minimal non-trivial model of excitonic energy transfer, where an electronic excitation initially localized on the donor, $\ket{D}:=\ket{1_e}\otimes\ket{2_g}$,
can coherently hop to the acceptor, $\ket{A}:=\ket{1_g}\otimes\ket{2_e}$,
which is subject to dephasing and spontaneous emission, where $\ket{1_g} \,(\ket{1_e})$ correspond to the ground (excited) state of the donor site, and similarly $\ket{2_g} \,(\ket{2_e})$ refer to the ground (excited) state of the acceptor site. 
In photosynthetic complexes and model exciton transport networks, the population transferred to an acceptor (or trap) is often interpreted as a transport efficiency or reaction yield \cite{MS10, Thilagam12, OROC14, YC20}. 

We work in the single-excitation subspace with Hilbert space
\begin{equation}
\mathcal{H}=\mathrm{span}_{\mathbbm{C}}\{\ket{g},\ket{D},\ket{A}\}\cong\mathbbm{C}^3,
\end{equation}
where $\ket{g}:=\ket{1_g}\otimes\ket{2_g}$ denotes the electronic ground state and $\ket{D}$ and $\ket{A}$ are the donor and acceptor exciton states, respectively.

We consider the resource theory of coherence in the site basis $\{\ket{D},\ket{A}\}$. Free states are those that are diagonal in this basis, such that the corresponding set of free states is given by
\begin{align}
\mathcal{F} &= \big\{\rho \,\big|\,\rho = p_D\ket{D}\!\bra{D} + p_A\ket{A}\!\bra{A}\,,p_D,p_A\geq 0\,,\nonumber\\
&\quad \qquad p_D+p_A\leq 1 \big\}\,,
\end{align}
while the resource consists of donor-acceptor coherences $\rho_{DA} = \bra{D}\rho\ket{A}\neq 0$. As resource-destroying map we take complete dephasing in the site basis,
\begin{equation}\label{eq:G-DAM}
\mathcal{G}(\rho) = \sum_{i\in\{g,D,A\}}\ket{i}\!\bra{i}\rho\ket{i}\!\bra{i}\,,
\end{equation}
and we use $\mathcal{C}_M(\Lambda)$ and $\Gamma_M(t)$ as defined in Eqs.~\eqref{eq:capacity-sup} and~\eqref{eq:Gamma-M-sup}. For the Lindblad dynamics considered below, no coherence between $\ket{g}$ and the single-excitation manifold is generated and, hence, only the $\{\ket{D},\ket{A}\}$ block contributes. 

As a representative task we choose $M =\ket{A}\!\bra{A}$, such that $\Tr[M\Lambda(\rho)]$ is the acceptor population at the end of the process, i.e., the exciton transfer yield. In this language, $\mathcal{C}_M(\Lambda)$ quantifies the maximal increase in acceptor yield that any initial site coherence can provide, compared to its dephased counterpart $\mathcal{G}(\rho)$.

\subsection{Static coherence advantage from donor-acceptor mixing \label{sec:DAM-coherence}}

We first consider the purely coherent part of the dynamics, described by the excitonic Hamiltonian
\begin{equation}\label{eq:Hex}
H_{\mathrm{ex}}= \frac{\Delta}{2}\left(\ket{D}\!\bra{D}-\ket{A}\!\bra{A}\right)
+ J\left(\ket{D}\!\bra{A}+\ket{A}\!\bra{D}\right)\,,
\end{equation}
with detuning $\Delta$ and electronic coupling $J$. Its eigenstates are $\ket{+} = \cos\theta\ket{D} +\sin\theta\ket{A}$ and $\ket{-}= - \sin\theta\ket{D}+\cos\theta{A}$, where $\theta = \tfrac{1}{2}\arctan(2J/\Delta)$ is the donor-acceptor mixing angle.

The associated unitary channel $\Lambda_\theta(\rho) = U_\theta \rho U_\theta^\dagger$ with
\begin{align}\label{eq:Utheta}
U_{\theta} &= \ket{g}\!\bra{g}
+\cos\theta\bigl(\ket{D}\!\bra{D}+\ket{A}\!\bra{A}\bigr)\nonumber\\
&\quad+\sin\theta\bigl(\ket{A}\!\bra{D} -\ket{D}\!\bra{A}\bigr)
\end{align}
generates coherence in the site basis and is therefore not a free operation in the resource theory of coherence. A straightforward calculation shows that
\begin{equation}\label{eq:CM-Ltheta-main}
\mathcal{C}_{\ket{A}\!\bra{A}}(\Lambda_\theta) = \left\|(\mathrm{id}-\mathcal{G})\Lambda_\theta^\dagger(M)\right\|_\infty = \frac{1}{2}\left|\sin(2\theta)\right|\,,
\end{equation}
which is maximised at $\theta=\pi/4$, corresponding to resonant, maximally delocalised excitonic eigenstates, as expected, and $\mathcal{C}_M(\Lambda_\theta)$ quantifies its maximal magnitude for a non-zero advantage in the exciton transfer task.

\subsection{Dissipative dynamics, $\Gamma_M(t)$, and variation bounds\label{sec:DAM-bounds}}

\begin{figure}[tb]
\centering
\includegraphics[width=0.75\linewidth]{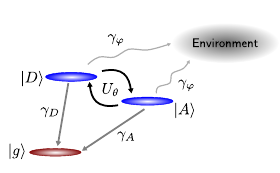}
\caption{Illustration of the three-site donor-acceptor model and the three processes described by the quantum channel in Eq.~\eqref{eq:Lambda-conc} (see text for more details). \label{fig:DAM-schematic}}
\end{figure}

To make the model more realistic, we include dephasing in the excited-state manifold and spontaneous emission to the electronic ground state. At the level of a single effective ``gate time'' $\Delta t$, we model these processes by a phase damping channel $\Lambda_{\mathrm{PD}}$ with parameter $\eta$ and an amplitude damping channel $\Lambda_{\mathrm{AD}}$ with decay probabilities $p_D,p_A$ from $\ket{D}$ and $\ket{A}$ to $\ket{g}$, respectively (explicit Kraus operators are given in Appendix~\ref{app:DAM}). Combined with the coherent donor-acceptor mixing $\Lambda_\theta(\rho)=U_\theta\rho U_\theta^\dagger$, the full one-step process is described by the CPTP map
\begin{equation}\label{eq:Lambda-conc}
\Lambda = \Lambda_{\mathrm{AD}}\circ\Lambda_{\mathrm{PD}}\circ\Lambda_\theta\,.
\end{equation}
For a general POVM element
\begin{equation}\label{eq:M-DAM}
M = \mu_g\ket{g}\!\bra{g} + \mu_D\ket{D}\!\bra{D} + \mu_A\ket{A}\!\bra{A} + \left(\nu\ket{D}\!\bra{A} + \mathrm{h.c.}\right)\,,
\end{equation}
with $0\leq M\leq \mathbbm{1}$, $\mu_i\in\RR$ ($i\in\{g,D,A\}$), and $\nu\in\CC$, one finds that (see Appendix~\ref{app:DAM})
\begin{align}
\mathcal{C}_M(\Lambda) &= \Bigg|\cos(2\theta)\,\eta\,\nu\sqrt{(1-p_D)(1-p_A)}\nonumber\\
&\qquad+\frac{1}{2}\sin(2\theta)\Big[\mu_A(1-p_A)-\mu_D(1-p_D)\nonumber\\
&\qquad +\mu_g(p_A-p_D)\Big]\Bigg|\,.
\end{align}
In particular, for the readout $M=\ket{A}\!\bra{A}$ this reduces to
\begin{equation}
\mathcal{C}_{|A\rangle\!\langle A|}(\Lambda) = \frac{1}{2}(1-p_A)\left|\sin(2\theta)\right|\,,
\end{equation}
i.e., a damped version of Eq.~\eqref{eq:CM-Ltheta-main}, as expected from amplitude damping on the acceptor. 

To analyse the time dependence of coherence-induced advantage, we additionally employ a continuous-time Markovian description and propagate observables in the Heisenberg picture under the adjoint Lindblad generator $\mathcal{L}^\dagger$. Specifically, we consider the semigroup $\Lambda_t^\dagger=\mathrm{exp}(t\mathcal{L}^\dagger)$ satisfying
\begin{equation}\label{eq:DAM-ME1}
\frac{\mathrm{d}}{\mathrm{d}t}\Lambda_t^\dagger(M) = \mathcal{L}^\dagger(\Lambda_t^\dagger(M))\,,
\end{equation}
with
\begin{align}\label{eq:DAM-ME2}
\mathcal{L}^\dagger(M) &= i[H_{\mathrm{ex}},M] + \sum_{k=1}^4\left(L_k^\dagger M L_k - \tfrac{1}{2}\{L_k^\dagger L_k,M\}\right)\,,
\end{align}
and jump operators
\begin{align}\label{eq:L-DAM}
L_1 &= \sqrt{\gamma_{\varphi}}\ket{D}\!\bra{D}\,,\quad
L_2 = \sqrt{\gamma_{\varphi}}\ket{A}\!\bra{A}\,, \nonumber\\
L_3 &= \sqrt{\gamma_D}\ket{g}\!\bra{D}\,,\quad
L_4 = \sqrt{\gamma_A}\ket{g}\!\bra{A}\,.
\end{align}
The finite-time parameters are related to the rates by $\eta(\Delta t)=e^{-\gamma_\varphi \Delta t}$ and $p_j(\Delta t)=1-e^{-\gamma_j \Delta t}$ for $j\in\{D,A\}$. We stress that the CPTP map in Eq.~\eqref{eq:Lambda-conc} and the semigroup $\Lambda_{\Delta t}=e^{\Delta t\mathcal{L}}$ are, in general, not identical for finite $\Delta t$ (since the coherent and dissipative generators do not commute when $J\neq 0$). Instead, Eq.~\eqref{eq:Lambda-conc} can be viewed as a convenient splitting approximation that agrees with $e^{\Delta t\mathcal{L}}$ to first order in $\Delta t$ and becomes exact in the limit of sufficiently small time steps.

The adjoint dynamics preserves the block form of $M$ in Eq.~\eqref{eq:M-DAM}, such that we can parametrise
\begin{align}
\Lambda_t^\dagger(M) &= x_g(t)\ket{g}\!\bra{g}  + x_D(t)\ket{D}\!\bra{D} + x_A(t)\ket{A}\!\bra{A} \nonumber\\
&\quad + (y(t)\ket{D}\!\bra{A} + \mathrm{h.c.})\,,
\end{align}
with $x_g(t),x_D(t),x_A(t)\in\RR$ and $y(t)\in\CC$. For the dephasing map $\mathcal{G}$ in Eq.~\eqref{eq:G-DAM} this implies (see Appendix~\ref{app:DAM} for more details)
\begin{equation}\label{eq:CM-GM-DM-main}
\mathcal{C}_M(\Lambda_t) = |y(t)|\,,\qquad  \Gamma_M(t) = \left|\frac{\mathrm{d}y(t)}{\mathrm{d}t}\right|\,.
\end{equation}
The coupled linear equations governing $x_D(t),x_A(t),y(t)$ are collected in Eq.~\eqref{eq:DA-CDE} in Appendix \ref{app:DAM}, where we also derive explicit solutions in several physically relevant regimes, which are discussed next.

\subsubsection*{Zero dephasing and symmetric lifetimes}

\begin{figure}[tb]
\centering
\includegraphics[width=\linewidth]{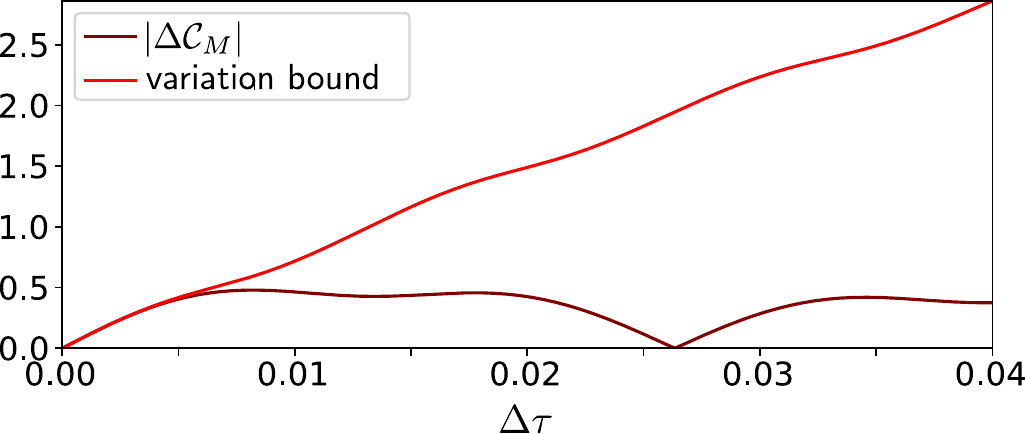}
\caption{Illustration of the coherence-induced task advantage and its variation bound in the donor--acceptor model. 
We plot the exact task advantage $|\Delta\mathcal{C}_M|$ (dark red) and the integrated variation bound $\int_0^t\Gamma_M(s)\,\mathrm{d}s$ (light red) from Theorem \ref{thm:variation-bound} for $M = |A\rangle\!\langle A|$, starting at $t_1=0$, for parameters $\Delta=130$, $J=100$, and $\gamma=5$ (in arbitrary but consistent units). 
 \label{fig:DAM-bounds}}
\end{figure}

As an illustrative example, consider first the case of zero dephasing ($\gamma_{\varphi}=0$) and symmetric lifetimes ($\gamma_D=\gamma_A=\gamma$). Derivation and explicit expressions of $\mathcal{C}_M(\Lambda)$, $\Gamma_M(\Lambda)$ and the corresponding variation bound (recall Theorem \ref{thm:variation-bound}) for general $M$ in Eq.~\eqref{eq:M-DAM} can be found it in Appendix \ref{sec:zero-deph-DA}. Here, we illustrate in Fig.~\ref{fig:DAM-bounds} the behaviour of $|\Delta\mathcal{C}_M(\Lambda_{t_2}, \Lambda_{t_1})|$ (dark red) as well as the integrated variation bound  (red) for $M=\ket{A}\!\bra{A}$. 
For short time intervals the bound $\int_0^t\Gamma_{|A\rangle\!\langle A|}(s)\,\mathrm{d}s$ closely follows $|\mathcal{C}_{|A\rangle\!\langle A|}(\Lambda_t)|$, and in this regime, the bound provides a tight estimate of how rapidly the coherence-induced advantage for acceptor yield can grow. At longer times, however, the oscillatory Rabi dynamics of the dimer leads to modulations in the exact $\mathcal{C}_{|A\rangle\!\langle A|}(\Lambda_t)$, whereas the variation bound is monotonically increasing. The locations of the maxima of $|\Delta\mathcal{C}_{|A\rangle\!\langle A|}|$ indicate the optimal times at which a given amount of coherence yields the largest enhancement of the acceptor population.

\subsubsection*{Zero detuning and finite dephasing}

For zero detuning, $\Delta=0$, but finite dephasing $\gamma_\varphi>0$, the dynamics of the coherence $y(t)$ crosses over from underdamped Rabi-like oscillations to overdamped exponential relaxation as $\gamma_\varphi$ increases. In Appendix~\ref{sec:zero-det-DA} we solve the corresponding equations of motion analytically and obtain explicit expressions for $\mathcal{C}_M(t)$, $\Gamma_M(t)$, and the induced variation bounds \eqref{eq:variation-bound} for a generic POVM element $M$ in Eq.~\eqref{eq:M-DAM} in the three regimes: underdamped ($\gamma_\varphi<4|J|$), critical ($\gamma_\varphi=4|J|$), and overdamped ($\gamma_\varphi>4|J|$). 

\begin{figure}[tb]
\centering
\includegraphics[width=\linewidth]{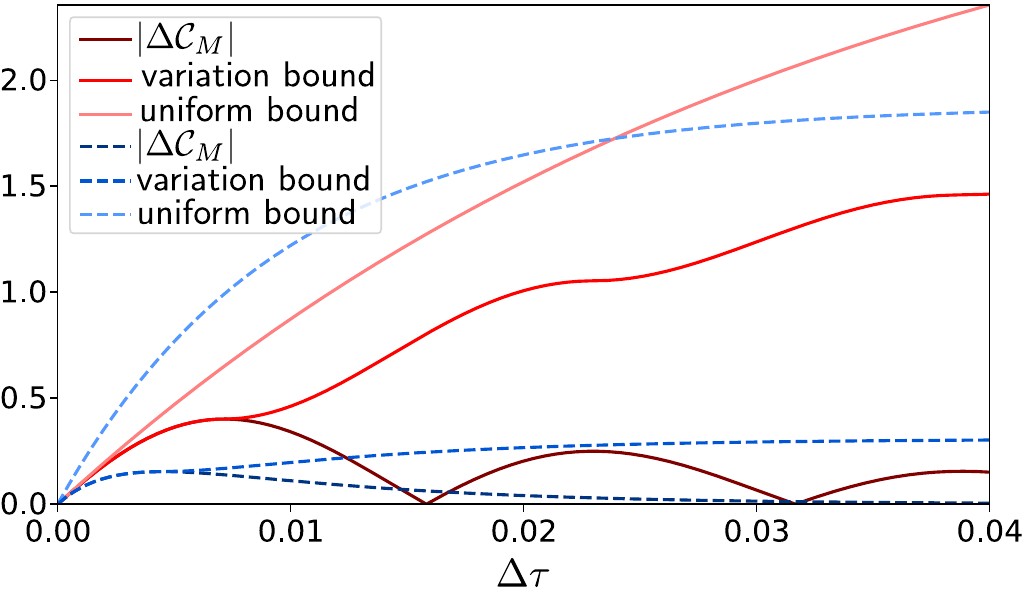}
\caption{Change in the coherence-induced resource impact functional in the donor-acceptor model for zero detuning and $J=100$, $\gamma=5$. We plot the exact change $|\Delta\mathcal{C}_M|$ for $M=\ket{A}\!\bra{A}$ (dark red solid: underdamped with $\gamma_\varphi=50$; dark blue dashed: overdamped with $\gamma_\varphi=500$), the corresponding variation bounds $\int_0^t\Gamma_M(s)\,\mathrm{d}s$ (medium red solid/medium blue dashed), and the analytic uniform bounds from Eq.~\eqref{eq:DC-Zdet} (light red solid/light blue dashed). We choose $t_1=0$ such that $\Delta\tau = t_2$. \label{fig:DAM-bounds-Delta-zero}}
\end{figure}

Fig.~\ref{fig:DAM-bounds-Delta-zero} illustrates the time-dependent change $|\Delta\mathcal{C}_M|$ in the resource impact functional for the special case $M = \ket{A}\!\bra{A}$, both in the underdamped regime (dark red solid curve) and in the overdamped regime (dark blue dashed curve). As representative parameters we choose $J=100$ and $\gamma=5$ (as in Fig.~\ref{fig:DAM-bounds}), together with $\gamma_\varphi=50$ in the underdamped case and $\gamma_\varphi=500$ in the overdamped case. We set $t_1=0$ such that $|\Delta\mathcal{C}_M| = |\mathcal{C}_M(\Lambda_{t_2})-\mathcal{C}_M(\Lambda_{0})|$ reduces to $|\mathcal{C}_M(\Lambda_{t_2})|$, and $\Delta\tau=t_2-t_1=t_2$. 

In Appendix~\ref{sec:zero-det-DA} we further show that, for $M=\ket{A}\!\bra{A}$, the maximal achievable change in the resource impact functional over any interval $[t_1,t_2]$ is bounded from above by simple functions of the system parameters $J$, $\gamma_\varphi$, and $\gamma$. These lead to the analytic uniform bounds given in Eq.~\eqref{eq:DC-Zdet}, which provide closed form upper bounds on $|\Delta\mathcal{C}_M|$ obtained by bounding the integral in Eq.~\eqref{eq:variation-bound}. In Fig.~\ref{fig:DAM-bounds-Delta-zero} these uniform bounds are shown by the light red solid (underdamped) and light blue dashed (overdamped) curves. For comparison, the numerically integrated variation bounds $\int_0^t \Gamma_M(s)\,\mathrm{d}s$ are shown as medium red solid and medium blue dashed curves.

In the underdamped case, the analytic uniform bound remains relatively tight at short times and captures the correct order of magnitude of the oscillatory growth of $|\Delta\mathcal{C}_M|$. In the overdamped case, however, the uniform bound is considerably looser as the coarse estimates used to control the hyperbolic functions in $\Gamma_M(t)$ (see Eqs.~\eqref{eq:v-overdamped}, \eqref{eq:s-overdamped} and \eqref{eq:GM-det-zero}) lead to an upper envelope that overestimates the actual variation bound. 

Finally, as discussed in Appendix~\ref{sec:zero-det-DA}, the uniform bounds on $|\Delta\mathcal{C}_M|$ imply feasibility bounds in Eq.~\eqref{eq:feasibility-bound-Zdet}, which further constrain which target values of $|\Delta\mathcal{C}_M|$ are reachable within the model within a finite time window. In particular, for fixed $J$ and $\gamma$, increasing the dephasing rate $\gamma_\varphi$ not only suppresses the maximal coherence-induced advantage but also shrinks the time interval over which a given $|\Delta\mathcal{C}_M|$ can be realised (see also Fig.~\ref{fig:DAM-bounds-Delta-zero}).
From the exciton transport perspective, site basis coherence can enhance the donor-to-acceptor transfer yield in this minimal dimer, but only within strict quantitative limits set by the coupling strength and the environmental rates. The quantity $\mathcal{C}_M(\Lambda_t)$ identifies the maximal coherence-assisted improvement of the transfer efficiency at time $t$, while $\Gamma_M(t)$ and the associated time and feasibility bounds constrain how rapidly such an advantage can build up or decay.

\section{Conclusions and outlook \label{sec:concl}}

Motivated by the goal of quantifying when quantum resources have an operational impact in chemical processes, we have introduced the resource impact functional $\mathcal{C}_M(\Lambda)$ (recall Eq.~\eqref{eq:capacity-sup}). For a fixed quantum channel $\Lambda$ and observable $M$, $\mathcal{C}_M(\Lambda)$ quantifies the maximal change in the expectation value of $M$ that can be induced by replacing an input state $\rho$ with its resource-free counterpart $\mathcal{G}(\rho)$ in a single application of $\Lambda$. 
This construction is based on a quantum-classical baseline provided by a resource-destroying map $\mathcal{G}$, which pairs each state $\rho$ with the free state $\mathcal{G}(\rho)$ obtained by removing the chosen resource. 
We established various key properties of $\mathcal{C}_M(\Lambda)$, including convexity, continuity, and pre- and post-processing inequalities, and discussed their practical relevance. 
A particular emphasis was placed on comparing $\mathcal{C}_M(\Lambda)$ with alternative, ``unpaired'' constructions that optimise independently over all resourceful and all free inputs (e.g., see Eq.~\eqref{eq:def-PiML}). 
Together with the geometric and operational interpretations in Secs.~\ref{sec:CM-geometry} and \ref{sec:op-interp}, this clarifies the scope of $\mathcal{C}_M(\Lambda)$: for a given process $(\Lambda,M)$, it quantifies the largest resource-induced change in the target signal relative to the paired free baseline $\mathcal{G}(\rho)$. In this way, $\mathcal{C}_M(\Lambda)$ characterises how the resource content of the input can affect the dynamics and, consequently, the chosen figure of merit.
Since not all resource theories admit a resource-destroying map, we also explained in Sec.~\ref{sec:RT-wo-RDM} how the definition of $\mathcal{C}_M(\Lambda)$ can be generalised in such cases, and how this slightly shifts its operational interpretation.

The identification of resource measures with maximal advantages in suitable discrimination tasks is by now standard in quantum resource theory (see, e.g., Refs.~\cite{TR19,NBCPJA20,LBL20}). 
In our setting, $\mathcal{C}_M(\Lambda)$ is precisely twice the maximal single-shot bias above random guessing in distinguishing, using the fixed process $(\Lambda,M)$, whether the input was $\rho$ or its resource-free counterpart $\mathcal{G}(\rho)$. 
This follows directly from standard results in binary hypothesis testing and quantum state discrimination. 
While the functional $\mathcal{C}_M(\Lambda)$ itself appears to be new, its operational interpretation therefore fits naturally into this broader discrimination-based characterisation of quantum resources. 

This perspective also opens the door to further questions, such as trade-offs between different tasks. 
Given two observables $M_1$ and $M_2$, one may ask whether there exists a state that (approximately) maximises the yield gain for both tasks simultaneously. 
In general, this is not expected: for linear $\mathcal{G}$, the operators $B_{M,\Lambda}$ entering $\mathcal{C}_M(\Lambda) = \|B_{M,\Lambda}\|_{\infty}$ (Theorem~\ref{thm:capacity-G-linear}, Sec.~\ref{sec:op-interp}) need not have a common eigenbasis, so the states that optimise $\mathcal{C}_{M_1}(\Lambda)$ and $\mathcal{C}_{M_2}(\Lambda)$ may be incompatible. 
The tools developed here provide a natural language to analyse such incompatibilities of optimal resources across tasks. This is particularly relevant in chemical settings, where different figures of merit, e.g., competing reaction yields, selectivities, or signal contrasts, may be sensitive to different resource characteristics, and any functional role of quantum resources is likely to involve compromises or synergies between more than a single task or resource type.

To bound how quickly a resource can modify a yield, we introduced the instantaneous rate $\Gamma_M(t)$ and showed that integral bounds in terms of $\Gamma_M(t)$ constrain how fast the resource impact functional can change along a given dynamics (Sec.~\ref{subsec:dynamics}), independent of microscopic details. 
In particular, uniform bounds on $\Gamma_M(t)$ lead to lower bounds on the time required to achieve a prescribed change in $\mathcal{C}_M(\Lambda_t)$, providing ``activation time'' guarantees that play the role of resource-aware analogues of quantum speed limits (Sec.~\ref{subsec:time-bound}). It will be interesting in future work to explore these bounds in more complex models, and to investigate whether they can be saturated or tightened for specific classes of dynamics.

Moreover, we introduced a mathematically well-defined and physically motivated decomposition of time-local master equations into free and resourceful components and showed that only the resourceful component $\Lambda_{\mathrm{res}}$ of a channel $\Lambda$ contributes to the resource impact functional, i.e., $\mathcal{C}_M(\Lambda)=\mathcal{C}_M(\Lambda_{\mathrm{res}})$ (see Sec.~\ref{sec:dissect-dyn}). 
This is significant because it singles out, within an arbitrary open system evolution, those parts of the dynamics that are responsible for exploiting the resource in the sense of changing the measurement statistics for $M$. 
This may be particularly valuable in more complicated scenarios where it is not obvious a priori which terms in the generator are responsible for an enhanced yield or a non-zero $\mathcal{C}_M(\Lambda)$.

The donor-acceptor example in Sec.~\ref{sec:DAM} provides a closed form evaluation of the resource impact functional $\mathcal{C}_M(\Lambda_t)$ and the instantaneous rate $\Gamma_M(t)$ within a standard GKSL (Lindblad) description of exciton dynamics. In this setting, $\mathcal{C}_M(\Lambda_t)$ yields a state-independent upper bound on the maximal coherence-induced deviation of a specified readout at time $t$ from its dephased baseline, thereby identifying the time windows in which coherence can be operationally relevant for the task at hand.
Excitation energy transfer models are a particularly instructive because experimentally relevant parameters often lie between the fully coherent and fully incoherent limits: in many molecular aggregates and excitonic network models, the electronic coupling and the effective system-bath interaction are comparable, and bath relaxation can occur on timescales comparable to the transfer dynamics \cite{CF09, IF09-PNAS, IF09-Redfield, IF09-JCP, SIFW10, IF12, CS15, LLHC16, MOAGGS17, JZMTMD26}.
At the same time, this implies that these intermediate coupling and finite bath memory regimes require going beyond the Lindblad description using methods not relying on the Born-Markov approximation, which will be an important direction for future work.

Finally, non-Markovian dynamics and multi-time processes are crucial for realistic chemical applications, where a sequence of control operations, such pulse sequences in ultrafast spectroscopy, interacts with a quantum system. 
To capture such multi-time behaviour, one may extend the present framework from single-step channels to process tensors \cite{PRFPM18, PMRFP18, MPM19, JP19, MM20, dWKLC25, CG25, KSBR25, SCGISN25}. 
As a concrete example, one could construct the process tensor of a model reaction using a numerically exact time-dependent method such as time-dependent density matrix renormalisation group (TD-DMRG), where the process tensor admits a compact matrix product operator representation \cite{JP19, CG25, KSBR25}. 
This would allow one to generalise the resource impact functional to a functional of the process tensor, thereby quantifying how much a given quantum resource can influence not just a single application of $\Lambda$ but an entire class of control sequences acting on the system.
In parallel, the same process tensor could be reconstructed (approximately) from experimental data, thus, extending ideas of quantum process tomography applied to ultrafast spectroscopy signals \cite{YZAG11,YZKMAG11,CM13,PMAG15,GH19}. 
A comparison between the TD-DMRG derived and experimentally reconstructed process tensors would then provide a way of benchmarking model Hamiltonians and open system descriptions of vibrationally coupled electron transfer, such as those studied in Ref.~\cite{WBLCS25}. 
In particular, one could (i) quantify how strongly quantum resources, e.g., coherence and memory effects, contribute to the experimentally observed dynamics, (ii) identify which model ingredients are essential to reproduce the resource impact seen in experiment, and (iii) assess the limitations of approximate Markovian or coarse grained descriptions from the perspective of their ability to capture resource-induced effects.

\begin{acknowledgments}
This research was funded by the National Science Foundation under Grant No.~CHE-$2537080$.
\end{acknowledgments}

\appendix
\section{Recap of quantum resource theories\label{app:QRT}}

In the following, we briefly summarize various concepts from quantum resource theory (QRT) which are used in the main text. There are many good references for QRTs, and we refer in particular to Refs.~\cite{CG19, Gour2025QRT} for more information.

\subsection{Free states and free operations\label{app:free-set-op}}

Suppose only a subset of quantum states can be prepared at negligible cost, referred to as the \textit{free states}. Then, the general idea of a QRT is to partition physically allowed transformations into two classes: those that are implementable essentially for free, called \textit{free operations}, and all other, costly transformations. Free operations are required to be compatible with the free states in the sense that they never generate a resource from a free input. One can specify a QRT starting either from a physically motivated set of free states and then identifying the operations that preserve it, or from a constrained class of operations and then defining the corresponding free states. In particular, the set of free operations may be chosen to be smaller than the maximum set of admissible free operations, referred to as \textit{resource non-generating operations}, that leaves the set of free states invariant. The appropriate starting point depends on the context, and in all cases the sets of free states and free operations must be mutually consistent.

Let $\mathcal{H}$ be a finite-dimensional complex Hilbert space of dimension $d=\mathrm{dim}(\mathcal{H})$, and denote by $\mathcal{D}(\mathcal{H})$ the set of density operators. Since in finite dimensions, positivity of an operator implies its Hermiticity, $\rho\in \mathcal{D}(\mathcal{H})$ if and only if $\Tr[\rho]=1$ and $\rho\geq 0$. Thus,
\begin{equation}
\mathcal{D}(\mathcal{H}) := \left\{ \rho \in \mathcal{B}(\mathcal{H}) \middle\vert \rho \geq 0,\ \Tr[\rho]=1 \right\}\,,
\end{equation}
where $\mathcal{B}(\mathcal{H})$ denotes the set of bounded linear operators on $\mathcal{H}$. 

For systems $A$ and $B$ with Hilbert spaces $\mathcal{H}_A$ and $\mathcal{H}_B$, we write $\mathcal{D}_A \equiv \mathcal{D}(\mathcal{H}_A)$ and $\mathcal{D}_B \equiv \mathcal{D}(\mathcal{H}_B)$. A set of free states is a subset $\mathcal{F}_A \subseteq \mathcal{D}_A$ (and similarly $\mathcal{F}_B \subseteq \mathcal{D}_B$).
Free operations are modelled as completely positive trace-preserving (CPTP) linear maps between operator algebras. Let
\begin{equation}
\mathrm{CPTP}_{A\to B} := \left\{ \Lambda:\mathcal{B}(\mathcal{H}_A)\to\mathcal{B}(\mathcal{H}_B)\ \text{linear, CPTP}\right\}\,.
\end{equation}
A collection $\mathcal{O}_{A\to B}\subseteq \mathrm{CPTP}_{A\to B}$ is a valid class of free operations (relative to the sets of free states $\mathcal{F}_A,\mathcal{F}_B$) if it satisfies:
\begin{itemize}
\item[(i)] inclusion of the identity map: $\mathrm{id}_A \in \mathcal{O}_{A\to A}$, 
\item[(ii)] closure under composition: if $\Lambda_1\in\mathcal{O}_{A\to C}$ and $\Lambda_2\in\mathcal{O}_{C\to B}$, then $\Lambda_2\circ \Lambda_1\in\mathcal{O}_{A\to B}$,
\item[(iii)] preservation of the set of free states: $\Lambda(\rho)\in\mathcal{F}_B$ for all $\rho\in\mathcal{F}_A$ and all $\Lambda\in\mathcal{O}_{A\to B}$.
\end{itemize}
A QRT is then specified by the pair $(\mathcal{F},\mathcal{O})$ of free states and free operations.
In addition, most resource theories also impose that the trace map $\Tr: \mathcal{B}(\mathcal{H})\to\CC$ is contained in the set of free operations. Furthermore, the set of resource non-generating operations is defined as
\begin{equation}
\mathcal{O}^{(\max)}_{A\to B}:= \left\{
\Lambda\in\mathrm{CPTP}_{A\to B}\,\middle\vert\,\forall\rho\in\mathcal{F}_A: \Lambda(\rho)\in\mathcal{F}_B
 \right\}
\end{equation}
and we may choose $\mathcal{O}_{A\to B}\subseteq \mathcal{O}^{(\max)}_{A\to B}$. Conversely, the minimal set $\mathcal{F}_{A}^{(\min)}$ of free states compatible with a fixed class $\mathcal{O}_{A\to B}$ is given by
\begin{equation}
\mathcal{F}_{A}^{(\min)}:= \left\{\rho\in\mathcal{D}_A\,\middle\vert\,\forall\sigma\in \mathcal{D}_A\,\exists\Lambda\in \mathcal{O}_{A\to A}: \rho = \Lambda(\sigma)\right\}\,.
\end{equation}

In addition to the above minimal properties, many $(\mathcal{F},\mathcal{O})$ admit additional structures. Following Ref.~\cite{CG19}, a QRT $(\mathcal{F},\mathcal{O})$ is called \emph{convex} if, for any systems $A,B$, the set of free operations $\mathcal{O}(A\to B)$ is convex, i.e., $p\Phi+(1-p)\Lambda\in\mathcal{O}(A\to B)$ for all $\Phi,\Lambda\in\mathcal{O}(A\to B)$ and $p\in[0,1]$.
In the formulation where free states are defined as free preparations $\mathcal{F}(\mathcal{H})=\mathcal{O}(\CC\to \mathcal{H})$ from a trivial system $\CC$, convexity of $\mathcal{O}$ implies that each $\mathcal{F}(\mathcal{H})$ is convex, while the converse need not hold.
Moreover, a resource theory $(\mathcal{F},\mathcal{O})$ is called \textit{affine} if, for any systems $A,B$, every CPTP map that can be expressed as an affine combination of free operations is itself free, i.e., if $\Phi=\sum_i \alpha_i \Phi_i$ with $\Phi_i\in\mathcal{O}(A\to B)$ and $\sum_i\alpha_i=1$, and if $\Phi$ is CPTP, then $\Phi\in\mathcal{O}(A\to B)$.
This condition implies that the free-state set is closed under affine combinations that yield a valid density operator, but this property alone does not imply that the QRT is affine.
Many widely studied QRTs, including entanglement, coherence, asymmetry, and athermality, are convex, whereas entanglement theory provides a canonical example of a convex but non-affine QRT.

\subsection{Resource destroying maps\label{app:RDP}}

Many resource theories come with resource destroying maps, which play a major role in the main text as they allow us to define a clear classical baseline with respect to which we can compare the effect of a given resource on the yield for a process described by a quantum channel $\Lambda$. 

Let $(\mathcal{F}, \mathcal{O})$ describe a QRT with $\mathcal{F} = \mathcal{F}(\mathcal{H})$. Then, a map $\mathcal{G}:\mathcal{B}(\mathcal{H})\to\mathcal{B}(\mathcal{H})$ is called resource destroying if 
\begin{itemize}
\item[(i)] $\mathcal{G}(\rho)=\rho$ for all $\rho\in\mathcal{F}$, and
\item[(ii)] $\mathcal{G}(\rho) \in\mathcal{F}$ for all $\rho\in\mathcal{D}(\mathcal{H})$.
\end{itemize}
While $\mathcal{G}$ may be linear for convex resource theories, it must be non-linear for non-convex ones \cite{LHL17}. Consequently, in non-convex QRTs $\mathcal{G}$ cannot be a quantum channel. Necessary and sufficient conditions for the existence and uniqueness of a \textit{linear} resource-destroying map have been derived in Ref.~\cite{Gour17}. In particular, for a linear $\mathcal{G}$ to exist, the set of free states $\mathcal{F}$ has to be affine, while the converse does not hold and not every affine resource theory has a linear resource-destroying map \cite{Gour17}. 
Furthermore, condition (ii) implies that $\mathcal{G}$ is idempotent, i.e., $\mathcal{G}\circ\mathcal{G} = \mathcal{G}$. 

Two canonical examples of resource destroying maps are the unitary group twirl for asymmetry and the dephasing map for coherence. 
For asymmetry, the resource destroying map is given by the corresponding group twirling operation. For a compact group $G$ with unitary representation $\Pi: G\to \text{U}(\mathcal{H})$ on a Hilbert space $\mathcal{H}$ and uniform Haar measure $\mathrm{d}\mu(g)$, 
\begin{equation}
\mathcal{G}_{\text{asy}}(\cdot) = \int\mathrm{d}\mu(g)\, U(g)(\cdot) U^\dagger(g)\,.
\end{equation}
For finite groups, the above integral is replaced by a discrete sum according to
\begin{equation}
\mathcal{G}_{\text{asy}}(\cdot) =\frac{1}{|G|}\sum_{g\in G} U(g)(\cdot) U^\dagger(g)\,.
\end{equation}

For coherence, we fix an orthonormal basis $\mathcal{B}=\{\ket{i}\}_{i=1}^d$ of $\mathcal{H}$ and take the free set $\mathcal{F}(\mathcal{H})$ to be the diagonal density operators in that basis. The associated resource destroying map is the dephasing channel, which acts on $\rho\in\mathcal{D}(\mathcal{H})$ as
\begin{equation}
\mathcal{G}_{\text{coh}}(\rho) = \sum_{i=1}^d  \rho_{ii}\ket{i}\!\bra{i}\,,\quad \rho_{ii} = \bra{i}\rho\ket{i}\,\forall i\in [d]\,.
\end{equation}
The map $\mathcal{G}_{\mathrm{coh}}$ is CPTP, unital, idempotent, and Hermitian with respect to the Hilbert–Schmidt inner product. 

While the two resource-destroying maps $\mathcal{G}_{\mathrm{asym}}$ and $\mathcal{G}_{\mathrm{coh}}$ are both self-adjoint, not every resource-destroying map satisfies this property. An example of a resource theory with a linear resource-destroying map that is not self-adjoint is the singleton free set $\mathcal{F} = \{\sigma\}$, for which the replacement channel
\begin{equation}\label{eq:replacement-channel}
\mathcal{G}(\rho)\equiv \Lambda_\sigma(\rho) := \Tr[\rho]\,\sigma
\end{equation}
is a valid (and unique) resource-destroying map. 
Its Hilbert-Schmidt adjoint is $\Lambda_\sigma^\dagger(A) = \Tr[A\sigma]\,\mathbbm{1}$, such that $\Lambda_\sigma$ is self-adjoint if and only if $\sigma = \mathbbm{1}/d$ with $d=\dim(\mathcal{H})$.

\section{Recap of time-local master equations\label{app:time-local}}

In this section, we briefly recall time-local master equations and indicate when they exist. A (reduced) quantum dynamical evolution is described by a family of linear maps $\{\Lambda_t\}_{t\geq 0}$ acting on density operators with $\Lambda_0=\mathrm{id}$ and $\rho(t) = \Lambda_t(\rho(0))$. The evolution is called time local if the density operator $\rho(t)$ of the system satisfies
\begin{equation}
\frac{\mathrm{d}\rho(t)}{\mathrm{d}t}=\mathcal{L}_t\left(\rho(t)\right)\,,
\end{equation}
with a (possibly time-dependent) time-local generator $\mathcal{L}_t$. Two standard Markovian regimes are then distinguished \cite{RHP10, CK10, HCLA14, BLPV16, dVA17, SLHK19}:
\begin{itemize}
\item[(i)] a time-independent Lindblad generator $\mathcal{L}_t\equiv\mathcal{L}$, yielding a quantum dynamical semigroup
\begin{equation}
\Lambda_{t,0}=e^{t\mathcal{L}}\,;
\end{equation}
\item[(ii)] a general time-local generator $\mathcal{L}_t$ of Gorini-Kossakowski-Lindblad–Sudarshan (GKLS) form at each time $t$, commonly referred to as time-inhomogeneous Markovian dynamics.
\end{itemize}

A convenient microscopic setting starts from a closed system-environment Hamiltonian
\begin{equation}
H = H_S+H_E+\kappa H_I\in\mathcal{B}(\mathcal{H}_S\otimes\mathcal{H}_E)\,,
\end{equation}
where $H_S$ and $H_E$ are the system and environment Hamiltonians acting only non-trivially on the system and environment Hilbert spaces $\mathcal{H}_S, \mathcal{H}_E$, respectively, $H_I$ describes their interaction, and $\kappa$ is a coupling parameter. In the weak-coupling (Born-Markov) regime together with a secular (rotating wave) approximation, one obtains the time-independent GKLS master equation for the reduced state $\rho(t)$ (setting $\hbar=1$):
\begin{equation}\label{eq:GKLS}
\frac{\mathrm{d}\rho(t)}{\mathrm{d}t} = -i[\widetilde{H},\rho]
+\sum_k \gamma_k\left(L_k\rho L_k^\dagger-\frac{1}{2}\left\{L_k^\dagger L_k,\rho\right\}\right)\,,
\end{equation}
where $\widetilde{H}:=  H_S+H_L$ includes the Lamb-shift Hamiltonian $H_L$ which, in the weak-coupling limit, commutes with $H_S$ and encodes the environment-induced renormalization of the system's energy levels. Moreover, $\gamma_k\geq 0$ are the so-called decoherence rates, and the Lindblad operators $L_k$ are determined by $H_I$ and the bath correlation functions. Furthermore, Eq.~\eqref{eq:GKLS} is commonly expressed in the form
\begin{equation}
\frac{\mathrm{d}\rho(t)}{\mathrm{d}t} = -i[\widetilde{H},\rho(t)]+\mathcal{D}(\rho(t)) = \mathcal{L}(\rho(t)),
\end{equation}
where $\mathcal{D}(\cdot)$ is the dissipator and $\mathcal{L}(\cdot)$ the time-independent Lindblad generator. Moreover, $\mathcal{L}$ is the Liouvillian superoperator acting on the vectorized density operator $|\rho\rangle\!\rangle$, where we use the convention (row-stacking)
\begin{equation}\label{eq:row-stacking}
\mathrm{vec}\!\left(A\rho B\right)=\left(A\otimes B^{\mathrm T}\right)|\rho\rangle\!\rangle\,.
\end{equation}
Then, the Lindblad generator in Liouvillian space, $\mathcal{H}_S\otimes\mathcal{H}_S$, reads
\begin{align}
\mathcal{L} &= -i\left(\widetilde{H}\otimes \mathbbm{1} - \mathbbm{1}\otimes \widetilde{H}^{\mathrm{T}} \right)+ \sum_k \gamma_k\bigg[
L_k \otimes L_k^{*} \nonumber\\
&\qquad  - \frac{1}{2}\left(L_k^\dagger L_k\otimes \mathbbm{1}
+\mathbbm{1}\otimes \left(L_k^\dagger L_k \right)^\mathrm{T}\right)
\bigg]\,.
\end{align}
This operator representation allows one to directly read off the Heisenberg adjoint $\mathcal{L}^\dagger(\cdot)$ with respect to the Hilbert-Schmidt inner product as used in the main text. 

In general, the reduced time evolution of a system is described by a family of \textit{dynamical maps}. Under the assumption that the joint system-environment density operator $\rho_{SE}\in\mathcal{D}(\mathcal{H}_S\otimes\mathcal{H}_E)$ at time $t=0$ is a product state, i.e., $\rho_{SE}(0) = \rho(0)\otimes \rho_E(0)$, the dynamical map $\Lambda_t$ can be defined by tracing out the environmental degrees of freedom of $\rho_{SE}(t)$ under unitary time evolution leading to
\begin{equation}
\rho(t) = \Lambda_t\left(\rho(0)\right)\,,
\end{equation}
and $\Lambda_t$ is trace preserving and completely positive and, thus, a valid quantum channel. 
A dynamical process on a time interval $[0,T]$ is thus described by a one-parameter family $\{\Lambda_t\}_{t\in [0, T]}$ of dynamical maps $\Lambda_t$. 
It can be shown that differentiability of $t\mapsto \Lambda_t$ and existence of the inverse map $\Lambda_t^{-1}$ for all $t\in[0,T]$ ensure the existence of a time-local master equation with generator \cite{HCLA14}
\begin{equation}
\mathcal{L}_t = \left(\frac{\mathrm{d}}{\mathrm{d}t}\Lambda_t\right)\circ \Lambda_t^{-1}\,,
\end{equation}
i.e.,
\begin{align}\label{eq:time-local-master}
\frac{\mathrm{d}\rho(t)}{\mathrm{d}t} & = \mathcal{L}_t(\rho(t))\\
&= -i[\widetilde{H}(t),\rho(t)]
+\sum_k \gamma_k(t)\Big(L_k(t)\rho L_k^\dagger(t)\nonumber\\
&\quad-\frac{1}{2}\left\{L_k^\dagger(t) L_k(t),\rho(t)\right\}\Big)\,,\nonumber
\end{align}
where in contrast to Eq.~\eqref{eq:GKLS}, $\widetilde{H}, \gamma_k, L_k$ are in general time-dependent. Unlike the time-independent GKLS case \eqref{eq:GKLS}, the general time-local master equation \eqref{eq:time-local-master} does not necessarily correspond to a dynamical semigroup anymore since $\mathcal{L}_t$ is time-dependent. Moreover, the decoherence rates $\gamma_k(t)$ may be non-positive for intermediate times. 

A commonly used stronger notion of quantum Markovianity is \emph{complete-positivity (CP) divisibility}.
A process $\{\Lambda_t\}_{t\in[0,T]}$ is called CP divisible if for all $t\ge s$ there exists a CPTP map $V_{t,s}$ such that
\begin{equation}
\Lambda_t = V_{t,s}\circ \Lambda_s .
\end{equation}
If $\Lambda_s$ is invertible as a linear map, then $V_{t,s}$ is uniquely determined and can be written as
$V_{t,s}=\Lambda_t\circ \Lambda_s^{-1}$. Moreover, invertibility alone does not guarantee that $V_{t,s}$ is completely positive, since $\Lambda_s^{-1}$ need not be completely positive. 
For sufficiently regular finite-dimensional dynamics, CP divisibility is equivalent to the time-local generator $\mathcal{L}_t$ being of GKLS form at each time $t$, i.e., to the non-negativity of the decoherence rates $\gamma_k(t)$ for all $t$ \cite{BLPV16}.

\section{Proof of Theorem \ref{thm:capacity-G-linear} \label{app:proof-thm-1}}

\begin{proof}
For $\mathcal{G}$ linear, $\Delta Y = |\Tr[\widetilde{X}\rho]|$ with $\widetilde{X}:=(\text{id}-\mathcal{G}^\dagger)(\Lambda^\dagger(M))$ is convex in $\rho$ (but not necessarily affine, as otherwise the proof of (i) would be trivial). 
Since 
\begin{equation}
|z| =\sup_{\theta\in[0, 2\pi)}\Re\left(\mathrm{e}^{-i\theta}z\right)
\end{equation}
for any complex number number $z\in \CC$, we have (recall that $\mathcal{H}$ finite-dimensional)
\begin{align}
\mathcal{C}_M(\Lambda) &= \max_{\rho\in\mathcal{D}(\mathcal{H})}\left\vert\Tr\left[\widetilde{X}\rho\right]\right\vert \nonumber\\
&= \sup_{\theta\in [0, 2 \pi)} \max_{\rho\in\mathcal{D}(\mathcal{H})} \Re\left(\Tr\left[\mathrm{e}^{-i\theta}\widetilde{X}\rho\right]\right) \nonumber\\
&= \sup_{\theta\in [0, 2 \pi)} \max_{\rho\in\mathcal{D}(\mathcal{H})} \Tr\left[\widetilde{X}_{\theta}\rho\right]\,,
\end{align}
where $\widetilde{X}_\theta: = \Re(\mathrm{exp}(-i\theta) \widetilde{X})$ is Hermitian and we used the linearity of the trace. Since for any fixed angle $\theta\in [0, 2 \pi)$ the expectation $\Tr[\widetilde{X}_{\theta}\rho] $ is linear, it follows that
\begin{align}
\mathcal{C}_M(\Lambda) &=  \sup_{\theta\in [0, 2 \pi)}\max_{\substack{[\Psi]\in\mathbbm{P}(\mathcal{H})\,\\ \|\Psi\|=1}} \bra{\Psi}\widetilde{X}_{\theta}\ket{\Psi}\nonumber\\
&= \max_{\substack{[\Psi]\in\mathbbm{P}(\mathcal{H})\,\\ \|\Psi\|=1}}\left\vert \bra{\Psi}\widetilde{X}\ket{\Psi}\right\vert\,,
\end{align}
where we used in the last line that $\theta$ can always be chosen to align with the phase of $\ket{\Psi}$.

Next, we prove (ii). Since $\mathcal{G}$ is Hermiticity preserving and $M\in\mathcal{B}_{\mathrm{sa}}(\mathcal{H})$, the operator $\widetilde{X}$ is Hermitian (for $\mathcal{G}$ CPTP this can also be shown explicitly by using the Kraus representation of $\mathcal{G}$). We denote the maximal and minimal eigenvalues of $\widetilde{X}$ by $\lambda_{\max}(\widetilde{X})$ and $\lambda_{\min}(\widetilde{X})$, respectively. Then, 
\begin{align}
\mathcal{C}_M(\Lambda) 
& = \max\left\{\left\vert\lambda_{\max}(\widetilde{X})\right\vert, \left\vert\lambda_{\min}(\widetilde{X})\right\vert\right\} = \|\widetilde{X}\|_{\infty}\,,
\end{align}
where we used that for $\widetilde{X}$ Hermitian, the operator ($\infty-$Schatten) norm equals the spectral norm \cite{Bhatia-MatrixAnalysis}. 

It remains to show (iii). Since $\mathcal{G}$ is idempotent, which holds for any resource-destroying map, $\mathcal{B}(\mathcal{H})$ can be decomposed into a direct sum
\begin{equation}
\mathcal{B}(\mathcal{H})=\mathrm{im}(\mathcal{G}^\dagger)\oplus \mathrm{ker}(\mathcal{G}^\dagger)\,,
\end{equation}
where $\mathrm{im}(\mathcal{G}^\dagger)$ and $ \mathrm{ker}(\mathcal{G}^\dagger)$ denote the image (range) and kernel of $\mathcal{G}^\dagger$, respectively. Thus, for $\Lambda^\dagger(M)\in\mathrm{im}(\mathcal{G}^\dagger)$, $\mathcal{C}_M(\Lambda)=0$ as $\mathrm{id}-\mathcal{G}^\dagger$ is the projector onto the kernel of $\mathcal{G}^\dagger$. 
\end{proof}

\section{$\Pi_M(\Lambda)$ as a binary discrimination task\label{app:state-disc}}

In this section, we show how $\Pi_M(\Lambda)$ as defined in Eq.~\eqref{eq:rel-Pi-C} can be identified with an additive advantage of a particular binary state discrimination task, as identified via the respective success probability. For a detailed discussion of state discrimination tasks and an associated operational advantage provided by resource states, see Refs.~\cite{TR19, KTAY24}.

As explained in the first paragraph of Sec.~\ref{subsec:capacity}, the yield $Y(\rho)$ \eqref{eq:yield} is naturally defined with respect to a POVM element $M$, i.e., $0\leq M\leq \mathbbm{1}$, such that $Y(\rho)\in [0, 1]$ corresponds to a success probability. In most theorems in the main text we skipped the constraint $0\leq M\leq \mathbbm{1}$ and considered a general self-adjoint $M\in\mathcal{B}_{\mathrm{sa}}(\mathcal{H})$ whenever the mathematical derivation of the results did not require $0\leq M\leq \mathbbm{1}$. However, the distinction between $M$ being a POVM element and a general self-adjoint operator becomes important when we would like to relate $\Pi_M(\Lambda)$ to the advantage as phrased in terms of state discrimination success probabilities in Ref.~\cite{KTAY24}. 

We consider the two cases separately and discuss $M$ being a POVM element first. While the goal of channel discrimination protocols as described in Ref.~\cite{KTAY24} is to determine which quantum channel $\Lambda_i$ of an ensemble $\{p_i, \Lambda_i\}$ with probabilities $p_i$ had been applied to a quantum state, we work with a single quantum channel $\Lambda$ in the definition of $\Pi_M(\Lambda)$. Thus, the channel discrimination becomes an a priori trivial task. Nevertheless, it is still instructive to relate $\Pi_M(\Lambda)$ to the advantage described in Ref.~\cite{KTAY24}. For this, we identify $p_{\mathrm{succ}}(\rho, \Lambda, M)\equiv Y(\rho) $ such that for $0\leq M\leq \mathbbm{1}$:
\begin{equation}\label{eq:Pi-prob}
\Pi_M(\Lambda) \equiv \sup_{\rho\in \mathcal{D}(\mathcal{H})}p_{\mathrm{succ}}(\rho, \Lambda, M) - \sup_{\sigma\in \mathcal{F}}p_{\mathrm{succ}}(\sigma, \Lambda, M)\,,
\end{equation}
i.e., instead of the ratio between $p_{\mathrm{succ}}(\rho, \Lambda, M) $ for a fixed input state $\rho$ and the optimal success probability for states in the set $\mathcal{F}$ of free states (see Ref.~\cite{KTAY24} for details), $\Pi_M(\Lambda) $ quantifies the maximal possible yield change in agreement with $\mathcal{C}_M(\Lambda)$ using an ``additive'' notion of advantage rather than a ``multiplicative'' one. 

In case $M$ is not a POVM element but a general observable, we first have to modify $\Pi_M(\Lambda)$ accordingly before we can identify its constituents with two success probabilities for an input state $\rho$ and a free state $\sigma$ as in Eq.~\eqref{eq:Pi-prob}. For this, we build a binary POVM $\{\widetilde{M}, \mathbbm{1} - \widetilde{M}\}$ from M as follows: first, denote the maximal and minimal eigenvalues of $M$ by $\lambda_{\max}^{(M)}$ and $\lambda_{\min}^{(M)}$, respectively. Then, we construct a linear function of $M$ according to (assume $M\neq \alpha \mathbbm{1}, \alpha\in \RR$)
\begin{equation}
\widetilde{M}  = \frac{M - \lambda_{\min}^{(M)}\mathbbm{1}}{\lambda_{\max}^{(M)}-\lambda_{\min}^{(M)}}\,,
\end{equation}
which is well-defined for finite-dimensional Hilbert spaces $\mathcal{H}$ as this ensures that $M$ is bounded with $\lambda_{\min}^{(M)}, \lambda_{\max}^{(M)}\in (-\infty, \infty)$. In addition, we excluded the degenerate case $\lambda_{\min}^{(M)}=\lambda_{\max}^{(M)}$, which implies $M\neq \alpha \mathbbm{1}, \alpha\in \RR$ as in that case $\Pi_M(\Lambda)=0$. Note that $\widetilde{M}$ is only one particular way to construct a POVM element from an observable $M$. Next, we abbreviate $\Delta\lambda^{(M)}:= \lambda_{\max}^{(M)}-\lambda_{\min}^{(M)}$. Then, $\Pi_M(\Lambda)$ can be written as
\begin{align}
\Pi_M(\Lambda)  &= \Delta\lambda^{(M)}\Bigg[\sup_{\rho\in \mathcal{D}(\mathcal{H})}\left| p_{\mathrm{succ}}(\rho, \Lambda, \widetilde{M})  +\frac{\lambda_{\min}^{(M)}}{\Delta\lambda^{(M)}}\right|\nonumber\\
&\quad - \sup_{\sigma\in \mathcal{F}} \left| p_{\mathrm{succ}}(\sigma, \Lambda, \widetilde{M}) +\frac{\lambda_{\min}^{(M)}}{\Delta\lambda^{(M)}}\right| \Bigg]\,.
\end{align}
In particular, if $M\geq 0$ then $\Tr[M\Lambda(\rho)]\geq 0$ for all states $\rho$, so the absolute values in the definition of $\Pi_M(\Lambda)$ are redundant. In this case the constant shift $\lambda_{\min}^{(M)}/\Delta\lambda^{(M)}$ cancels between the two suprema, and we obtain
\begin{align}
\Pi_M(\Lambda)
= \Delta\lambda^{(M)}\Big[\sup_{\rho\in\mathcal{D}(\mathcal{H})}p_{\mathrm{succ}}(\rho,\Lambda,\widetilde M)\nonumber\\
- \sup_{\sigma\in\mathcal{F}}p_{\mathrm{succ}}(\sigma,\Lambda,\widetilde M)\Big]\,.
\end{align}
Comparing with Eq.~\eqref{eq:Pi-prob}, where $M$ itself is a POVM element, we see that also for a general $M\geq 0$ the quantity $\Pi_M(\Lambda)$ can be expressed entirely in terms of success probabilities $p_{\mathrm{succ}}(\cdot,\Lambda,\widetilde M)$ of a suitably normalised effect $\widetilde M$, and thus fits into the operational advantage framework of channel discrimination tasks in Ref.~\cite{KTAY24}, up to the overall scaling factor $\Delta\lambda^{(M)}$.

\section{Proof of Corollary \ref{cor:CM-PiM}\label{app:proof-cor-2}}

\begin{proof}
For any density operator $\rho\in\mathcal{D}(\mathcal{H})$, we have
\begin{equation}
\left\vert \Tr[M\Lambda(\rho)]\right\vert \leq \left\vert \Tr[M\Lambda(\mathcal{G}(\rho))]\right\vert + \left\vert \Tr[M\Lambda(\rho - \mathcal{G}(\rho))]\right\vert\,.
\end{equation}
Taking on both sides the supremum over all $\rho\in\mathcal{D}(\mathcal{H})$ and using that $\mathcal{G}(\rho) \in \mathcal{F}$ for all $\rho\in \mathcal{D}(\mathcal{H})$ yields Eq.~\eqref{eq:rel-Pi-C}. 
\end{proof}

\section{Proof of Theorem \ref{thm:capacity-properties} \label{app:proof-thm-3}}

\begin{proof}
We abbreviate $\Delta_{\Lambda_k}: = \Lambda_k(\rho) -\Lambda_k(\mathcal{G}(\rho))$. Then, convexity of $\mathcal{C}_M(\Lambda)$ in $\Lambda$ follows from
\begin{align}
\mathcal{C}_M\left(\sum_{k}p_k \Lambda_k\right)  & = \max_{\rho\in\mathcal{D}(\mathcal{H})}\left\vert\Tr\left[M\sum_k p_k \Delta_{\Lambda_k}(\rho)  \right]\right\vert\nonumber\\
&\leq  \max_{\rho\in\mathcal{D}(\mathcal{H})}\sum_kp_k \left\vert\Tr\left[M \Delta_{\Lambda_k}(\rho)  \right]\right\vert\nonumber\\
&\leq \sum_kp_k\mathcal{C}_M(\Lambda_k)\,.
\end{align}

Let $\Delta\mathcal{C}_{M}^{(\Lambda_1, \Lambda_2)}: = \left\vert \mathcal{C}_M(\Lambda_1) -\mathcal{C}_M(\Lambda_2)\right\vert $. To show Lipshitz continuity, we first observe that
\begin{align}
\Delta\mathcal{C}_{M}^{(\Lambda_1, \Lambda_2)} &\leq \max_{\rho\in\mathcal{D}(\mathcal{H})}\big\vert \left\vert \Tr[M\Delta_{\Lambda_1}] \right\vert -  \left\vert \Tr[M\Delta_{\Lambda_2}(\rho)  ] \right\vert\big\vert\nonumber\\
&\leq \max_{\rho\in\mathcal{D}(\mathcal{H})} \left\vert \Tr\left[M\left(\Delta_{\Lambda_1}(\rho) -\Delta_{\Lambda_2}(\rho)  \right)\right]    \right\vert\nonumber\\
&\leq \max_{\rho\in\mathcal{D}(\mathcal{H})} \big\{\left\vert\Tr\left[M(\Lambda_1-\Lambda_2)(\rho)\right]\right\vert \nonumber\\
&\quad +\left\vert \Tr\left[M(\Lambda_1-\Lambda_2)(\mathcal{G}(\rho))\right]\right\vert\big\}\nonumber\\
&\leq \|M\|_{\infty} \max_{\rho\in\mathcal{D}(\mathcal{H})} \big[ \|(\Lambda_1-\Lambda_2)(\rho)\|_1\nonumber\\
&\quad  + \|(\Lambda_1-\Lambda_2)(\mathcal{G}(\rho))\|_1\big]\,,
\end{align}
where we used the (reverse) triangle inequality and H\"older's inequality for Schatten norms with $p=1, q=\infty$.
The diamond norm $\|\cdot\|_{\diamond}$ on superoperators (and, thus, quantum channels) is defined in Refs.~\cite{Kitaev97, Watrous04, Watrous09, BS10}. Since for any quantum channel $\Lambda$
\begin{equation}
\|\Lambda\|_{\diamond} : = \sup_{k\geq 1}\|\Lambda \otimes\mathrm{id}_k\|_1\,,
\end{equation}
and $\mathcal{G}$ maps density operators to density operators, we can choose $k=1$ to obtain
\begin{align}
\Delta\mathcal{C}_{M}^{(\Lambda_1, \Lambda_2)} &\leq  \|M\|_{\infty} \|\Lambda_1-\Lambda_2\|_{\diamond} \left(\|\rho\|_1+ \|\mathcal{G}(\rho)\|_1\right)\nonumber\\
& = K\|\Lambda_1-\Lambda_2\|_{\diamond}
\end{align}
with Lipschitz constant $K: = 2\|M\|_{\infty}$. In case $\mathcal{G}$ is linear, we can use Eq.~\eqref{eq:capacity} to show that
\begin{equation}
\Delta\mathcal{C}_{M}^{(\Lambda_1, \Lambda_2)}\leq \|\mathrm{id}-\mathcal{G}\|_{\infty\to \infty} \|M\|_{\infty}\|\Lambda_1-\Lambda_2\|_{\diamond}\,,
\end{equation}
where the superoperator norm $\|\mathcal{G}\|_{\infty\to \infty}$ is defined as
\begin{equation}
\|\mathcal{G}\|_{\infty\to \infty} := \sup_{\mathcal{B}(\mathcal{H})\ni M\neq 0}\frac{\|\mathcal{G}(M)\|_{\infty}}{\|M\|_\infty}\,.
\end{equation}
If $\mathcal{G}$ is an idempotent, unital CPTP map, it is a pinching quantum channel \cite{Tomamichel2016, Watrous2018} such that for any $O\in\mathcal{B}(\mathcal{H})$
\begin{equation}
\mathcal{G}(O) = \sum_iP_iOP_i\,,
\end{equation}
where $\{P_i\}_i$ is a set of orthogonal projectors. Then, $\mathcal{G}^\dagger=\mathcal{G}$ because for any two $A,B\in\mathcal{B}(\mathcal{H})$,
\begin{equation}
\Tr[A\mathcal{G}(B)] = \Tr[\mathcal{G}^\dagger(A) B] = \Tr[\mathcal{G}(A)B]\,.
\end{equation}
Moreover, $\|O - \mathcal{G}(O)\|_{\infty}\leq \|O\|_{\infty}$ \cite{BK08} such that $\|\mathrm{id}-\mathcal{G}\|_{\infty\to\infty}\leq 1$. 

Property (iii) immediately follows from
\begin{align}
\mathcal{C}_{a M_1+bM_2}(\Lambda) &\leq \sup_{\rho\in\mathcal{D}(\mathcal{H})}\left( \left\vert a\Tr[M_1\Delta_{\Lambda}(\rho)]\right\vert + \left\vert b\Tr[M_2\Delta_{\Lambda}(\rho)]\right\vert\right)\nonumber\\
&\leq |a|\sup_{\rho\in\mathcal{D}(\mathcal{H})}\left\vert \Tr[M_1\Delta_{\Lambda}(\rho)]\right\vert \nonumber\\
&\quad + |b|\sup_{\rho\in\mathcal{D}(\mathcal{H})}\left\vert \Tr[M_2\Delta_{\Lambda}(\rho)]\right\vert\nonumber\\
& = |a|\mathcal{C}_{M_1}(\Lambda) +|b|\mathcal{C}_{M_2}(\Lambda)\,.
\end{align}
\end{proof}

\section{Proof of Theorem \ref{thm:data-processing} \label{app:proof-thm-4}}

\begin{proof}
Eq.~\eqref{eq:CM-composition} is a direct consequence of the linearity of $\Lambda_2$ and the definition of the adjoint of a quantum channel, 
\begin{align}
\mathcal{C}_M(\Lambda_2\circ \Lambda_1) &= \max_{\rho\in\mathcal{D}(\mathcal{H})}\left\vert \Tr\left[\Lambda_2^\dagger(M)\left(\Lambda_1(\rho) - \Lambda_1(\mathcal{G}(\rho))\right)\right]\right\vert\nonumber\\
&=\mathcal{C}_{\Lambda_2^\dagger(M)}(\Lambda_1)\,.
\end{align}
Let $\Delta_{\Lambda_1}(\rho) := \Lambda_1(\rho) -\Lambda_1(\mathcal{G}(\rho))$. Then, adding $\pm M\Delta_{\Lambda_1}(\rho)$ inside the trace yields
\begin{align}
\mathcal{C}_{\Lambda_2^\dagger(M)}(\Lambda_1)&\leq \mathcal{C}_M(\Lambda_1)  + \max_{\rho\in\mathcal{D}(\mathcal{H})}\left\vert\Tr\left[\Delta M\left(\Delta_{\Lambda_1}(\rho)\right)\right]\right\vert\nonumber\\
& = \mathcal{C}_M(\Lambda_1)  + \mathcal{C}_{\Delta M}(\Lambda_1)\,.
\end{align}
Moreover, it follows from H\"older's inequality for Schatten norms with $p=1, q=\infty$ that
\begin{align}
\mathcal{C}_{\Delta M}(\Lambda_1) &\leq \|\Delta M\|_{\infty} \max_{\rho\in\mathcal{D}(\mathcal{H})}\|\Delta_{\Lambda_1}\rho\|_1\nonumber\\
&\leq \|\Delta M\|_{\infty} \underbrace{\max_{\rho\in\mathcal{D}(\mathcal{H})}\|\rho - \mathcal{G}(\rho)\|_1}_{=:R_{\mathcal{G}}}\,,
\end{align}
where we used in the last line that the trace distance is contractive under CPTP maps \cite{NielsenChuang}. 

We show (ii) next. Adding $\pm \Lambda_2(\mathcal{G}(\Lambda_1(\rho)))$ inside the trace in Eq.~\eqref{eq:capacity-sup} and using the triangle inequality yields
\begin{align}
\mathcal{C}(\Lambda_2\circ\Lambda_1)&\leq \sup_{\rho\in\mathcal{D}(\mathcal{H})}\big\{ \left\vert \Tr[M\Delta_{\Lambda_2}(\Lambda_1(\rho))]\right\vert \nonumber\\
&\quad + \left\vert \Tr[M\Lambda_2(\mathcal{G}(\Lambda_1(\rho)) - \Lambda_1(\mathcal{G}(\rho)))]\right\vert\big\}\nonumber\\
&\leq \sup_{\rho\in \mathrm{im}(\Lambda_1)}\left\vert \Tr[M\Delta_{\Lambda_2}(\rho)]\right\vert\nonumber\\
&\quad  + \sup_{\rho\in\mathcal{D}(\mathcal{H})}\left\vert \Tr[M\Lambda_2(\mathcal{G}(\Lambda_1(\rho)) - \Lambda_1(\mathcal{G}(\rho)))]\right\vert
\end{align}
The second term in the last inequality,
\begin{equation}
Y_{\mathcal{G}, \Lambda_1}:= \sup_{\rho\in\mathcal{D}(\mathcal{H})}\left\vert \Tr[M\Lambda_2(\mathcal{G}(\Lambda_1(\rho)) - \Lambda_1(\mathcal{G}(\rho)))]\right\vert\,,
\end{equation}
vanishes if $\Lambda_1$ is $\mathcal{G}$-covariant. Moreover, we have $\mathrm{im}(\Lambda_1)\subseteq\mathcal{D}(\mathcal{H})$ and it follows that
\begin{equation}
\mathcal{C}(\Lambda_2\circ\Lambda_1) \leq \mathcal{C}_M(\Lambda_2)
\end{equation}
whenever $\mathcal{G}\circ \Lambda_1 = \Lambda_1\circ\mathcal{G}$. 

The classical coarse-graining inequality (iii) follows from (recall $\Delta_{\Lambda}(\rho) :=\Lambda(\rho) - \Lambda(\mathcal{G}(\rho))$)
\begin{align}
\mathcal{C}_{\widetilde{M}_j}(\Lambda) &\leq \sup_{\rho\in\mathcal{D}(\mathcal{H})}\sum_{j}V_{ji}\left\vert \Tr\left[M_i \Delta_{\Lambda}(\rho)\right]\right\vert\nonumber\\
&\leq \sum_{j}V_{ji} \sup_{\rho\in\mathcal{D}(\mathcal{H})}\left\vert \Tr\left[M_i \Delta_{\Lambda}(\rho)\right]\right\vert\nonumber\\
&= \sum_i V_{ji}\mathcal{C}_{M_i}(\Lambda)\,.
\end{align}
\end{proof}

\section{Proof of Corollary \ref{cor:capacity-geometry} \label{app:proof-cor-5}}

\begin{proof}
For any linear real-valued function $f:X\to\RR$ we have
\begin{align}
\sup_{x\in X}|f(x)| &= \max\left\{ \sup_{x\in X}f(x), \sup_{x\in X}-f(x)\right\} \nonumber\\
&= \sup_{x\in X\cup (-X)}f(x)\nonumber\\
&=\sup_{x\in\mathrm{conv}( X\cup (-X))}f(x)\,.
\end{align}
Applied to $\mathcal{C}_M(\Lambda)$ this yields
\begin{align}
\mathcal{C}_M(\Lambda) = \sup_{O\in \mathcal{M}_{\mathcal{G}, \Lambda}}|\langle M, O\rangle_{\mathrm{HS}}| =  \sup_{O\in \widetilde{\mathcal{M}}_{\mathcal{G}, \Lambda}}\langle M, O\rangle_{\mathrm{HS}}\,.
\end{align}
The last equation is nothing else than the definition of a support function $h_K(y)$ of a convex set $K$ \cite{Rockafellar97} and, thus, $\mathcal{C}_M(\Lambda)=h_{\widetilde{\mathcal{M}}_{\mathcal{G}, \Lambda}}(M)$. 
\end{proof}

\section{Proof of Corollary \ref{cor:CM-polar} \label{app:cor-CM-polar}}

\begin{proof}
The statement is a direct consequence of the following well-known theorem (see, e.g., \cite{Rockafellar97})
\begin{thm}\label{thm:support-gauge}
Let $\mathcal{X}\subset\mathcal{B}_{\mathrm{sa}}(\mathcal{H})$ be non-empty and contain the origin. Then, for every $M\in\mathcal{B}_{\mathrm{sa}}(\mathcal{H})$,
\begin{equation}\label{eq:support-as-gauge}
h_{\mathcal{X}}(M)= \inf\left\{t\geq 0\,\middle\vert\, M\in t\,\mathcal{X}^{\circ}\right\}\,.
\end{equation}
In particular, $h_{\mathcal{X}}(M)\leq 1$ holds if and only if $M\in\mathcal{X}^{\circ}$.
\end{thm}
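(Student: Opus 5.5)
We prove the two inequalities $h_{\mathcal{X}}(M)\le \inf\{t\ge 0: M\in t\,\mathcal{X}^\circ\}$ and $h_{\mathcal{X}}(M)\ge \inf\{\dots\}$ directly from the definitions. Here $h_{\mathcal{X}}(M)=\sup_{O\in\mathcal{X}}\langle M,O\rangle_{\mathrm{HS}}$ and $\mathcal{X}^\circ$ is as in Eq.~\eqref{eq:polar-def}. Since $0\in\mathcal{X}$, we have $h_{\mathcal{X}}(M)\ge 0$. In finite dimensions with $\mathcal{X}$ bounded (as for $\widetilde{\mathcal{M}}_{\mathcal{G},\Lambda}$, which is compact), $h_{\mathcal{X}}(M)<\infty$. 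For unbounded $\mathcal{X}$ both sides may equal $+\infty$ with the convention $\inf\emptyset=+\infty$, and we treat that case separately at the end.

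First we establish the ``in particular'' claim. By the definition of the polar, $M\in\mathcal{X}^\circ$ if and only if $\langle M,O\rangle_{\mathrm{HS}}\le 1$ for all $O\in\mathcal{X}$, i.e.\ if and only if $h_{\mathcal{X}}(M)\le 1$. Next we use positive homogeneity: for $t>0$ we have $M\in t\mathcal{X}^\circ$ if and only if $M/t\in\mathcal{X}^\circ$, if and only if $h_{\mathcal{X}}(M/t)=h_{\mathcal{X}}(M)/t\le 1$, if and only if $t\ge h_{\mathcal{X}}(M)$. Hence $\{t>0: M\in t\mathcal{X}^\circ\}=[h_{\mathcal{X}}(M),\infty)\cap(0,\infty)$.

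The only delicate point is $t=0$, where $0\cdot\mathcal{X}^\circ=\{0\}$ (the polar is nonempty, since it contains $0$). If $h_{\mathcal{X}}(M)>0$, the infimum of the set above is $h_{\mathcal{X}}(M)$, regardless of whether $t=0$ belongs to it. If $h_{\mathcal{X}}(M)=0$, the set contains every $t>0$, so its infimum is $0$ as well. In both cases Eq.~\eqref{eq:support-as-gauge} holds. If $h_{\mathcal{X}}(M)=+\infty$, no $t$ qualifies and both sides are $+\infty$.

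Applying this with $\mathcal{X}=\widetilde{\mathcal{M}}_{\mathcal{G},\Lambda}$ proves Corollary~\ref{cor:CM-polar}. This set contains the origin: it is symmetric and convex, and $\Lambda(\rho)-\Lambda(\mathcal{G}(\rho))=0$ for any free $\rho$. Combining with Corollary~\ref{cor:capacity-geometry} gives Eq.~\eqref{eq:CM-as-gauge}, and the $t=1$ statement gives Eq.~\eqref{eq:polar-unit-ball}. The main obstacle is merely the bookkeeping at $t=0$ and the case $h=0$; we handle it by the case split above.
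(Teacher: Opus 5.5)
Your proof is correct and follows the paper's argument. The paper observes that $M\in t\,\mathcal{X}^{\circ}$ iff $\langle M,O\rangle_{\mathrm{HS}}\le t$ for all $O\in\mathcal{X}$ and takes the smallest such $t$; your positive-homogeneity step and separate treatment of $t=0$ make this rigorous, since that equivalence fails at $t=0$ and the infimum is not attained when $h_{\mathcal{X}}(M)=0$ but $M\neq 0$. One wording fix: when $h_{\mathcal{X}}(M)>0$, $t=0$ is excluded because $M\neq 0$, not ``regardless'' of whether it belongs, since if $0$ were in the set the infimum would be $0$.
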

Theorem \ref{thm:support-gauge} follows from the definition of the polar set in Eq.~\eqref{eq:polar-def} as $M\in t\,\mathcal{X}^{\circ}$ is equivalent to $\langle X,O\rangle_{\mathrm{HS}}\leq t$ for all $O\in\mathcal{X}$, and the smallest such $t$ is the supremum defining $h_{\mathcal{X}}(X)$. Applying Theorem \ref{thm:support-gauge} to $\mathcal{X} = \widetilde{M}_{\mathcal{G}, \Lambda}$ in combination with Corollary \ref{cor:capacity-geometry} yields Corollary \ref{cor:CM-polar}. 
\end{proof}

\section{Total error probability for multi-shot hypothesis testing \label{app:CM-operat-int}}

In this appendix, we derive an upper bound on the total error probability
\begin{equation}
P_{\text{err}}^{(n)}(\rho)
:= \frac{1}{2}\bigl[\Pr(\text{error}\mid H_0) + \Pr(\text{error}\mid H_1)\bigr]\,,
\end{equation}
i.e., the average probability of error over the two hypotheses $H_0$ and $H_1$ introduced in Sec.~\ref{sec:op-interp} when they occur with equal prior probability $1/2$.

For a fixed input state $\rho$, consider $n$ independent and identically distributed (i.i.d.) repetitions of the single-shot hypothesis-testing experiment described in Sec.~\ref{sec:op-interp}. 
Let $X_k\in\{0,1\}$ denote the outcome in run $k$ (with $X_k=1$ corresponding to the outcome associated with $M$), and $x^n := (x_1,\dots,x_n)\in\{0,1\}^n$ the observed string. 
The product distributions of $x^n$ under $H_0$ and $H_1$ are then
\begin{align}
P_0(x^n) &= \prod_{k=1}^n p_0^{x_k}(1-p_0)^{1-x_k}\,,\\
P_1(x^n) &= \prod_{k=1}^n p_1^{x_k}(1-p_1)^{1-x_k}\,,
\end{align}
where $p_i \equiv p_i(\rho)$ for brevity. Any (deterministic) decision rule is a function $f:\{0,1\}^n\to\{0,1\}$, where the output $0$ stands for $H_0$ and $1$ for $H_1$. 
By the Neyman-Pearson Lemma \cite{NP33}, the optimal decision rule in our setting is the likelihood-ratio test, which in the i.i.d.~Bernoulli case depends on $x^n$ only through the empirical mean $\overline{X} := \sum_{k=1}^n X_k/n$ and is therefore equivalent to a threshold test on $\overline{X}$.

Suppose $p_1\geq p_0$ (otherwise swap the labels), and consider the symmetric threshold $\tau := (p_0+p_1)/2$. 
Define a decision rule that decides in favour of $H_1$ if $\overline{X}\ge\tau$ and in favour of $H_0$ otherwise. 
Under $H_0$, we have $\mathbbm{E}_0[\overline{X}] = p_0$, and an error occurs when $\overline{X}\geq\tau$, i.e., when $\overline{X}-p_0 \geq (p_1-p_0)/2$. 
Similarly, under $H_1$, $\mathbbm{E}_1[\overline{X}] = p_1$ and an error occurs when $\overline{X}<\tau$, i.e., when $p_1-\overline{X} \geq (p_1-p_0)/2$. 
Then, Hoeffding's inequality \cite{Hoeffding63} implies that
\begin{align}
P_0\left(\overline{X} - p_0 \geq \frac{p_1-p_0}{2}\right)&\leq \exp\left(-\frac{n}{2}(p_1-p_0)^2\right)\,,\\
P_1\left(p_1 - \overline{X} \geq \frac{p_1-p_0}{2}\right)&\leq \exp\!\left(-\frac{n}{2}(p_1-p_0)^2\right)\,.
\end{align}
Therefore, the total error probability for this explicit decision rule is bounded by
\begin{equation}\label{eq:Perr-app}
P_{\text{err}}^{(n)}(\rho)\leq \exp\left(-\frac{n}{2}(p_1-p_0)^2\right)= \exp\left(-\frac{n}{2}\bigl(\Delta Y(\rho)\bigr)^2\right)\,.
\end{equation}
Finally, the decision rule just constructed is not necessarily optimal for all finite $n$, but it provides a valid upper bound on the minimal achievable error probability. 
Hence, the optimal total error probability in discriminating $\rho$ from $\mathcal{G}(\rho)$ using $n$ independent runs of $(\Lambda,M)$ satisfies Eq.~\eqref{eq:Perr-app}.

\section{Proof of Theorem \ref{thm:variation-bound} \label{app:proof-thm-6}}

\begin{proof}
To show the variation bound, we use that for any two functions $f_1, f_2: X\to \RR$,
\begin{equation}\label{eq:sup-identity}
\big\vert\sup_{x\in X}|f_1(x)| - \sup_{x\in X}|f_2(x)|\big\vert \leq \sup_{x\in X}|f_1(x)-f_2(x)|\,.
\end{equation}
By assumption, $\{\Lambda_t\}_{t\in\mathcal{I}}$ solves the time-local master equation in integral form,
\begin{equation}\label{eq:integral-Lambda}
\Lambda_{t_2}-\Lambda_{t_1} = \int_{t_1}^{t_2} \mathcal{L}_s\circ \Lambda_s\,ds
\qquad (t_1\leq t_2)\,.
\end{equation}
In particular, $t\mapsto\mathcal{L}_t\circ\Lambda_t$ is Bochner integrable on every compact subinterval of $\mathcal{I}$.
Next, we define 
\begin{equation}
f_t(\rho):= \Tr[M(\Lambda_t(\rho)-\Lambda_t(\mathcal{G}(\rho)))]
\end{equation}
such that 
\begin{equation}
\mathcal{C}_M(\Lambda_t) = \sup_{\rho\in \mathcal{D}(\mathcal{H})}|f_t(\rho)|\,.
\end{equation}

Using Eq.~\eqref{eq:sup-identity}, we obtain
\begin{align}\label{eq:CM-diff}
\big|\mathcal{C}_M(\Lambda_{t_2})-\mathcal{C}_M(\Lambda_{t_1})\big| \leq \sup_{\rho\in\mathcal{D}(\mathcal{H})}\big|f_{t_2}(\rho) - f_{t_1}(\rho)\big|\,
\end{align}
Then, for each fixed $\rho$, Eq.~\eqref{eq:integral-Lambda} implies that
\begin{align}
f_{t_2}(\rho) - f_{t_1}(\rho) &= \Tr\left[M\left(\Lambda_{t_2}(\rho-\mathcal{G}(\rho)) -\Lambda_{t_1}(\rho-\mathcal{G}(\rho))\right)\right] \nonumber \\
&= \int_{t_1}^{t_2}\mathrm{d}s\, \Tr\left[M \mathcal{L}_s\left(\Lambda_s(\rho-\mathcal{G}(\rho))\right)\right]\,,
\end{align}
which leads to
\begin{align}
\left|f_{t_2}(\rho) - f_{t_1}(\rho)\right|&\leq \int_{t_1}^{t_2}\mathrm{d}s\, \left|\Tr\left[M\,\mathcal{L}_s(\Lambda_s(\rho-\mathcal{G}(\rho)))\right]\right]\,. \label{eq:f-diff-bound}
\end{align}
Taking the supremum over $\rho$ in \eqref{eq:f-diff-bound} and using monotonicity of the integral yields
\begin{align*}
\sup_{\rho\in\mathcal{D}(\mathcal{H})}\left|f_{t_2}(\rho) - f_{t_1}(\rho)\right|\leq \int_{t_1}^{t_2}\mathrm{d}s\, \Gamma_M(s)\,.
\end{align*}
Combining this with \eqref{eq:CM-diff} gives the variation bound \eqref{eq:variation-bound}.

Next, we show Eq.~\eqref{eq:Dini}. The right Dini derivatives of a continuous function $f:\RR\to\RR$ are defined as \cite{KannanKrueger1996Dini}
\begin{align}
D^+f(t) &:=\limsup_{h\to 0^+}\frac{f(t+h)-f(t)}{h}\,,\nonumber\\
\quad D_+f(t) &:=\liminf_{h\to 0^+}\frac{f(t+h)-f(t)}{h}\,.
\end{align}
Here, $f(t) =\mathcal{C}_M(\Lambda_t)$. 
From \eqref{eq:variation-bound} we have, for every $t$ and $h>0$,
\begin{align}
\left|\mathcal{C}_M(\Lambda_{t+h}) - \mathcal{C}_M(\Lambda_t)\right|&\leq \int_t^{t+h}\mathrm{d}s\,\Gamma_M(s)\,,
\end{align}
and hence
\begin{align}
\frac{\mathcal{C}_M(\Lambda_{t+h}) - \mathcal{C}_M(\Lambda_t)}{h}&\leq \frac{1}{h}\int_t^{t+h}\mathrm{d}s\,\Gamma_M(s)\,, \label{eq:upper-quotient} \\
\frac{\mathcal{C}_M(\Lambda_{t+h}) - \mathcal{C}_M(\Lambda_t)}{h}&\geq -\frac{1}{h}\int_t^{t+h}\mathrm{d}s\,\Gamma_M(s)\,. \label{eq:lower-quotient}
\end{align}
Since $t\mapsto\Gamma_M(t)$ is locally integrable by assumption, the Lebesgue differentiation theorem implies that for Lebesgue almost every $t\in\mathcal{I}$,
\begin{equation}
\lim_{h\to 0^+} \frac{1}{h}\int_t^{t+h}\mathrm{d}s\,\Gamma_M(s) = \Gamma_M(t)\,.
\end{equation}
Taking the limes superior of \eqref{eq:upper-quotient} as $h\to 0^+$ then gives
\begin{equation}
D^+\mathcal{C}_M(\Lambda_t)\leq  \limsup_{h\to 0^+} \frac{1}{h}\int_t^{t+h}\mathrm{d}s\,\Gamma_M(s)= \Gamma_M(t)
\end{equation}
for almost every $t$. Similarly, from Eq.~\eqref{eq:lower-quotient} and the identity
$\liminf_{x\to 0^+}(-g(x)) = -\limsup_{x\to 0^+}g(x)$ we obtain that
\begin{equation}
D_+\mathcal{C}_M(\Lambda_t)\geq  -\limsup_{h\to 0^+} \frac{1}{h}\int_t^{t+h}\mathrm{d}s\,\Gamma_M(s)   = -\Gamma_M(t)
\end{equation}
for almost every $t$. The analogous bounds for the left Dini derivatives follow by applying the same argument on intervals $[t-h,t]$ instead of $[t,t+h]$.
\end{proof}

\section{Proof of Lemma \ref{lem:Gamma-M-zero} \label{app:proof-lem-7}}

\begin{proof}
Suppose $\Lambda_t^\dagger(\mathcal{L}_t^\dagger(M))\in \mathcal{V}_{\mathcal{G}}^\perp$. Then, by definition, we have 
\begin{equation}\label{eq:proof-lem7}
\Tr\left[\Lambda_t^\dagger(\mathcal{L}_t^\dagger(M)) X\right]=0 \quad \forall\,X \in \mathcal{V}_{\mathcal{G}}\,,
\end{equation}
so the above equality holds in particular for any $\rho - \mathcal{G}(\rho)$. This implies $\Gamma_M(t)=0$. To show the reverse direction, we start with $\Gamma_M(t)=0$. Then, 
\begin{equation}
\sup_{\rho\in\mathcal{D}(\mathcal{H})} \left| \Tr\left[\Lambda_t^\dagger(\mathcal{L}_t^\dagger(M)) (\rho - \mathcal{G}(\rho))\right] \right|=0\,.
\end{equation}
Since every term in the supremum is positive and the trace is linear, it follows that Eq.~\eqref{eq:proof-lem7} holds for any $X \in \mathcal{V}_{\mathcal{G}}$. Thus, $\Lambda_t^\dagger(\mathcal{L}_t^\dagger(M))\in \mathcal{V}_{\mathcal{G}}^\perp$. 
\end{proof}

\section{Proof of Corollary \ref{cor:time-bound} \label{app:proof-cor-8}}

\begin{proof}
We first bound the instantaneous advantage rate $\Gamma_M(s)$. We use H\"older's inequality for Schatten norms with $p=1, q=\infty$ to show that the instantaneous advantage rate $\Gamma_{M}(s)$ as defined in Eq.~\eqref{eq:Gamma-M-sup} is bounded from above by
\begin{align}
\Gamma_M(s)&\leq \|M\|_{\infty} \sup_{\rho\in\mathcal{D}(\mathcal{H})} \|\mathcal{L}_s(\Lambda_s(\rho- \mathcal{G}(\rho)\|_1\nonumber\\
&\leq\|M\|_{\infty}  \|\mathcal{L}_s\|_{1\to 1} \sup_{\rho\in\mathcal{D}(\mathcal{H})}\|\rho - \mathcal{G}(\rho)\|_1\nonumber\\
&=\|M\|_{\infty}  R_{\mathcal{G}} \|\mathcal{L}_s\|_{1\to 1}\,,
\end{align}
where we used in the second line the definition of the operator norm which implies that $\|T(O)\|_{1\to 1}\|O\|_{1}\geq  \|T(O)\|_1$ for all linear maps $T: \mathcal{B}(\mathcal{H})\to\mathcal{B}(\mathcal{H})$ and $O\in\mathcal{B}(\mathcal{H})$, and $R_{\mathcal{G}}$ is the resource radius defined in Eq.~\eqref{eq:res-radius}
Thus, it follows that
\begin{equation}\label{eq:GammaM-generator-bound}
\Gamma_M(s)\leq c_{M,\mathcal{G}}  \|\mathcal{L}_s\|_{1\to 1}
\end{equation}
with $c_{M,\mathcal{G}} = \|M\|_\infty R_{\mathcal{G}} \leq 2\|M\|_\infty$.

Inserting \eqref{eq:GammaM-generator-bound} into the variation bound in Eq.~\eqref{eq:variation-bound} yields
\begin{align}
\left|\Delta\mathcal{C}_M(\Lambda_{t_2},\Lambda_{t_1})\right|&\leq  c_{M,\mathcal{G}}\int_{t_1}^{t_2}\mathrm{d}s\,\|\mathcal{L}_s\|_{1\to 1} \nonumber\\
&\leq c_{M,\mathcal{G}}\,\Delta\tau \sup_{s\in[t_1,t_2]}\|\mathcal{L}_s\|_{1\to 1}\,, \label{eq:uniform-Lip}
\end{align}
where $\Delta\tau := t_2 - t_1$.
Moreover, from Eq.~\eqref{eq:uniform-Lip} it follows that 
\begin{equation}
\Delta\tau \geq \frac{\left|\Delta\mathcal{C}_M(\Lambda_{t_2},\Lambda_{t_1})\right|}{c_{M,\mathcal{G}}\sup_{s\in[t_1,t_2]}\|\mathcal{L}_s\|_{1\to 1}}\,,
\end{equation}
which is the time bound \eqref{eq:uniform-bound}. In particular, if one considers the minimal time $\Delta\tau^*$ over all intervals that achieve a prescribed change $|\Delta\mathcal{C}_M|$, it must satisfy the same lower bound.

Finally, for an external time budget $T_{\max}$ with $0\le\Delta\tau\leq T_{\max}$, Eq.~\eqref{eq:uniform-Lip} implies that
\begin{align}
\left|\Delta\mathcal{C}_M(\Lambda_{t_2},\Lambda_{t_1})\right|&\leq c_{M,\mathcal{G}}\Delta\tau \sup_{s\in[t_1,t_2]}\|\mathcal{L}_s\|_{1\to 1} \nonumber\\
&\leq T_{\max}c_{M,\mathcal{G}}\sup_{s\in[t_1,t_1+T_{\max}]}\|\mathcal{L}_s\|_{1\to 1}\,,
\end{align}
which is the feasibility bound \eqref{eq:feasibility-bound}.
\end{proof}

\section{Proof of Theorem \ref{thm:dissect-channel} \label{app:proof-thm-9}}

\begin{proof}
Since $\mathcal{G}$ is a linear and idempotent, $\mathcal{B}(\mathcal{H})$ decomposes into a direct sum
\begin{equation}\label{eq:B-dec}
\mathcal{B}(\mathcal{H}) = \text{im}(\mathcal{G}) \oplus \mathrm{ker}(\mathcal{G})\,,
\end{equation}
where $\text{im}(\mathcal{G})$ denotes the image of $\mathcal{G}$ and $\mathrm{ker}(\mathcal{G})$ its kernel. 
As a consequence, we can uniquely decompose any $X\in \mathcal{B}(\mathcal{H})$ as
\begin{equation}
X = \underbrace{\mathcal{G}(X)}_{\in  \text{im}(\mathcal{G})} +\underbrace{\mathcal{G}^\perp (X)}_{\in\mathrm{ker}(\mathcal{G})}\,,
\end{equation}
where $\mathcal{G}^\perp:=\text{id} - \mathcal{G}$ with $(\mathcal{G}^\perp)^2 =\mathcal{G}^\perp$ and $\mathrm{im}(\mathcal{G}^\perp) = \ker(\mathcal{G})$, such that $\mathcal{G}^\perp$ is the projection onto the kernel of $\mathcal{G}$. 

Moreover, every linear map $T: \mathcal{B}(\mathcal{H})\to\mathcal{B}(\mathcal{H})$ can be written as 
\begin{equation}\label{eq:T-decomp}
T = \mathcal{G}\circ T\circ\mathcal{G} +  \mathcal{G}^\perp\circ T\circ\mathcal{G}^\perp+ \mathcal{G}^\perp\circ T\circ\mathcal{G}+ \mathcal{G}\circ T\circ\mathcal{G}^\perp\,.
\end{equation}
Applied to a quantum channel $\Lambda_t$, this motivates the definitions
\begin{equation}
\Lambda_{t,\mathrm{free}} := \mathcal{G}\circ\Lambda_t\circ\mathcal{G}\,,\quad
\Lambda_{t,\mathrm{res}} := \Lambda_t - \Lambda_{t,\mathrm{free}}\,,
\end{equation}
and, similarly, for the generator,
\begin{equation}
\mathcal{L}_{t,\mathrm{free}} := \mathcal{G}\circ\mathcal{L}_t\circ\mathcal{G}\,,\quad 
\mathcal{L}_{t,\mathrm{res}} := \mathcal{L}_t - \mathcal{L}_{t,\mathrm{free}}\,.
\end{equation}
By construction, we have
\begin{equation}
\Lambda_t = \Lambda_{t,\mathrm{free}} + \Lambda_{t,\mathrm{res}}\,,\quad 
\mathcal{L}_t = \mathcal{L}_{t,\mathrm{free}} + \mathcal{L}_{t,\mathrm{res}}\,,
\end{equation}
which proves the decompositions \eqref{eq:Lambda-t-dec} and \eqref{eq:Lt-dec}. Using $\mathcal{G}^2=\mathcal{G}$ and $(\mathcal{G}^\perp)^2=\mathcal{G}^\perp$, we obtain
\begin{equation}
\mathcal{G}\circ\Lambda_{t,\mathrm{free}}
= \mathcal{G}\circ\mathcal{G}\circ\Lambda_t\circ\mathcal{G}
= \mathcal{G}\circ\Lambda_t\circ\mathcal{G}
= \Lambda_{t,\mathrm{free}}\,,
\end{equation}
and
\begin{equation}
\Lambda_{t,\mathrm{free}}\circ\mathcal{G} = \mathcal{G}\circ\Lambda_t\circ\mathcal{G}\circ\mathcal{G} = \mathcal{G}\circ\Lambda_t\circ\mathcal{G} = \Lambda_{t,\mathrm{free}}\,,
\end{equation}
so $\Lambda_{t,\mathrm{free}}$ satisfies \eqref{eq:decomp-properties-1}. Moreover,
\begin{equation}
\mathcal{G}\circ\Lambda_{t,\mathrm{res}}\circ\mathcal{G} = \mathcal{G}\circ\Lambda_t\circ\mathcal{G} - \mathcal{G}\circ(\mathcal{G}\circ\Lambda_t\circ\mathcal{G})\circ\mathcal{G} = 0\,.
\end{equation}
The same relations hold for $\mathcal{L}_{t,\mathrm{free}}$ and $\mathcal{L}_{t,\mathrm{res}}$ by the same argument. From $\mathcal{G}\circ\Lambda_{t,\mathrm{free}} = \Lambda_{t,\mathrm{free}} = \Lambda_{t,\mathrm{free}}\circ\mathcal{G}$ (and analogously for $\mathcal{L}_{t,\mathrm{free}}$) it also follows that
$[\mathcal{G},\Lambda_{t,\mathrm{free}}]=[\mathcal{G},\mathcal{L}_{t,\mathrm{free}}]=0$.

If $\Lambda_t$ is CPTP and $\mathcal{G}$ is CPTP, then $\Lambda_{t,\mathrm{free}} = \mathcal{G}\circ\Lambda_t\circ\mathcal{G}$ is a composition of CPTP maps, hence CPTP as claimed in part (iii).

It remains to show (ii). For any $\rho\in\mathcal{D}(\mathcal{H})$,
\begin{equation}
\Lambda_{t,\mathrm{free}}(\rho-\mathcal{G}(\rho))
= \mathcal{G}\left(\Lambda_t(\mathcal{G}(\rho)-\mathcal{G}^2(\rho))\right)
= \mathcal{G}\left(\Lambda_t(0)\right) = 0\,,
\end{equation}
such that
\begin{align}
\Lambda_t(\rho-\mathcal{G}(\rho))&= \Lambda_{t,\mathrm{free}}(\rho-\mathcal{G}(\rho)) + \Lambda_{t,\mathrm{res}}(\rho-\mathcal{G}(\rho))\nonumber\\
&= \Lambda_{t,\mathrm{res}}(\rho-\mathcal{G}(\rho))\,.
\end{align}
This implies that 
\begin{equation}
\Tr\left[M \Lambda_t(\rho-\mathcal{G}(\rho))\right] = \Tr\left[M\Lambda_{t,\mathrm{res}}(\rho-\mathcal{G}(\rho))\right]\,\,\forall\,\rho\in\mathcal{D}(\mathcal{H})\,,
\end{equation}
and taking the supremum over $\rho$ yields
\begin{equation}
\mathcal{C}_M(\Lambda_t) = \mathcal{C}_M(\Lambda_{t,\mathrm{res}})\,.
\end{equation}
\end{proof}

\section{Proof of Corollary \ref{cor:gen-free-compatibility} \label{app:proof-cor-10}}

\begin{proof}
The equivalence follows from the chain rule
\begin{equation}
\frac{d}{dt}\Lambda_{t, \text{free}} = \mathcal{L}_{t, \mathrm{free}}\circ \Lambda_{t, \mathrm{free}} + (\mathcal{G}\circ \mathcal{L}_t\circ \mathcal{G}^\perp)\circ(\mathcal{G}^\perp\circ\Lambda_t\circ \mathcal{G})
\end{equation}
with the initial condition $\Lambda_{0, \mathrm{free}} = \mathcal{G}$ originating from the definition of $\Lambda_{t, \mathrm{free}}$ in Eq.~\eqref{eq:Lambda-t-dec}.
\end{proof}

\section{Derivation of $\mathcal{C}_M(\Lambda)$ and $\Gamma_M(t)$ for the donor-acceptor model\label{app:DAM}}

In this section we provide the detailed derivation of the resource impact functional $\mathcal{C}_M(\Lambda)$, the instantaneous rate $\Gamma_M(t)$, and the associated variation and time bounds for the donor--acceptor model.

We recall the three-step process $\Lambda = \Lambda_{\mathrm{AD}}\circ\Lambda_{\mathrm{PD}}\circ\Lambda_\theta$ from Eq.~\eqref{eq:Lambda-conc}, where $\Lambda_\theta$ is the unitary channel generated by the excitonic Hamiltonian in Eq.~\eqref{eq:Hex} and describes coherent mixing between the singly excited donor and acceptor states, $\Lambda_{\mathrm{PD}}$ is a phase-damping channel acting on the excited-state manifold, and $\Lambda_{\mathrm{AD}}$ describes spontaneous emission from $\ket{D}$ and $\ket{A}$ to the ground state $\ket{g}$ via amplitude damping. Each map is CPTP, and so is their concatenation.

Let $K_i^{\mathrm{PD}}$ and $K_k^{\mathrm{AD}}$ denote Kraus operators of $\Lambda_{\mathrm{PD}}$ and $\Lambda_{\mathrm{AD}}$, with $i\in\{0,1\}$ and $k\in\{0,1,2\}$. A Kraus representation of $\Lambda_{\mathrm{AD}}\circ\Lambda_{\mathrm{PD}}$ is then given by $K_{ik} := K_i^{\mathrm{AD}}K_k^{\mathrm{PD}}$, and one convenient choice (in the basis $\{\ket{g}, \ket{D}, \ket{A}\}$) is given by
\begin{align}
K_{00} &= \ket{g}\!\bra{g} + \sqrt{\frac{(1+\eta)(1-p_D)}{2}}\ket{D}\!\bra{D}\\
&\quad+  \sqrt{\frac{(1+\eta)(1-p_A)}{2}}\ket{A}\!\bra{A}  \,,\nonumber\\
K_{10} &=  \sqrt{\frac{(1-\eta)(1-p_D)}{2}}\ket{D}\!\bra{D}\nonumber\\
&\quad  - \sqrt{\frac{(1-\eta)(1-p_A)}{2}}\ket{A}\!\bra{A}    \,,\nonumber\\
K_{01} &=  \sqrt{\frac{p_D(1+\eta)}{2}} \ket{g}\!\bra{D}\,,\,\,\,\,\,\,
K_{02} =  \sqrt{\frac{p_A(1+\eta)}{2}} \ket{g}\!\bra{A} \,,\nonumber\\
K_{11} &=   \sqrt{\frac{p_D(1-\eta)}{2}} \ket{g}\!\bra{D}\,,\,\,\,\,\,
K_{12} =  - \sqrt{\frac{p_A(1-\eta)}{2}} \ket{g}\!\bra{A}\,.\nonumber
\end{align}
Here $\eta$ is the phase damping parameter, and $p_D$, $p_A$ are the spontaneous emission probabilities from $\ket{D}$ and $\ket{A}$, respectively. One readily checks that $\sum_j K_j^\dagger K_j = \mathbbm{1}$. A Kraus representation for the full channel $\Lambda$ is obtained by composing these with the unitary $U_\theta$ in Eq.~\eqref{eq:Utheta}, that is, $\{K_j U_\theta\}_j$.

Since $\mathcal{G}$ in Eq.~\eqref{eq:G-DAM} is linear and Hermitian, we work in the following in the Heisenberg picture with $\mathcal{C}_M(\Lambda)$ given by Eq.~\eqref{eq:capacity}. The adjoint of the quantum channel $\Lambda$ acts on $M$ as 
\begin{equation}
\Lambda^\dagger (M) = \Lambda_{\theta}^\dagger \circ \Lambda_{\text{PD}}^\dagger \circ \Lambda_{\text{AD}}^\dagger (M)\,,
\end{equation}
and for a POVM element $M$ of the form in Eq.~\eqref{eq:M-DAM} this yields the expression for $\mathcal{C}_M(\Lambda)$ shown in Eq.~\eqref{eq:CM-Ltheta-main}.

To compute the advantage rate $\Gamma_M(t)$ we propagate $\Lambda_t^\dagger(M)$ under the adjoint Lindblad generator $\mathcal{L}^\dagger$ associated with $\Lambda$. In the time-independent case, $\Lambda_t^\dagger = e^{t\mathcal{L}^\dagger}$ satisfies the Heisenberg equation
\begin{align}\label{eq:time-local-MA-DA}
\frac{\mathrm{d}}{\mathrm{d}t}\Lambda_t^\dagger(M) &= \mathcal{L}^\dagger(\Lambda_t^\dagger(M)) \\
&=i\left[H_{\text{ex}}, \Lambda_t^\dagger(M) \right]+\sum_{k=1}^4\Big( L_k^\dagger\Lambda_t^\dagger(M) L_k \nonumber\\
&\quad - \frac{1}{2}\Big\{L_k^\dagger L_k, \Lambda_t^\dagger(M) \Big\} \Big)\,,
\end{align}
where the jump operators are
\begin{align}\label{eq:Lk-DA-app}
L_1 &= \sqrt{\gamma_{\varphi}}\ket{D}\!\bra{D}\,,\quad L_2 = \sqrt{\gamma_{\varphi}}\ket{A}\!\bra{A}\,, \nonumber\\
L_3 &= \sqrt{\gamma_D}\ket{g}\!\bra{D}\,,\quad L_4 = \sqrt{\gamma_A}\ket{g}\!\bra{A}\,.
\end{align}
The dephasing rate $\gamma_\varphi$ and the relaxation rates $\gamma_D,\gamma_A$ are related to the finite-time parameters by $\eta(\Delta t) = e^{-\gamma_\varphi\Delta t}$ and $p_i(\Delta t) = 1 - e^{-\gamma_i\Delta t}$, $i\in\{D,A\}$.

For an observable $M$ of the form in Eq.~\eqref{eq:M-DAM}, the dynamics \eqref{eq:time-local-MA-DA} does not generate coherences between $\ket{g}$ and $\{\ket{D}, \ket{A}\}$, so $\Lambda_t^\dagger(M)$ preserves the block-diagonal form of $M$ according to
\begin{align}
\Lambda_t^\dagger(M) &= x_g(t)\ket{g}\!\bra{g} + x_D(t)\ket{D}\!\bra{D} + x_A(t)\ket{A}\!\bra{A}\nonumber\\
&\quad + (y(t)\ket{D}\!\bra{A}+ \text{h.c.}),,
\end{align}
with $x_g(t),x_D(t),x_A(t)\in\RR$ and $y(t)\in\CC$. Inserting this ansatz into Eq.~\eqref{eq:time-local-MA-DA} yields the following set of coupled linear differential equations
\begin{align}\label{eq:DA-CDE}
\frac{\mathrm{d}}{\mathrm{d}t}x_D(t)& = 2J\Im[y(t)] + \gamma_D\left[x_g(t)-x_D(t)\right]\nonumber   \,,\\
\frac{\mathrm{d}}{\mathrm{d}t}x_A(t)& = -2J\Im[y(t)] + \gamma_A\left[x_g(t)-x_A(t)\right] \nonumber  \,,\\
\frac{\mathrm{d}}{\mathrm{d}t}y(t)& = i\left[\Delta y(t) + J(x_A(t) - x_D(t))\right]-\xi y(t)\,,
\end{align}
where $\xi := \gamma_\varphi + (\gamma_D+\gamma_A)/2$. The initial conditions are $x_g(0)=\mu_g$, $x_D(0)=\mu_D$, $x_A(0)=\mu_A$, and $y(0)=\nu$, where $\mu_g,\mu_D,\mu_A,\nu$ are the matrix elements of $M$.

Writing $y(t) = u(t)+iv(t)$ and collecting the variables into
\begin{equation}
\bd z(t) := (u(t),v(t),x_D(t),x_A(t))^{\mathrm T}
\end{equation}
the system in \eqref{eq:DA-CDE} takes the form
\begin{equation}
\frac{\mathrm{d}}{\mathrm{d}t} \bd z(t)= A\bd z(t) + \bd b
\end{equation}
with
\begin{align}\label{eq:parameters-DAM}
\bd b &:= (0,0,\gamma_D\mu_g,\gamma_A\mu_g)^{\mathrm T},\\
A &:= \begin{pmatrix}
-\xi & -\Delta & 0 & 0 \\
\Delta & -\xi & -J & J \\
0 & 2J & -\gamma_D & 0 \\
0 & -2J & 0 & -\gamma_A
\end{pmatrix}\,.
\end{align}
The inhomogeneous term can be removed by introducing shifted variables
\begin{equation}
\widetilde{x}_D := x_D- \mu_g\,,\quad
\widetilde{x}_A := x_A- \mu_g\,,
\end{equation}
which leaves $A$ unchanged. In terms of $\widetilde{\bd z} := (u,v,\widetilde{x}_D,\widetilde{x}_A)^{\mathrm T}$, this leads to
$\mathrm{d}\bd z(t)/\mathrm{d}t= A\widetilde{\bd z}(t)$ with initial condition
\begin{equation}\label{eq:z0-init}
\widetilde{\bd z}(0) = \left(\Re(\nu),\Im(\nu),\mu_D-\mu_g,\mu_A-\mu_g\right)^{\mathrm T}
\end{equation}
and, hence, $\widetilde{\bd z}(t) = e^{At}\widetilde{\bd z}(0)$. 
Evaluating $\widetilde{\bd z}(t)$ analytically for general parameters $(\gamma_\varphi,\gamma_D,\gamma_A,\Delta,J)$ is cumbersome. In the following, we therefore specialise to the cases (i) $\gamma_\varphi=0$ with $\gamma_D=\gamma_A$, and (ii) $\Delta=0$ with $\gamma_D=\gamma_A$, for which closed-form expressions and explicit time bounds can be obtained.

\subsection{Zero dephasing and symmetric lifetimes}\label{sec:zero-deph-DA}

For $\gamma_\varphi = 0$ and symmetric lifetimes $\gamma_D=\gamma_A=\gamma$ we have $\xi=\gamma$, and the eigenvalues of the matrix $A$ in Eq.~\eqref{eq:parameters-DAM} are
\begin{equation}
\lambda_1= \lambda_2 = -\gamma,\quad \lambda_{3,4} = -\gamma \pm i\Omega\,,
\end{equation}
where $\Omega = \sqrt{4J^2+\Delta^2}$ is the generalised Rabi frequency. It is convenient to introduce the population difference
\begin{equation}
s(t):= x_A(t)-x_D(t)
\end{equation}
and the total excited-state population
\begin{equation}
N(t) := x_A(t)+x_D(t)\,.
\end{equation}
With the initial conditions in Eq.~\eqref{eq:z0-init}, one obtains
\begin{align}\label{eq:solution-DA-nodeph}
u(t) &= \mathrm{e}^{-\gamma t}\bigg[\Re(\nu)-\frac{\Delta\Im(\nu)}{\Omega}\sin(\Omega t)\nonumber\\
&\quad  +\frac{\Delta[\Delta\Re(\nu)+J(\mu_A-\mu_D)]}{\Omega^2}(\cos(\Omega t)-1)\bigg]\,,\nonumber\\
v(t) &= \mathrm{e}^{-\gamma t}\bigg[\Im(\nu)\cos(\Omega t)\nonumber\\
&\quad +\frac{\Delta\Re(\nu)+J(\mu_A-\mu_D)}{\Omega}\sin(\Omega t)\bigg]\,,\nonumber\\
s(t) &=  \mathrm{e}^{-\gamma t}\bigg[ \mu_A-\mu_D -\frac{4J\Im(\nu)}{\Omega}\sin(\Omega t)\nonumber\\
&\quad  + \frac{4 J[\Delta\Re(\nu)+J(\mu_A-\mu_D)]}{\Omega^2}(\cos(\Omega t)-1)\bigg]\,,\nonumber\\
N(t) &= 2\mu_g +(\mu_A+\mu_D-2\mu_g)\mathrm{e}^{-\gamma t}\,.
\end{align}
For $\mathcal{G}$ the dephasing map in the site basis $\{|g\rangle,|A\rangle,|D\rangle\}$, the task-response functional and the instantaneous rate take the simple form
\begin{equation}\label{eq:CM-GM-DM}
\mathcal{C}_M(\Lambda_t) = |y(t)|\,,\quad \Gamma_M(t) = \left|\frac{\mathrm{d}y(t)}{\mathrm{d}t}\right|\,,
\end{equation}
since $(\mathrm{id}-\mathcal{G})\Lambda_t^\dagger(M)$ is supported only on the $\ket{D}\!\bra{A}$/$\ket{A}\!\bra{D}$ block with eigenvalues $\pm|y(t)|$. Thus $\mathcal{C}_M(\Lambda_t)$ follows directly from $u(t),v(t)$ in Eq.~\eqref{eq:solution-DA-nodeph}, and $\Gamma_M(t)$ is obtained by differentiating $y(t)=u(t)+iv(t)$.

For the measurement operator $M=|A\rangle\!\langle A|$, this yields the instantaneous rate
\begin{widetext}
\begin{equation}\label{eq:GM-zerodeph}
\Gamma_{|A\rangle\!\langle A|}(t) = e^{-\gamma t}|J|\sqrt{\Delta^2\left(\frac{\gamma\bigl(\cos(\Omega t)-1\bigr)}{\Omega^2}+ \frac{\sin(\Omega t)}{\Omega}\right)^2+ \left(\cos(\Omega t) - \frac{\gamma}{\Omega}\sin(\Omega t)\right)^2}\,.
\end{equation}
\end{widetext}
We now apply Theorem~\ref{thm:variation-bound} to bound the change
\begin{align}
|\Delta\mathcal{C}_{|A\rangle\!\langle A|}|&\equiv \left|\mathcal{C}_{|A\rangle\!\langle A|}(\Lambda_{t_2}) - \mathcal{C}_{|A\rangle\!\langle A|}(\Lambda_{t_1})\right| \nonumber\\
&\leq \int_{t_1}^{t_2}\mathrm{d}s\,\Gamma_{|A\rangle\!\langle A|}(s)\,,
\end{align}
where the implicit time-dependence of $|\Delta\mathcal{C}_{|A\rangle\!\langle A|}|$ is omitted to simplify the notation. 
For generic $\Delta$, the integral in Eq.~\eqref{eq:GM-zerodeph} is best evaluated numerically and we illustrate both the actual change $|\Delta \mathcal{C}_{|A\rangle\!\langle A|}|$ evaluated from Eq.~\eqref{eq:CM-GM-DM} and the variation bound $\int_0^t\mathrm{d}s\,\Gamma_{|A\rangle\!\langle A|}(t)$ for $t_1=0$ and $t_2\geq t_1$ and a chosen set of parameters in Fig.~\ref{fig:DAM-bounds} in the main text.

We continue with a discussion of the uniform time and feasibility bounds from Corollary \ref{cor:time-bound}. 
For this, we recall that Corollary \ref{cor:time-bound} implies that for any $t_1\leq t_2$
\begin{equation}
\left|\mathcal{C}_M(\Lambda_{t_2}) - \mathcal{C}_M(\Lambda_{t_1})\right|\leq \Delta\tau c_{M,\mathcal{G}}\sup_{s\in[t_1,t_2]}\|\mathcal{L}_s\|_{1\to1}\,,
\end{equation}
where $\Delta\tau := t_2-t_1$, with $c_{M,\mathcal{G}} = \|M\|_\infty R_{\mathcal{G}}$ and $R_{\mathcal{G}}$ the resource radius defined in Eq.~\eqref{eq:res-radius}. In the donor-acceptor model considered in this section, the generator $\mathcal{L}_s\equiv\mathcal{L}$ is time-independent, such that 
\begin{equation}
L_{\max} := \sup_{s\in[t_1,t_2]}\|\mathcal{L}_s\|_{1\to1} = \|\mathcal{L}\|_{1\to1}\,,
\end{equation}
and the bound becomes
\begin{equation}\label{eq:DA-uniform-Lmax-generic}
\left|\mathcal{C}_M(\Lambda_{t_2}) - \mathcal{C}_M(\Lambda_{t_1})\right|\leq \Delta\tau c_{M,\mathcal{G}}\|\mathcal{L}\|_{1\to1}\,.
\end{equation}

While $\|\mathcal{L}\|_{1\to1}$ can in principle be computed exactly for fixed $(\Delta, J, \gamma)$ via semidefinite programming, there is no simple closed-form expression in terms of these parameters.
An analytic estimate of the induced trace norm $\|\mathcal{L}\|_{1\to1}$ is obtained by
splitting $\mathcal{L}$ into its Hamiltonian and dissipative parts (recall $L_1=L_2=0$ for $\gamma_{\varphi}=0$),
\begin{equation}\label{eq:L-DAM-app}
\mathcal{L}(\rho) = -i[H_{\mathrm{ex}},\rho] + \underbrace{\sum_{k=3,4}\left(L_k \rho L_k^\dagger - \frac{1}{2}\{L_k^\dagger L_k,\rho\}\right)}_{=:\mathcal{D}(\rho)}\,,
\end{equation}
with $H_{\mathrm{ex}}$ and $L_k$ given by Eqs.~\eqref{eq:Hex} and \eqref{eq:Lk-DA-app}.
Then, for any operator $X\in\mathcal{B}_{\mathrm{sa}}(\mathcal{H})$ we have
\begin{equation}
\|-i[H_{\mathrm{ex}},X]\|_1 \leq 2\|H_{\mathrm{ex}}\|_\infty \|X\|_1\,,
\end{equation}
where we used H\"older's inequality for Schatten $p$-norms, such that $\| -i[H_{\mathrm{ex}},\cdot]\|_{1\to1}\leq 2\|H_{\mathrm{ex}}\|_\infty$. Since the eigenvalues of $H_{\mathrm{ex}}$ on the $\{\ket{D},\ket{A}\}$ subspace are $\pm\Omega/2$ with $\Omega=\sqrt{4J^2+\Delta^2}$, we obtain
$\|H_{\mathrm{ex}}\|_\infty=\Omega/2$ and hence $\| -i[H_{\mathrm{ex}},\cdot]\|_{1\to1}\leq\Omega$.

To bound the second term in Eq.~\eqref{eq:L-DAM-app}, we observe that each jump operator $L_k$,
\begin{align}
\|L_k X L_k^\dagger\|_1 &\leq \|L_k\|_\infty^2\|X\|_1\,,\nonumber\\
\left\|\frac{1}{2}\{L_k^\dagger L_k,X\}\right\|_1 &\leq\|L_k^\dagger L_k\|_\infty\|X\|_1 = \|L_k\|_\infty^2\|X\|_1\,,
\end{align}
which implies that 
\begin{equation}
\left\|L_k X L_k^\dagger - \frac{1}{2}\{L_k^\dagger L_k,X\}\right\|_1
\le 2\|L_k\|_\infty^2\|X\|_1\,.
\end{equation}
and, thus, the dissipator $\mathcal{D}(\cdot)$ in Eq.~\eqref{eq:L-DAM-app} satisfies
\begin{align}
\left\|\mathcal{D}\right\|_{1\to1} &:= \left\|\sum_{k=3,4} L_k(\cdot)L_k^\dagger-\frac{1}{2}\{L_k^\dagger L_k,\cdot\}\right\|_{1\to1}\nonumber\\
&\leq 2\sum_{k=3,4}  \|L_k\|_\infty^2\,.
\end{align}
In the zero-dephasing, symmetric-lifetime case, we have $L_3=\sqrt{\gamma}\,|g\rangle\!\langle D|$ and $L_4=\sqrt{\gamma}\,|g\rangle\!\langle A|$, such that $\|L_3\|_\infty^2=\|L_4\|_\infty^2=\gamma$ and
\begin{equation}
\|\mathcal{D}\|_{1\to1} \leq 4\gamma.
\end{equation}
Combining these estimates via the triangle inequality eventually yields
\begin{equation}\label{eq:L1to1-DA}
\|\mathcal{L}\|_{1\to1}\leq \Omega + 4\gamma = \sqrt{4J^2 + \Delta^2} + 4\gamma.
\end{equation}

Inserting Eq.~\eqref{eq:L1to1-DA} into Eq.~\eqref{eq:DA-uniform-Lmax-generic} gives the explicit
uniform variation bound
\begin{equation}\label{eq:DA-uniform-Lmax-final}
\left|\mathcal{C}_M(\Lambda_{t_2}) - \mathcal{C}_M(\Lambda_{t_1})\right|\leq \Delta\tau  c_{M,\mathcal{G}}\left(\sqrt{4J^2 + \Delta^2} + 4\gamma\right)\,.
\end{equation}
For the donor–acceptor yield observable $M = |A\rangle\!\langle A|$, we have $\|M\|_\infty = 1$, so $c_{M,\mathcal{G}}=R_{\mathcal{G}}$, which is a fixed constant depending only on the dephasing map $\mathcal{G}$ in the site basis with $R_{\mathcal{G}}<2$. Here it is important to note that the bound in Eq.~\eqref{eq:DA-uniform-Lmax-final} will be in general an upper bound to the (numerically determined) true value of $\|\mathcal{L}\|_{1\to 1}$, which is not directly accessible in a closed analytic form for the chosen Lindblad generator describing the dynamics of our donor-acceptor dimer. 

On resonance, $\Delta=0$ and $\Omega=2|J|$ (see also Appendix \ref{sec:zero-det-DA} for a general discussion of zero detuning with non-zero dephasing), the expressions for the resource impact functional and the instantaneous rate simplify to
\begin{align}
\mathcal{C}_{|A\rangle\!\langle A|}(\Lambda_t) &= \tfrac12 e^{-\gamma t}\bigl|\sin(2Jt)\bigr|
\end{align}
and 
\begin{align}
\Gamma_{|A\rangle\!\langle A|}(t)&= e^{-\gamma t}|J|\left|\cos(2Jt) - \frac{\gamma}{2J}\sin(2Jt)\right|\nonumber\\
&= e^{-\gamma t}|J|\,R\,\left|\cos(2Jt+\phi)\right|\,,
\end{align}
where
\begin{align}
R := \sqrt{1+\frac{\gamma^2}{4J^2}}\,,\quad \phi := \arctan\left(\frac{\gamma}{2J}\right)\,.
\end{align}
Using $|\cos(\cdot)|\leq 1$, we obtain the uniform variation bound
\begin{equation}\label{eq:DA-ex-vb}
\left|\Delta\mathcal{C}_{|A\rangle\!\langle A|}\right|\leq  \frac{|J|}{\gamma}\sqrt{1+\frac{\gamma^2}{4J^2}} \left(e^{-\gamma t_1}-e^{-\gamma t_2}\right)\,.
\end{equation}
For a fixed initial time $t_1$, Eq.~\eqref{eq:DA-ex-vb} yields a lower bound on the minimal time $\Delta\tau^* = t_2-t_1$ (recall Eq.~\eqref{eq:min-time}) required to achieve a target change $|\Delta\mathcal{C}_{\ket{A}\!\bra{A}}^*|$:
\begin{equation}\label{eq:Dtau-DA}
\Delta\tau^* \geq \frac{1}{\gamma} \ln\left(1 - \frac{\gamma e^{\gamma t_1}\left|\Delta\mathcal{C}_{|A\rangle\!\langle A|}^*\right|}{|J|\sqrt{1+\gamma^2/(4J^2)}}\right)^{-1}.
\end{equation}
Feasibility (i.e.,the existence of a finite $\Delta\tau>0$) requires the argument of the logarithm to be at least one, which is equivalent to
\begin{equation}
0 \leq \frac{\gamma e^{\gamma t_1}\left|\Delta\mathcal{C}_{|A\rangle\!\langle A|}^*\right|}{|J|\sqrt{1+\gamma^2/(4J^2)}} < 1,
\end{equation}
or
\begin{equation}\label{eq:feasibility-bound-Zdep}
\left|\Delta\mathcal{C}_{|A\rangle\!\langle A|}\right| < \frac{|J|}{\gamma}e^{-\gamma t_1}\sqrt{1+\frac{\gamma^2}{4J^2}}\quad \text{(feasibility bound)}.
\end{equation}
Thus the maximal achievable change in the task capacity over any interval starting at $t_1$ is directly bounded in terms of the coupling $J$ and decay rate $\gamma$. If the desired $|\Delta\mathcal{C}_{|A\rangle\!\langle A|}|$ exceeds this bound in Eq.~\eqref{eq:feasibility-bound-Zdep}, it cannot be achieved within this model: no amount of coherence generated by $\Lambda_\theta$ will produce such a change in finite time for the given $J$ and $\gamma$.

Finally, we return to the looser uniform bound arising bounding $\|\mathcal{L}_s\|_{1\to 1}\leq L_{\max}$ in Eq.~\eqref{eq:DA-uniform-Lmax-final}. Specialising to resonance, $\Delta=0$ and, thus, $\Omega=2|J|$, the bound in Eq.~\eqref{eq:DA-uniform-Lmax-final} becomes
\begin{equation}
\left|\Delta \mathcal{C}_{|A\rangle\langle A|}\right|\leq \Delta\tau R_{\mathcal{G}}(2|J| + 4\gamma)\,.
\end{equation}
As expected, this $L_{\max}$-based uniform bound is looser than the resonant bound in Eq.~\eqref{eq:DA-ex-vb}, which exploits the explicit form of $\Gamma_{|A\rangle\langle A|}(t)$ and its exponential decay. In particular, the right-hand side of Eq.~\eqref{eq:DA-ex-vb} saturates at a finite value as $t_2\to\infty$, whereas Eq.~\eqref{eq:DA-uniform-Lmax-final} grows linearly with $\Delta\tau$.

\subsection{Zero detuning and symmetric lifetimes}\label{sec:zero-det-DA}

We now consider the case of zero detuning, $\Delta = 0$, and symmetric lifetimes $\gamma_A=\gamma_D=\gamma$, while the pure dephasing rate $\gamma_\varphi$ remains arbitrary. For $\Delta=0$, the coupled linear differential equations in Eq.~\eqref{eq:DA-CDE} simplify to
\begin{align}
\frac{\mathrm{d}}{\mathrm{d}t}u(t) &= -\xi  u(t)\,,\quad\frac{\mathrm{d}}{\mathrm{d}t}v(t) =-\xi v(t)+J s(t)\,,\nonumber\\
\frac{\mathrm{d}}{\mathrm{d}t}s(t)& =  -4 Jv(t)-\gamma s(t)\,,\quad \frac{\mathrm{d}}{\mathrm{d}t}N(t):= -\gamma[N(t) -2\mu_g]
\end{align}
with
\begin{align}
\xi &:= \gamma_\varphi + \gamma\,.
\end{align}
The equations for $u$ and $N$ integrate immediately to
\begin{align}
u(t) &= \Re(\nu)\mathrm{e}^{-\xi t}\,,\\
N(t) &= (\mu_A+\mu_D -2\mu_g)\mathrm{e}^{-\gamma t}+ 2\mu_g\,.
\end{align}

The remaining coupled equations for $v(t)$ and $s(t)$ can be rearranged into a single damped harmonic oscillator equation. Then, eliminating $s(t)$ yields
\begin{equation}
\frac{\mathrm{d}^2}{\mathrm{d}t^2}v(t)+ (\gamma +\xi) \frac{\mathrm{d}}{\mathrm{d}t} v(t +\left(4J^2+\gamma\xi\right) v(t) =0\,,
\end{equation}
with characteristic roots
\begin{equation}\label{eq:lpm-DMA}
\lambda_{\pm}
= -\frac{1}{2}\left(\gamma+\xi\pm \sqrt{\gamma_\varphi^2-16 J^2}\right).
\end{equation}
In the following, we distinguish the underdamped regime $\gamma_\varphi<4|J|$, the overdamped regime
$\gamma_\varphi>4|J|$, and the critical case $\gamma_\varphi=4|J|$.

In the underdamped case, $\gamma_\varphi<4|J|$, the roots are complex, $\lambda_\pm=-\zeta\pm i\omega$, with
\begin{equation}
\zeta := \frac{\gamma+\xi}{2}\,,\quad \omega := \frac{1}{2}\sqrt{16J^2-\gamma_\varphi^2}\,.
\end{equation}
Imposing the initial conditions $v(0)=\Im(\nu)$ and $s(0)=\mu_A-\mu_D$ leads to
\begin{widetext}
\begin{align}\label{eq:vt-ZD}
v(t) &= \mathrm{e}^{-\zeta t}\left[
\Im(\nu)\cos(\omega t)
+\frac{2J(\mu_A-\mu_D)-\gamma_\varphi \Im(\nu)}{2\omega}\sin(\omega t)
\right],\\[1ex]
\label{eq:st-ZD}
s(t) &= \mathrm{e}^{-\zeta t}\left[
(\mu_A-\mu_D)\cos(\omega t)
+\frac{\gamma_\varphi (\mu_A-\mu_D)-8 J \Im(\nu)}{2\omega}\sin(\omega t)
\right].
\end{align}
In the overdamped case, $\gamma_\varphi>4|J|$, the roots are real and negative, $\lambda_\pm=-\zeta\pm\kappa$, with
\begin{equation}
\kappa := \frac{1}{2}\sqrt{\gamma_\varphi^2-16J^2}>0\,,
\end{equation}
and the solutions become
\begin{align}
v(t) &= \mathrm{e}^{-\zeta t}\left[\Im(\nu)\cosh(\kappa t)
+\frac{2J(\mu_A-\mu_D)-\gamma_\varphi \Im(\nu)}{2\kappa}\sinh(\kappa t)
\right]\,,\label{eq:v-overdamped}\\
s(t) &= \mathrm{e}^{-\zeta t}\left[(\mu_A-\mu_D)\cosh(\kappa t) +\frac{\gamma_\varphi (\mu_A-\mu_D)-8 J \Im(\nu)}{2\kappa}\sinh(\kappa t) \right]\,.\label{eq:s-overdamped}
\end{align}
\end{widetext}
The critical case $\gamma_\varphi=4|J|$ is obtained as the limit $\omega\to 0$ (or equivalently $\kappa\to 0$) and yields polynomial prefactors multiplying $\mathrm{e}^{-(\gamma+2|J|)t}$.
In all three regimes the task advantage norm and instantaneous advantage rate follow as
\begin{align}
\mathcal{C}_M(\Lambda_t) &= \sqrt{u^2(t)+v^2(t)}\,,\\
\Gamma_M(t) &=  \sqrt{\xi^2 u^2(t) + [Js(t)-\xi v(t)]^2}\,,\label{eq:GM-det-zero}
\end{align}
in agreement with Eq.~\eqref{eq:CM-GM-DM}.

To obtain analytic variation and time bounds, we first use
\begin{align}
\Gamma_M(t) &= \sqrt{\xi^2 u^2(t) + [Js(t)-\xi v(t)]^2}\nonumber\\
&\leq |\xi u(t)| + |Js(t)-\xi v(t)|
\end{align}
and then bound the two contributions separately. The first term gives
\begin{equation}
\int_{t_1}^{t_2}\mathrm{d}s\,|\xi u(s)| = |\Re(\nu)|\left(\mathrm{e}^{-\xi t_1}-\mathrm{e}^{-\xi t_2}\right)\,,
\end{equation}
while the second term leads to exponentially decaying envelopes in the underdamped and overdamped regimes. In the underdamped case, Eqs.~\eqref{eq:vt-ZD}–\eqref{eq:st-ZD} imply
\begin{equation}
Js(t)-\xi v(t) = \mathrm{e}^{-\zeta t}\left[a_u\cos(\omega t) + b_u\sin(\omega t)\right]\,,
\end{equation}
with
\begin{align}
a_u &:= J(\mu_A-\mu_D) - \xi\,\Im(\nu)\,,\\
b_u &:= \frac{\Im(\nu)(\xi\gamma_\varphi - 8J^2)- J(\mu_A-\mu_D)(\gamma+\xi)}{2\omega}\,,
\end{align}
such that
\begin{equation}
|Js(t)-\xi v(t)| \le \mathrm{e}^{-\zeta t}R_u\,,\quad
R_u := \sqrt{a_u^2+b_u^2}\,.
\end{equation}
In the overdamped case one similarly finds
\begin{equation}
Js(t)-\xi v(t) = \mathrm{e}^{-\zeta t}\left[a_u\cosh(\kappa t) + b_o\sinh(\kappa t)\right]\,,
\end{equation}
with 
\begin{equation}
b_o := \frac{\Im(\nu)(\xi\gamma_\varphi - 8J^2)  - J(\mu_A-\mu_D)(2\gamma+\gamma_\varphi)}{2\kappa}\,.
\end{equation}
Since $k, t\geq 0$, we can use in the overdamped case $\sinh(x)^2+ \cosh(x)^2 = \cosh(2x)\leq \mathrm{e}^{2x}$ for $x\geq 0$ and apply Cauchy-Schwarz inequality to obtain
\begin{align}
|a\cosh(x) + b\sinh(x)| &\leq \sqrt{a^2+b^2}\sqrt{\cosh(2x)}\nonumber\\
&\leq \mathrm{e}^{x} \sqrt{a^2+b^2}\,,
\end{align}
which leads to
\begin{equation}
|Js(t)-\xi v(t)| \le \mathrm{e}^{(\kappa-\zeta)t}R_o,\qquad
R_o := \sqrt{a_u^2 + b_o^2}.
\end{equation}
Below, we skip the critical case as it follows again from the other two cases by taking a suitable limit $\omega\to 0/\kappa\to 0$. 
We then eventually obtain for the variation bound
\begin{widetext}
\begin{align}\label{eq:DC-Zdet}
|\Delta\mathcal{C}_{M}| \leq \int_{t_1}^{t_2}\mathrm{d}s\,\Gamma_M(s)\leq \Re(\nu)\left(\mathrm{e}^{-\xi t_1} -\mathrm{e}^{-\xi t_2}\right)+  \begin{cases}
\frac{R_u}{\zeta} \left(  \mathrm{e}^{-\zeta t_1}- \mathrm{e}^{-\zeta t_2} \right) \quad &\text{ if }\gamma_\varphi < 4 |J|    \\
\frac{R_o}{\zeta-\kappa}\left(\mathrm{e}^{(\kappa-\zeta)t_1} - \mathrm{e}^{(\kappa-\zeta)t_2}\right) \quad &\text{ if }\gamma_\varphi > 4 |J|   
\end{cases}\,.
\end{align}
The variation bound on $|\Delta\mathcal{C}_M|$ \eqref{eq:DC-Zdet} is valid for any $M$ of the form in Eq.~\eqref{eq:M-DAM}. 

To obtain uniform time bounds analytically, we consider in the following again the special case $M=\ket{A}\!\bra{A}$ as then the first term on the right-hand side in Eq.~\eqref{eq:DC-Zdet} vanishes and we can solve for an anticipated $|\Delta\mathcal{C}_{|A\rangle\!\langle A|}^*|$ for the minimal time $\Delta \tau^* \geq 0$:
\begin{align}
\Delta\tau^* \geq \begin{cases}
\frac{1}{\zeta}\ln\left(1 - \frac{\zeta\left|\Delta\mathcal{C}_{|A\rangle\!\langle A|}^*\right| \mathrm{e}^{\zeta t_1}}{R_u}\right)^{-1}\,\,&\text{ if }\gamma_\varphi < 4 |J|    \\
\frac{1}{\zeta - \kappa}\ln\left(1 - \frac{(\zeta-\kappa)\left|\Delta\mathcal{C}_{|A\rangle\!\langle A|}^*\right| \mathrm{e}^{(\zeta-\kappa)t_1}}{R_o}\right)^{-1}\,\, &\text{ if }\gamma_\varphi > 4 |J|   
\end{cases}\,.
\end{align}
\end{widetext} 

In analogy to Sec.~\ref{sec:zero-deph-DA}, the time bound eventually yields a hard constraint on the maximal possible magnitude of the yield change in terms of the respective system parameters, namely
\begin{align}\label{eq:feasibility-bound-Zdet}
|\Delta\mathcal{C}_{|A\rangle\!\langle A|}| <\begin{cases}
\frac{R_u}{\zeta}\mathrm{e}^{-\zeta t_1} \, &\text{ if }\gamma_\varphi < 4 |J|    \\
\frac{R_o}{\zeta-\kappa}\mathrm{e}^{-(\zeta-\kappa)t_1}\, &\text{ if }\gamma_\varphi > 4 |J|   
\end{cases}\,,
\end{align}
which constitutes a strict feasibility bound for finite times $t_2\geq 0$.

\bibliography{Refs}

\end{document}